\documentclass[journal]{IEEEtran}

\usepackage{cite}
\usepackage{amsmath,amssymb,amsfonts}
\usepackage{amsthm}
\usepackage{algorithmic}
\usepackage{algorithm}
\usepackage{graphicx}
\usepackage{textcomp}
\usepackage{xcolor}
\usepackage{booktabs}
\usepackage{subcaption}
\usepackage{url}
\usepackage{bm}
\usepackage{tikz}
\usetikzlibrary{arrows.meta,positioning,calc,shapes.geometric,decorations.pathreplacing}

\graphicspath{{matlab/}{./matlab/}{./}{figs/}}

\DeclareMathOperator*{\argmin}{arg\,min}
\DeclareMathOperator{\diag}{diag}

\newcommand{\bA}{\mathbf{A}}

\newcommand{\bH}{\mathbf{H}}
\newcommand{\bM}{\mathbf{M}}
\newcommand{\bU}{\mathbf{U}}
\newcommand{\bx}{\mathbf{x}}
\newcommand{\by}{\mathbf{y}}
\newcommand{\bz}{\mathbf{z}}
\newcommand{\bw}{\mathbf{w}}
\newcommand{\bj}{\mathbf{j}}
\newcommand{\bb}{\mathbf{b}}
\newcommand{\bh}{\mathbf{h}}
\newcommand{\bI}{\mathbf{I}}

\newcommand{\calA}{\mathcal{A}}
\newcommand{\calJ}{\mathcal{J}}
\newcommand{\calK}{\mathcal{K}}
\newcommand{\calN}{\mathcal{N}}
\newcommand{\calS}{\mathcal{S}}
\newcommand{\calU}{\mathcal{U}}
\newcommand{\calCN}{\mathcal{CN}}
\newcommand{\bbE}{\mathbb{E}}
\newcommand{\bbR}{\mathbb{R}}
\newcommand{\bbC}{\mathbb{C}}

\newtheorem{theorem}{Theorem}
\newtheorem{proposition}{Proposition}
\newtheorem{corollary}{Corollary}

\newtheorem{remark}{Remark}
\newtheorem{definition}{Definition}

\begin{document}

\title{Jamming-Resilient Sparse Delay-Doppler NOMA: \\
Unitary Precoding, Randomized Active Sets, and Superincreasing Power Allocation}

\author{Michel~Kulhandjian,~\IEEEmembership{Senior~Member,~IEEE,}
        Hovannes~Kulhandjian,~\IEEEmembership{Senior~Member,~IEEE,}
        and~Theodoros~A.~Tsiftsis,~\IEEEmembership{Senior~Member,~IEEE}%
\thanks{M.~Kulhandjian is with the Department of Electrical and Computer Engineering, Rice University, Houston, TX 77005 USA (e-mail: \texttt{mkulhandjian@outlook.com}).}%
\thanks{H.~Kulhandjian is with the Department of Electrical and Computer Engineering, California State University, Fresno, CA 93740 USA.}%
\thanks{T.~A.~Tsiftsis is with the Department of Informatics and Telecommunications, University of Thessaly, Lamia 35100, Greece (e-mail: \texttt{tsiftsis@uth.gr}).}}

\markboth{IEEE Transactions on Communications, Vol.~XX, No.~X, Month Year}%
{Kulhandjian~\MakeLowercase{\textit{et~al.}}: Jamming-Resilient Sparse Delay-Doppler NOMA}

\maketitle

\begin{abstract}
We propose a sparse delay-Doppler NOMA scheme that is resilient to
intentional jamming. The transmitter places user data on a small
random subset of delay-Doppler bins, spreads the result over all
bins through a unitary precoder, and re-draws the active subset on
each frame from a pseudo-random seed shared with the receiver. At
the receiver, jammed bins are detected and discarded before a
least-squares step recovers the sparse signal; per-bin SIC then
returns the user bits. Hadamard, DFT, and Haar-random precoders all
yield essentially the same BER under this scheme, because a
Marchenko--Pastur concentration argument controls the conditioning
of any random unitary submatrix. The closed-form BER expression
derived from this conditioning argument has no jammer-induced
floor, in contrast to the well-known partial-band error floor of
conventional OTFS-NOMA. The same conditioning argument also implies
that compromising the shared seed does not break the system: when
the jammer knows the active set, random unitary submatrices remain
well-conditioned with high probability, so the BER stays within the
unjammed-case envelope. To make SIC viable at more than two users
we use a superincreasing power allocation (a Merkle--Hellman
knapsack construction) and prove that the resulting low-complexity
SIC matches maximum-likelihood detection exactly on the composite
constellation, eliminating the usual SIC-propagation ceiling. For
more than four users we partition them into pairs and assign each
pair its own disjoint bin subset; this OMA-friendly NOMA rule
reaches floor BER at eight users by SNR around 20~dB. We further
extend the framework to Rician fading, showing that the
jammer-independence property persists for arbitrary Rician
$K$-factor. Monte Carlo simulations track the analytical
predictions within 3~dB and indicate at least a 40~dB BER-ratio
improvement against pattern-aware jammers, with roughly 24~dB of
cumulative gain over conventional OTFS-NOMA under oracle jamming.
\end{abstract}

\begin{IEEEkeywords}
Delay-Doppler, Hadamard transform, jamming resilience, non-orthogonal
multiple access (NOMA), OMA-friendly NOMA, orthogonal time frequency
space (OTFS), Rician fading, sparse spreading, superincreasing power
allocation.
\end{IEEEkeywords}

\section{Introduction}
\label{sec:introduction}

\subsection{Motivation}
High-mobility wireless applications---high-speed rail, unmanned aerial
vehicles, vehicle-to-everything (V2X), and low-Earth-orbit (LEO) satellite
links---demand reliable connectivity under simultaneously time-varying and
frequency-selective channel conditions~\cite{wang2020,zhang2019}. Conventional
orthogonal frequency-division multiplexing (OFDM) suffers severe intercarrier
interference in such doubly-dispersive environments, motivating the
introduction of orthogonal time frequency space (OTFS)
modulation~\cite{hadani2017otfs,raviteja2018interference}, which places information
symbols on the two-dimensional delay-Doppler (DD) grid and converts the
fast-varying time-frequency channel into a quasi-stationary DD-domain channel.
To meet the growing demand for massive connectivity, OTFS has been combined
with non-orthogonal multiple access (NOMA), giving rise to OTFS-NOMA in both
power-domain~\cite{ding2019otfsnoma,ding2020robust} and code-domain (e.g.,
sparse code multiple access (SCMA)) variants~\cite{deka2021otfsscma,kulhandjian2024otfsnoma}.

Increasingly, however, these systems are deployed in adversarial environments
where intentional jamming is the dominant impairment. Tactical communications,
satellite uplinks under spoofing attacks, V2X under coexistence interference,
and contested-spectrum operation all share the common threat model: an
adversary injects high-power interference into a subset of the time, frequency,
or delay-Doppler resources, with the goal of denying or degrading the legitimate
link~\cite{poisel2011}. While DD-domain modulation provides intrinsic
robustness against Doppler-induced impairments, it does not provide any
inherent protection against pattern-aware jammers~\cite{zhuo2026,deng2023otfsscma}.

This paper closes that gap with a five-ingredient stack tightly
co-designed for the jamming-resilient multi-user setting: (i)~\emph{sparse}
data placement on the DD grid, (ii)~\emph{unitary spreading} across all
$N_b$ bins (Hadamard transform (HT), discrete Fourier transform (DFT),
or random orthogonal matrix), (iii)~\emph{per-frame randomization} of
the active bin subset via a shared pseudo-random seed (the ``C5''
protocol), (iv)~\emph{superincreasing power allocation} (PA) that
makes the $O(K)$ successive interference cancellation (SIC) decoder
maximum-likelihood (ML)-optimal on the NOMA composite constellation,
and (v)~\emph{disjoint pairwise clustering} when more than four users
are present. The receiver
threshold-excises jammer-contaminated bins, recovers the sparse
signal by least squares, and decodes per cluster by SIC. The resulting
architecture has no jammer-induced error floor (in contrast to
conventional OTFS-NOMA's irreducible $\rho_J/2$ floor), defeats
pattern-aware adversaries even under seed compromise, and scales to
$K_{\rm tot}\le n_a/2$ users with floor bit-error rate (BER) at the
recommended signal-to-noise ratio (SNR) operating point.

\subsection{Prior Work on OTFS Anti-Jamming and OTFS-NOMA}

The single-user anti-jamming OTFS literature has focused primarily on
receiver-side suppression: frequency-domain excision using improved
forward-consecutive mean excision (FCME)
thresholding~\cite{li2025otfs_jamming},
energy-concentration with translation matrices~\cite{sciencedirect2024_nbi},
and iterative interference cancellation. Zhuo and Qiu~\cite{zhuo2026}
proposed H-OTFS, in which user data is placed sparsely in the DD domain,
spread by a Hadamard transform, and recovered via threshold excision and
alternating-direction-method-of-multipliers (ADMM)-based compressed
sensing; their scheme demonstrated substantial gains against narrowband
and partial-band jammers but is restricted to a single-user setting and
does not exploit power-domain NOMA. The concurrent work of Li
\emph{et al.}~\cite{li2025otfs_jamming} similarly addresses single-user
OTFS jamming via FCME excision but lacks the multi-user NOMA stack,
unitary precoding, and randomized active-set protocol that we develop.

In the multi-user OTFS-NOMA jamming literature, the closest existing work is
that of Deng, Ge, and Ding~\cite{deng2023otfsscma}, which proposes
\emph{resource hopping} for OTFS-SCMA: user groups are permuted across
fixed delay- or Doppler-axis partitions to mitigate narrowband
interference and periodic impulse noise. The related SCMA
resource-hopping scheme of Yang \emph{et al.}~\cite{lu2023interRBhopping}
hops at the resource-block (RB) level rather than per-bin and does not use OTFS or
unitary precoding. These works share the general spirit of randomization
but differ from ours across seven design axes (code-domain vs.\
power-domain NOMA, group-level vs.\ bin-level hopping, fixed-axis vs.\
arbitrary sparse placement, narrowband interference (NBI) / periodic impulse noise (PIN) vs.\ adversarial jammer models,
turbo vs.\ one-shot recovery, and simulation-only vs.\ closed-form
analysis); a full side-by-side comparison is deferred to
Table~\ref{tab:prior_art} in Section~\ref{sec:discussion}.

Beyond OTFS, classical NOMA jamming-resilience approaches have used
game-theoretic power allocation and user grouping~\cite{warwick2023_satnoma},
reactive-jammer bypass via NOMA-based transmission~\cite{arxiv2024_reactive},
and intelligent reflecting surfaces with friendly
jammers~\cite{najimi2025_irsnoma}. None of these addresses the doubly-dispersive
DD channel.

Hadamard-NOMA has been studied for benign channels: pre-modulation
Hadamard transform spreading has been examined for fading and
channel-state-information (CSI)
robustness~\cite{usman2018joint,baig2018papr,bouslam2024noma,arxiv2603_spectralht}.
Post-modulation Walsh-Hadamard precoding has been used for
peak-to-average power ratio (PAPR)
reduction~\cite{baig2018papr}. DFT-spread OTFS-NOMA has also been
investigated, but with a PAPR and integrated-positioning
motivation~\cite{dftspread_otfs_noma_2024} rather than jamming
resilience. None of these jointly address sparse-DD placement, jamming
resilience, or analytical jammer-aware design rules. The recent
benchmark study of NOMA-OFDM versus NOMA-OTFS~\cite{krishna2026noma_comparison}
characterizes benign-channel performance but does not address adversarial
jamming.

Beyond the OTFS literature, the conceptual lineage for randomized resource
allocation against pattern-aware adversaries traces to OFDM-domain
work~\cite{clancy2011ofdm,lapan2013pilot}, which established that
pilot-tone positions must be randomized to defeat pilot-aware jamming. Our
C5 protocol extends this principle in three substantive directions:
(i)~from pilot tones to data placement on the DD grid;
(ii)~from a single user to a power-domain multi-user NOMA setting; and
(iii)~from heuristic OFDM rules to a closed-form Marchenko--Pastur
conditioning framework that quantifies the defense-in-depth property even
under seed compromise. Adjacent recent work on Zak-OTFS multi-user
uplinks~\cite{thomas2025_zakotfs_uplink} and on
multiple-input multiple-output (MIMO)
OTFS-NOMA~\cite{otfs_noma_mimo_2024} addresses orthogonal aspects of
multi-user DD-domain communication but does not consider adversarial
threat models.

Table~\ref{tab:novelty_summary} maps each of our seven contributions to
the closest gap in the prior literature; the rightmost column previews
what each contribution adds and the section that develops it.

\begin{table*}[t]
\caption{Novelty of this paper relative to closest prior art.}
\label{tab:novelty_summary}
\centering
\small
\begin{tabular}{p{2.6cm}p{5.0cm}p{8.0cm}}
\toprule
\textbf{Contribution} & \textbf{Closest prior art} & \textbf{What this paper adds} \\
\midrule
C1. Sparse-DD NOMA, unitary precoding
& DFT-spread OTFS-NOMA~\cite{dftspread_otfs_noma_2024} (PAPR / positioning); H-OTFS~\cite{zhuo2026} (single-user)
& First \emph{jamming}-motivated sparse-DD architecture for \emph{multi-user} NOMA; transform-agnostic by M-P universality (\S\ref{sec:system_model}, \S\ref{sec:numerical}). \\
\midrule
C2. Excision-LS-SIC receiver
& Iterative turbo low-density parity-check (LDPC) decoding~\cite{deng2023otfsscma}; FCME excision~\cite{li2025otfs_jamming} (single-user)
& First one-shot LS-then-SIC pipeline for multi-user jamming; $O(N_b\log N_b)$ via FWHT (\S\ref{sec:system_model}). \\
\midrule
C3. Closed-form BER + M-P bound
& No NOMA application; classical FH/BFSK floor~\cite{fh_bfsk_pbnj_floor}
& First NOMA derivation of $\rho_J/2$ floor and Marchenko--Pastur conditioning bound; no error floor when $s<1-\rho_J$ (\S\ref{sec:analytical}, Thm.~\ref{thm:mp_bound}). \\
\midrule
C4. Operating-region design rule
& Donoho--Tanner thresholds in compressed sensing
& Three-threshold hierarchy $(\rho_J^{\rm cond},\rho_J^\star,\rho_J^{\star\star})$ specialized to NOMA-jamming, with closed-form $s_{\max}(\rho_J,X)$ (Cor.~\ref{cor:operating_region}). \\
\midrule
C5. Randomized active-set + defense in depth
& OFDM pilot-tone randomization~\cite{lapan2013pilot,clancy2011ofdm} (single-user, heuristic)
& First multi-user NOMA application; first analytical proof of defense-in-depth under seed compromise via M-P concentration (\S\ref{sec:c5_protocol}, Thm.~\ref{thm:defense_in_depth}). \\
\midrule
C6. Superincreasing PA + SIC=ML
& Merkle--Hellman knapsack~\cite{merkle1978} (cryptography only); faster-than-Nyquist (FTN) inter-symbol interference (ISI) separability~\cite{kulhandjian2022ftn}
& First NOMA-SIC application; first proof that $O(K)$ SIC achieves exact ML on BPSK NOMA constellations (Prop.~\ref{prop:sic_eq_ml}). \\
\midrule
C7. OMA-friendly cluster design
& Pairwise NOMA in benign cell-free~\cite{cluster_free_noma_2022,cellfree_noma_clustering_2025}
& First jamming-aware analysis: optimal $K_g=2$ disjoint clustering via Theorem~\ref{thm:Kg_optimal}; integration with C5 + C6 into a self-consistent recipe (\S\ref{sec:cluster_design}). \\
\bottomrule
\end{tabular}
\end{table*}

\subsection{Contributions}

This paper proposes a unified architecture for jamming-resilient
multi-user DD-domain communication. The seven contributions below
form a single integrated stack: each ingredient (sparse placement,
unitary spreading, randomized active set, superincreasing PA,
OMA-friendly clustering, joint LS-SIC receiver, and closed-form M-P
analysis) addresses a specific failure mode of conventional NOMA
under intentional jamming, and they compose to deliver $\ge 40$~dB
BER-ratio improvement against pattern-aware adversaries and floor
BER at $K_{\rm tot}=8$ users by SNR$=20$~dB. Specifically:

\begin{enumerate}
\item[\textbf{C1.}] \textbf{Sparse-DD NOMA with unitary precoding.}
We formulate a multi-user power-domain NOMA architecture in which user data
occupies a sparse subset $\calA$ of the $N_b = MN$ DD bins ($|\calA|=n_a \ll N_b$),
the sparse vector $\bx \in \bbR^{N_b}$ is spread by a unitary precoder
$\bU \in \bbC^{N_b \times N_b}$, and the receiver excises jammer-contaminated bins
and recovers $\bx_\calA$ via least squares followed by per-user SIC. The
framework subsumes Hadamard, DFT, and random unitary precoders as special
cases.

\item[\textbf{C2.}] \textbf{Joint excision-LS-SIC receiver.}
We design a low-complexity receiver consisting of three stages: (a)~threshold
detection and excision of jammer-contaminated bins, (b)~least-squares recovery
of the active vector $\bx_\calA$ from the kept unitary sub-system, and
(c)~per-active-bin successive interference cancellation (SIC). Total complexity is
$O(N_b\log N_b)$ for the unitary inversion (using FFT/FWHT) plus
$O(n_a K)$ for SIC.

\item[\textbf{C3.}] \textbf{Closed-form analytical BER under jamming with M-P conditioning.}
We derive the exact BER expression for conventional OTFS-NOMA (T-NOMA) under
partial-band jamming, exposing an unavoidable error floor of $\rho_J/2$ at
high SNR. We then derive the proposed scheme's BER using a
Marchenko--Pastur (M-P) conditioning bound, which has no floor and depends
jointly on the jammer fraction $\rho_J$ and sparsity ratio $n_a/N_b$ through
the factor $(\sqrt{1-\rho_J}-\sqrt{n_a/N_b})^{-2}$.

\item[\textbf{C4.}] \textbf{Sparsity-loading operating-region design rule.}
We characterize a three-threshold hierarchy
$\rho_J^{\rm cond} < \rho_J^\star < \rho_J^{\star\star}$, where
$\rho_J^{\rm cond}$ bounds noise inflation at a tolerable level,
$\rho_J^\star = 1-n_a/N_b$ is the rank threshold, and $\rho_J^{\star\star}$ is
the Donoho--Tanner compressed-sensing phase transition. We provide the
closed-form design rule $(n_a/N_b)_{\max}=(\sqrt{1-\rho_J}-10^{-X/20})^2$
for tolerable SNR penalty $X$~dB.

\item[\textbf{C5.}] \textbf{Randomized active-set protocol with defense in depth.}
We propose a randomized active-set protocol: per frame, transmitter and
legitimate receiver generate $\calA_t \subset [N_b]$ via a pseudo-random
sequence seeded by a shared secret. We prove that the protocol restores
floor BER under pattern-aware (oracle) jamming. Empirically the
BER-ratio improvement is $\ge 40$~dB (limited by the $3000$-frame
simulation floor at JSR$=10$~dB), with $\sim 84$~dB predicted by the
analytical floor of \eqref{eq:prop_ber}. Crucially, the protocol exhibits
\emph{defense in depth}: even if the seed is compromised and the jammer
becomes omniscient about the per-frame $\calA_t$, the recovery still
succeeds because random unitary sub-systems satisfy the M-P conditioning
bound with high probability. We further identify a \emph{Sylvester
replication trap} affecting algebraically structured active patterns
(clustered, bit-reversed), motivating the randomized-pattern recommendation
on benign-channel grounds alone.

\item[\textbf{C6.}] \textbf{Superincreasing power allocation for SIC at $K>2$.}
We provide a necessary-and-sufficient condition for noise-free SIC linear
separability: the power allocation must be
\emph{superincreasing}, $\sqrt{\alpha_k} > \sum_{j>k}\sqrt{\alpha_j}$ for
$k=1,\ldots,K-1$. We construct a margin-parameterized family of
superincreasing allocations and show via empirical sweep that
$\varepsilon^\star=0.5$ is the universal optimum across SNR$=25$--$35$~dB
for $K=4$ in our setup. As a corollary, we prove (via the Merkle--Hellman
knapsack property~\cite{merkle1978,kulhandjian2022ftn}) that the $O(K)$
SIC decoder achieves \emph{exact ML optimality} on superincreasing
constellations: composite ML offers zero gain, validating the
low-complexity receiver.

\item[\textbf{C7.}] \textbf{Jamming-aware cluster-design rule for
HT-OTFS-NOMA.}
For $K_{\rm tot}>4$, single-cluster superincreasing PA collapses
Bob's (the weakest NOMA user's) allocated power below the
SIC-decodability threshold. While pairwise ($K_g=2$)
NOMA clustering is itself a known structural
choice~\cite{cluster_free_noma_2022,cellfree_noma_clustering_2025} (typically
motivated by benign-channel SIC complexity), no prior work analyzes its
\emph{jamming-resilience optimality} or its conditioning trade-off
under the LS-excision receiver. We close this gap by (i)~deriving the
post-jammer effective-power and LS-conditioning relations
(Prop.~\ref{prop:effective_power}) that show why $K_g=2$ is uniquely
optimal in the OTFS-jamming context, (ii)~proving
(Thm.~\ref{thm:Kg_optimal}) that the disjoint-bin assignment dominates
the co-channel variant pointwise under the same constraints, and
(iii)~integrating the cluster design with C5 randomization and C6
superincreasing PA into a single self-consistent recipe. The recipe
reaches floor BER for $K_{\rm tot}=8$ by SNR$=20$~dB---a
$\sim\!200\times$ Bob-power improvement over the single-cluster
baseline.
\end{enumerate}

The paper is organized as follows.
Section~\ref{sec:system_model} introduces the system model.
Section~\ref{sec:super_pa} develops the superincreasing PA framework.
Section~\ref{sec:analytical} presents the analytical BER framework with
M-P conditioning.
Section~\ref{sec:c5_protocol} formalizes the randomized active-set protocol
and the Sylvester replication phenomenon.
Section~\ref{sec:cluster_design} introduces the cluster-design taxonomy
and the OMA-friendly NOMA recipe.
Section~\ref{sec:numerical} validates the analytical predictions against
extensive Monte Carlo simulations.
Section~\ref{sec:discussion} discusses extensions and limitations.
Section~\ref{sec:conclusion} concludes.

\textit{Notation:} Lowercase/uppercase bold letters denote
vectors/matrices. $\bbR$, $\bbC$ denote real and complex numbers;
$\bI_N$ is the $N\times N$ identity; $\calCN(\mu,\sigma^2)$ is the
complex Gaussian distribution; $Q(\cdot)$ is the Gaussian Q-function;
$\bbE[\cdot]$ is expectation; $|\calA|$ is the cardinality of set $\calA$;
$\bA^H$, $\bA^T$, $\bA^\dagger$ are conjugate transpose, transpose, and
Moore--Penrose pseudoinverse; $\sigma_{\min}(\bA)$ is the smallest singular
value of $\bA$; $\bA_{\calK,\calA}$ is the submatrix of $\bA$ formed by
rows in $\calK$ and columns in $\calA$.

\section{System Model}
\label{sec:system_model}

Fig.~\ref{fig:system_diagram} summarizes the end-to-end transmit-receive
chain of the proposed sparse-DD NOMA architecture, from per-user bits to
detected bits, with the four architectural ingredients (sparse placement,
unitary precoding, C5 randomization, excision-LS-SIC) explicitly shown.

\begin{figure*}[t]
\centering
\resizebox{0.96\textwidth}{!}{%
\begin{tikzpicture}[
  >={Stealth[length=2.5mm]},
  block/.style={rectangle,draw,minimum width=14mm,minimum height=8mm,
                align=center,font=\small},
  smallblock/.style={rectangle,draw,minimum width=10mm,minimum height=6mm,
                align=center,font=\footnotesize},
  oper/.style={circle,draw,minimum size=6mm,inner sep=1pt,font=\small},
  dot/.style={circle,fill,inner sep=0.6pt},
  node distance=4mm and 6mm,
  font=\footnotesize
]
\node[block] (b1)  {$b_1$};
\node[block,below=2mm of b1] (b2)  {$b_2$};
\node[font=\small,below=2mm of b2] (bdots) {$\vdots$};
\node[block,below=2mm of bdots] (bK)  {$b_K$};

\node[smallblock,right=of b1] (m1)  {BPSK};
\node[smallblock,right=of b2] (m2)  {BPSK};
\node[smallblock,right=of bK] (mK)  {BPSK};

\node[smallblock,right=of m1] (a1)  {$\sqrt{\alpha_1 P_s}$};
\node[smallblock,right=of m2] (a2)  {$\sqrt{\alpha_2 P_s}$};
\node[smallblock,right=of mK] (aK)  {$\sqrt{\alpha_K P_s}$};

\node[oper,right=8mm of a2]   (sum) {$+$};

\node[smallblock,right=10mm of sum,align=center]  (sparse)
  {Sparse\\placement\\on $\calA_t$};

\node[smallblock,right=of sparse,align=center]  (Umtx)
  {Unitary\\$\bU$\\(HT/DFT)};

\node[oper,right=of Umtx]  (txout) {$\by$};

\node[smallblock,right=of txout,align=center,fill=gray!10] (chan)
  {$\diag(\bh)$\\(Rayleigh)};

\node[oper,right=of chan] (plus) {$+$};

\node[smallblock,above=4mm of plus,fill=red!10,align=center] (jam)
  {jammer\\$\bj,\ \calJ_t$};
\node[smallblock,below=4mm of plus,fill=blue!10] (noise) {$\bw$};

\node[smallblock,right=of plus,align=center] (excise)
  {Threshold\\excise};

\node[smallblock,right=of excise,align=center] (ls)
  {LS\\$\hat{\bx}_\calA$};

\node[smallblock,right=of ls,align=center] (sic)
  {SIC\\decode};

\node[block,right=of sic]   (bhat1) {$\hat{b}_1$};
\node[block,below=2mm of bhat1] (bhat2) {$\hat{b}_2$};
\node[font=\small,below=2mm of bhat2] (bhatd) {$\vdots$};
\node[block,below=2mm of bhatd] (bhatK) {$\hat{b}_K$};

\node[smallblock,below=10mm of sparse,fill=green!10,align=center] (seed)
  {Shared seed $s_0$\\(C5 protocol)};

\draw[->] (b1) -- (m1);
\draw[->] (b2) -- (m2);
\draw[->] (bK) -- (mK);
\draw[->] (m1) -- (a1);
\draw[->] (m2) -- (a2);
\draw[->] (mK) -- (aK);
\draw[->] (a1.east) -- (sum);
\draw[->] (a2.east) -- (sum);
\draw[->] (aK.east) -- (sum);
\draw[->] (sum)    -- node[above,font=\scriptsize]{$\bx_\calA$}  (sparse);
\draw[->] (sparse) -- node[above,font=\scriptsize]{$\bx$}        (Umtx);
\draw[->] (Umtx)   -- (txout);
\draw[->] (txout)  -- (chan);
\draw[->] (chan)   -- (plus);
\draw[->] (jam)    -- (plus);
\draw[->] (noise)  -- (plus);

\draw[->,green!50!black,dashed] (seed) -- (sparse.south);
\draw[->,green!50!black,dashed] (seed.east) -| (excise.south);

\draw[->] (plus)   -- node[above,font=\scriptsize]{$\bz$} (excise);
\draw[->] (excise) -- node[above,font=\scriptsize]{$\bz_\calK$} (ls);
\draw[->] (ls)     -- node[above,font=\scriptsize]{$\hat{\bx}_\calA$} (sic);
\draw[->] (sic.east) -- (bhat1.west);
\draw[->] (sic.east) -- (bhat2.west);
\draw[->] (sic.east) -- (bhatK.west);

\draw[decorate,decoration={brace,amplitude=4pt}]
  ($(b1.north west)+(-2mm,2mm)$) --
  ($(aK.north east |- b1.north west)+(2mm,2mm)$)
  node[midway,above=4pt,font=\small\itshape]{NOMA superposition};
\draw[decorate,decoration={brace,amplitude=4pt}]
  ($(sparse.north west)+(-2mm,2mm)$) --
  ($(txout.north east)+(2mm,2mm)$)
  node[midway,above=4pt,font=\small\itshape]{Sparse-DD + unitary spread};
\draw[decorate,decoration={brace,amplitude=4pt}]
  ($(excise.north west)+(-2mm,2mm)$) --
  ($(sic.north east)+(2mm,2mm)$)
  node[midway,above=4pt,font=\small\itshape]{Excision-LS-SIC receiver};
\end{tikzpicture}%
}
\caption{End-to-end system architecture of the proposed sparse-DD
NOMA scheme. $K$ user bits $\{b_k\}$ are BPSK-modulated and scaled by
the superincreasing PA $\sqrt{\alpha_k P_s}$, summed to form the
NOMA-superposed vector $\bx_\calA$, and placed on the per-frame
random active set $\calA_t$ (drawn via the C5 shared-seed protocol,
green dashed arrows). The sparse vector $\bx$ is spread by the unitary
$\bU$ (Hadamard recommended) to $\by$, then transmitted through the
Rayleigh DD-channel with additive jammer $\bj$ on $\calJ_t$ and noise
$\bw$. The receiver, knowing $\calA_t$ via the shared seed, performs
power-threshold excision of jammer-contaminated bins, least-squares
recovery $\hat{\bx}_\calA$ from the kept observations $\bz_\calK$,
and per-bin SIC to obtain $\{\hat{b}_k\}$. The C5 protocol ensures
that the random active set $\calA_t$ is unknown to the jammer in
expectation, while seed compromise (oracle threat J3) is absorbed
by M-P conditioning (Section~\ref{sec:analytical}).}
\label{fig:system_diagram}
\end{figure*}

\subsection{OTFS Modulation in the Delay-Doppler Domain}

We consider a downlink single-cell system in which a base station (BS) with
a single antenna simultaneously serves $K$ single-antenna users
$\calU=\{U_1,\ldots,U_K\}$ via OTFS modulation on an $M\times N$
delay-Doppler grid, where $M$ is the number of delay bins and $N$ the number
of Doppler bins; let $N_b \triangleq MN$ denote the total number of DD bins.
We use the discrete index $n \in \{1,\ldots,N_b\}$ to enumerate the DD bins
in raster order. Subcarrier spacing $\Delta f = 1/T$ defines the symbol
period $T$, and the OTFS frame duration is $T_f=NT$.

The transmitted DD-domain signal $\bx \in \bbC^{N_b}$ is converted to a
continuous-time waveform via the inverse symplectic finite Fourier transform
(ISFFT) followed by the Heisenberg transform~\cite{raviteja2018interference,hadani2017otfs}.
At the receiver, the Wigner transform followed by the SFFT yields the
DD-domain observation. Under the standard assumption of integer
delay-Doppler indices and bi-orthogonal pulse shaping, the
DD-domain input-output relation reduces to the per-bin scalar form
\begin{equation}
\label{eq:dd_bin_model}
z[n] = h[n]\,y[n] + w[n] + j[n], \quad n=1,\ldots,N_b,
\end{equation}
where $y[n]$ is the transmitted DD-bin amplitude, $h[n]$ is the equivalent
per-bin channel coefficient, $w[n]\sim\calCN(0,N_0)$ is additive white Gaussian
noise (AWGN), and $j[n]$ is the jammer contribution.

\subsection{Sparse-DD Information Placement}

A key departure from conventional OTFS-NOMA is that we place user data on
only a \emph{sparse} subset of DD bins. Let
$\calA \subseteq [N_b]$ with $|\calA|=n_a$ denote the \emph{active set}
of DD bins carrying information; the remaining $N_b - n_a$ bins are
nominally zero before precoding. The \emph{sparsity ratio}
\begin{equation}
s \triangleq n_a/N_b
\end{equation}
is a design parameter (to be optimized in Section~\ref{sec:analytical}). The
active set $\calA$ may be deterministic (fixed across frames) or randomized
per frame; the latter is the proposed \emph{C5 protocol} (Section~\ref{sec:c5_protocol}).

\subsection{$K$-User Power-Domain NOMA Superposition}

On each active bin $n \in \calA$, $K$ users superpose their bipolar
(binary phase-shift keying, BPSK) data symbols
$\{b_k[n] \in \{\pm 1\}\}_{k=1}^K$ with power allocation (PA)
$\boldsymbol{\alpha}=(\alpha_1,\ldots,\alpha_K)$ satisfying
$\sum_k \alpha_k = 1$:
\begin{equation}
\label{eq:noma_super}
x[n] = \begin{cases}
\sum_{k=1}^K \sqrt{\alpha_k P_s}\, b_k[n], & n \in \calA, \\
0, & n \notin \calA,
\end{cases}
\end{equation}
where $P_s$ is the total per-active-bin transmit power. Users are indexed
by decreasing allocated power, $\alpha_1 > \alpha_2 > \cdots > \alpha_K$,
which by NOMA convention corresponds to increasing channel quality
(strongest-channel user receives least power). Bob---the target user for
our BER analysis---is user $u_{\rm Bob}=K$, with the weakest power
allocation. The receiver decodes users sequentially via SIC, beginning with
the strongest-power user; Bob is decoded last after all preceding users have
been successively cancelled.

\subsection{Unitary Precoding}

The sparse NOMA-composite vector $\bx \in \bbR^{N_b}$ is spread across all
DD bins by a unitary matrix $\bU \in \bbC^{N_b \times N_b}$ with
$\bU^H \bU = \bI_{N_b}$:
\begin{equation}
\label{eq:unitary_spread}
\by = \bU\, \bx.
\end{equation}
The resulting $\by$ is generally dense even though $\bx$ is sparse.
We require one regularity property of $\bU$:

\begin{definition}[Incoherent unitary]
\label{def:incoherent}
A unitary matrix $\bU \in \bbC^{N_b\times N_b}$ is \emph{incoherent} if its
entries satisfy $|U_{kj}| \le c/\sqrt{N_b}$ for some constant $c=O(1)$.
\end{definition}

\begin{remark}[Three concrete choices]
\label{rem:three_unitaries}
The following unitary precoders all satisfy
Definition~\ref{def:incoherent} with $c=1$:
(i)~\textbf{Hadamard:} $\bU=\bH_{N_b}/\sqrt{N_b}$, with entries $\pm 1/\sqrt{N_b}$,
realizable in $O(N_b\log N_b)$ \emph{additions only} via the fast
Walsh--Hadamard transform (FWHT); requires $N_b$ to be a power of 2;
(ii)~\textbf{DFT:} $U_{kj} = \exp(-2\pi i (k-1)(j-1)/N_b)/\sqrt{N_b}$, realizable
in $O(N_b\log N_b)$ via the FFT;
(iii)~\textbf{Random unitary:} a Haar-distributed realization, realizable in
$O(N_b^2)$ (no fast algorithm).
\end{remark}

We will show in Section~\ref{sec:numerical} that the proposed architecture is
\emph{transform-agnostic}: any incoherent unitary yields essentially
identical BER. The Hadamard transform is recommended in implementation
because of its multiplication-free FWHT and $\pm 1/\sqrt{N_b}$ entries
amenable to fixed-point hardware.

\textbf{$\bU$ as resource-domain composite-constellation design, not
code-division multiple-access (CDMA) spreading.} The role of $\bU$
in our architecture differs fundamentally from a CDMA spreading code
in two ways. (i)~\emph{Domain
of operation:} CDMA spreads each user's symbol stream by a
chip-level code in the \emph{time/code} domain; here, $\bU$ acts
once per OTFS frame on the entire \emph{DD-resource} vector
$\bx \in \bbR^{N_b}$, mapping the sparse $n_a$-bin support to a
dense $N_b$-bin support. (ii)~\emph{Constellation design vs.\ user
separation:} a CDMA code separates users via orthogonal
chip-sequences (one code per user) and is decoupled from constellation
geometry; here, all users share the \emph{same} unitary $\bU$, and
the combination of $\bU$ with the superincreasing PA of
Section~\ref{sec:super_pa} defines a $K$-user \emph{composite
constellation} on each active bin whose minimum distance is
shaped by both $\bU$'s incoherence and the $\alpha$-vector. SIC
operates not over chip-sequences but over the geometry of this
composite constellation, and Prop.~\ref{prop:sic_eq_ml} shows that
the resulting greedy decoder is ML-optimal. Equivalently, $\bU$
provides \emph{frequency diversity} for jammer excision (the M-P
bound, Section~\ref{sec:analytical}), \emph{not} multi-user code
orthogonality.

A useful sanity check on this framing is that the choice of $\bU$ is
interchangeable. The transform-agnostic claim of
Section~\ref{sec:numerical} (Hadamard, DFT, and Haar-random unitary
all yield essentially the same BER under our architecture, with
deviations below the simulation floor) is incompatible with reading
$\bU$ as a constellation \emph{lookup table}: a lookup
table would be sensitive to the specific entries of $\bU$, but here
any incoherent unitary suffices. What matters is the conditioning of
the random sub-matrix $\bU_{\calK,\calA}$ formed by jammer excision,
which is controlled by Marchenko--Pastur universality
(Theorem~\ref{thm:mp_bound}) and depends only on the dimensions of
the sub-matrix, not on the algebraic structure of $\bU$. The
Hadamard transform is recommended in implementation purely for
hardware reasons (multiplication-free $\pm 1/\sqrt{N_b}$ entries,
$O(N_b\log N_b)$ FWHT), not because it carries any unique geometric
property the proof relies on.

\subsection{Channel and Jammer Model}

The per-bin channel coefficient $h[n] \in \bbC$ follows the OTFS
DD-channel model. For our analysis we adopt the per-bin Rayleigh
flat-fading approximation $h[n] \sim \calCN(0,1)$ i.i.d. across bins,
which captures the dominant impact of channel diversity on the
receiver-side conditioning and noise. Extension to doubly-dispersive
Nakagami-$m$ fading is straightforward via the standard
Gamma-quadrature approach and deferred to
Section~\ref{sec:discussion}; the Rician fading extension is
developed in Section~\ref{sec:rician}.

The jammer occupies a subset $\calJ \subseteq [N_b]$ with $|\calJ|=n_J$, with
per-jammed-bin power $P_J$. The jammer contribution is
\begin{equation}
j[n] = \begin{cases}
\sqrt{P_J/2}\,(j_R[n]+i\,j_I[n]), & n \in \calJ, \\
0, & n \notin \calJ,
\end{cases}
\end{equation}
with $j_R[n],j_I[n] \sim \calN(0,1)$ i.i.d. Define the
\emph{jamming fraction} $\rho_J \triangleq n_J/N_b$ and the
\emph{jamming-to-signal ratio} $\Gamma \triangleq P_J/P_s$.

We consider three jammer strategies of increasing sophistication:
\begin{enumerate}
\item[(J1)] \emph{Partial-band random:} $\calJ$ is drawn uniformly at random
over the $\binom{N_b}{n_J}$ possible subsets, independently per frame.
\item[(J2)] \emph{Fixed-pattern intelligent:} $\calJ$ is a fixed pattern
chosen by the jammer (possibly the pattern the jammer believes the legitimate
user is using).
\item[(J3)] \emph{Oracle:} $\calJ_t = \calA_t$ exactly, i.e., the jammer is
omniscient about the per-frame active set.\footnote{The
\emph{oracle} model is stronger than the
\emph{reactive} jammer of~\cite{arxiv2024_reactive,wang2020}, which
must \emph{observe} a transmission and then react. The oracle model
assumes pre-knowledge of $\calA_t$ (e.g., via seed compromise) and is
the worst case for the C5 protocol --- passing this stress test implies
robustness against weaker (reactive, smart) jammer variants as a
corollary.}
\end{enumerate}
Threat (J3) is the worst-case scenario and is used as the stress test for
the C5 protocol.

\subsection{Receiver: Excision-LS-SIC}

The combined transmit-channel-jammer signal at the receiver is
$\bz = \mathrm{diag}(\bh)\bU\bx + \bw + \bj$. The proposed receiver operates
in three stages:

\textbf{Stage 1 (Excision).} The receiver estimates $\calJ$ from the
DD-domain power profile (e.g., comparing $|z[n]|^2$ to a threshold
adaptive to the noise floor) and excises the jammed bins. Define
$\calK \triangleq [N_b] \setminus \calJ$ (the \emph{kept} set), with
$|\calK| = N_b - n_J = (1-\rho_J)N_b$. The kept observation is
\begin{equation}
\bz_\calK = \mathrm{diag}(\bh_\calK)\,\bU_{\calK,\calA}\,\bx_\calA + \bw_\calK,
\end{equation}
where $\bU_{\calK,\calA}$ is the $|\calK| \times n_a$ submatrix of $\bU$
restricted to kept rows and active columns. The jammer is fully suppressed
provided the excision is perfect (a reasonable approximation when
$\Gamma \gg 0$ dB).

\textbf{Stage 2 (Least-Squares Recovery).} The receiver recovers $\bx_\calA$
from the kept observation by solving the weighted least-squares problem
\begin{equation}
\hat{\bx}_\calA = \argmin_{\bx_\calA \in \bbR^{n_a}}
\big\| \bz_\calK - \mathrm{diag}(\bh_\calK)\bU_{\calK,\calA}\bx_\calA \big\|_2^2.
\end{equation}
The closed-form solution is
$\hat{\bx}_\calA = (\bM^H\bM)^{-1}\bM^H\bz_\calK$, where
$\bM = \mathrm{diag}(\bh_\calK)\bU_{\calK,\calA}$. For complex
$\bU$ and real $\bx$, real-stacked least squares is used:
$[\Re(\bM); \Im(\bM)]\bx_\calA = [\Re(\bz_\calK);\Im(\bz_\calK)]$.

\textbf{Stage 3 (Per-active-bin SIC).} For each active bin
$n \in \calA$, the recovered scalar $\hat{x}[n]$ is processed by standard
NOMA-SIC: users are decoded in decreasing-power order; each decoded bit
is reconstructed and subtracted from the residual. Bob's bit
$b_K[n]$ is the last to be decoded. Under the superincreasing PA of
Section~\ref{sec:super_pa}, this $O(K)$ SIC pass attains the same BER as
the $O(2^K)$ composite ML decoder
(Prop.~\ref{prop:sic_eq_ml})---no decoder-side complexity is sacrificed.

The total receiver complexity is $O(N_b\log N_b)$ for the unitary
transform (via FFT/FWHT), $O(n_a^3)$ for the LS inversion (per frame),
and $O(n_a K)$ for SIC. For typical OTFS frames ($N_b=64$ to $4096$ and
$n_a/N_b = 1/4$), the LS stage dominates but remains tractable.

\textbf{Alternative NOMA receivers.} The NOMA literature offers
stronger detectors than vanilla SIC, including Turbo-SIC (iterative
soft-decision feedback), message passing (MP) on the factor graph, and
sphere decoding for full ML. By Proposition~\ref{prop:sic_eq_ml}, all
these alternatives \emph{reduce to the same Bob-BER as our $O(K)$ SIC}
once the PA is chosen superincreasing: the Merkle--Hellman knapsack
property guarantees that bit-by-bit greedy decoding from the strongest
user finds the unique closest constellation point, regardless of the
decoder's search strategy. The proposed receiver therefore captures
the full multi-user detection gain at minimal complexity, and no
performance is left on the table for more elaborate receivers to
recover.

\section{Power Allocation via Superincreasing Sequences}
\label{sec:super_pa}

The power allocation $\boldsymbol{\alpha}$ critically determines the SIC
viability for $K>2$ users. In this section we characterize the
necessary-and-sufficient condition for noise-free linear separability of
SIC stages and derive a margin-parameterized family of allocations satisfying
this condition.

\subsection{Linear Margin and Superincreasing Condition}

After excision and LS recovery, the receiver observes (per active bin)
\begin{equation}
\label{eq:per_bin_noma}
r = \sum_{k=1}^{K} \sqrt{\alpha_k P_s}\,b_k + \tilde{w},
\end{equation}
where $\tilde{w}$ is post-LS noise with variance characterized in
Section~\ref{sec:analytical}. SIC decodes the users in decreasing-$\alpha_k$
order, beginning with user 1.

\begin{definition}[Superincreasing power allocation]
\label{def:superincreasing}
A power allocation $\boldsymbol{\alpha} \in \bbR_+^K$ with $\alpha_1 \ge \cdots
\ge \alpha_K > 0$ is \emph{superincreasing for BPSK SIC} if
\begin{equation}
\label{eq:superincreasing}
\sqrt{\alpha_k} > \sum_{j>k}\sqrt{\alpha_j}, \quad k=1,\ldots,K-1.
\end{equation}
The corresponding \emph{linear margin} at stage $k$ is
\begin{equation}
\label{eq:margin}
\delta_k \triangleq \sqrt{\alpha_k} - \sum_{j>k}\sqrt{\alpha_j} > 0.
\end{equation}
\end{definition}

This definition is inspired by Merkle and Hellman's superincreasing
knapsack~\cite{merkle1978} and by recent work on faster-than-Nyquist
signaling~\cite{kulhandjian2022ftn} where an analogous linear-margin
condition characterizes ISI separability.

\begin{theorem}[Noise-free SIC error-freeness]
\label{thm:sic_errfree}
If $\boldsymbol{\alpha}$ is superincreasing, then in the noise-free
case ($\tilde{w}=0$ in \eqref{eq:per_bin_noma}), every stage of
SIC decodes correctly regardless of the interferers' bit values.
\end{theorem}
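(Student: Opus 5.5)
The plan is an induction on the SIC stage index $k$, with the hypothesis that all previously decoded bits $b_1,\ldots,b_{k-1}$ are correct. Under this hypothesis, the residual entering stage $k$ is exactly
\[
r_k \triangleq r - \sum_{j<k}\sqrt{\alpha_j P_s}\,\hat b_j = \sqrt{P_s}\Big(\sqrt{\alpha_k}\,b_k + \sum_{j>k}\sqrt{\alpha_j}\,b_j\Big),
\]
since $\tilde{w}=0$ in \eqref{eq:per_bin_noma} and the cancelled terms match the transmitted ones exactly. The base case $k=1$ is simply this expression with no cancellation, i.e. $r_1=r$.

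At stage $k$ the SIC decision is the BPSK slicer $\hat b_k=\mathrm{sgn}(r_k)$. The key step is a worst-case bound on the residual interference using $|b_j|=1$ and the triangle inequality:
\[
\Big|\sum_{j>k}\sqrt{\alpha_j}\,b_j\Big| \le \sum_{j>k}\sqrt{\alpha_j}.
\]
The superincreasing condition \eqref{eq:superincreasing} then gives $b_k r_k/\sqrt{P_s} \ge \sqrt{\alpha_k}-\sum_{j>k}\sqrt{\alpha_j}=\delta_k>0$, with $\delta_k$ the linear margin of \eqref{eq:margin}. So $\mathrm{sgn}(r_k)=b_k$ for every one of the $2^{K-k}$ interferer bit patterns, and the induction hypothesis carries forward to stage $k+1$.

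The last stage $k=K$ has an empty interference sum, so $r_K=\sqrt{\alpha_K P_s}\,b_K$ with $\alpha_K>0$, and Bob's bit is decoded correctly trivially. This completes the induction; it also shows that $\delta_k$ is precisely the noise-free distance of the stage-$k$ residual from the decision boundary.

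The argument is elementary, and the main point requiring care is making explicit that error-free cancellation at earlier stages is what makes $r_k$ exact; that is why the claim is proved by induction rather than stage by stage independently. A secondary point is the real-valued setting: after real-stacked LS the per-bin observation is real, so a sign slicer is appropriate. Ties at $r_k=0$ cannot occur, since $|r_k|\ge\sqrt{P_s}\,\delta_k>0$.
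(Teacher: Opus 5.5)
Your proof is correct and follows the paper's own argument: an induction on the SIC stage, the exact residual $r_k$ once earlier bits are cancelled correctly, and the superincreasing margin $\delta_k$ to show the slicer recovers $b_k$ for every interferer pattern. Your signed bound $b_k r_k \ge \sqrt{P_s}\,\delta_k$ is slightly sharper than the paper's $|r_k|\ge\sqrt{P_s}\,\delta_k$, since the signed quantity is what actually forces $\mathrm{sign}(r_k)=b_k$, and you also handle stage $K$ explicitly where the paper leaves it implicit.
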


\begin{proof}
At stage $k$, after subtracting users $1,\ldots,k-1$ (correct by induction),
the residual signal is
$r_k = \sqrt{\alpha_k P_s}\,b_k + \sum_{j>k}\sqrt{\alpha_j P_s}\,b_j$.
For any sign of $b_k$ and any interferers $\{b_j\}_{j>k}$,
\begin{equation}
|r_k| \ge \sqrt{\alpha_k P_s} - \sum_{j>k}\sqrt{\alpha_j P_s}
       = \sqrt{P_s}\,\delta_k > 0.
\end{equation}
Hence $\mathrm{sign}(r_k)=\mathrm{sign}(b_k)$, completing the induction.
\end{proof}

\begin{proposition}[SIC achieves ML optimality under superincreasing PA]
\label{prop:sic_eq_ml}
Let $\boldsymbol{\alpha}$ be superincreasing. The per-bin SIC decoder
applied to \eqref{eq:per_bin_noma} produces the same bit vector
$(\hat{b}_1,\ldots,\hat{b}_K)$ as the maximum-likelihood decoder
\begin{equation}
\label{eq:ml}
(\hat{b}_1^{\rm ML},\ldots,\hat{b}_K^{\rm ML}) = \argmin_{\bb\in\{\pm1\}^K}
\bigl|r - \textstyle\sum_{k=1}^K \sqrt{\alpha_k P_s}\,b_k\bigr|^2,
\end{equation}
for every realization of $\tilde{w}$. Consequently, SIC achieves ML BER at
$O(K)$ cost instead of $O(2^K)$.
\end{proposition}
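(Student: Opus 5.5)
The plan is to prove the equivalence by induction on the decoding stage. At each stage, a reflection-symmetry argument shows that the ML decision for the currently strongest user is exactly the SIC sign decision. The superincreasing condition \eqref{eq:superincreasing} is what makes the sign of each sub-constellation point equal to its leading bit.

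\emph{Reduction to a real problem.} All constellation points $c(\bb)=\sqrt{P_s}\sum_k\sqrt{\alpha_k}\,b_k$ are real. If $r$ is complex, then $|r-c|^2=(\Re r-c)^2+(\Im r)^2$, so the ML problem \eqref{eq:ml} depends on $r$ only through $\Re r$. SIC likewise decides on the real part. Without loss of generality I therefore take $r\in\bbR$. I also fix a common tie-breaking rule for both decoders, for instance $\mathrm{sign}(0)=+1$ together with the matching lexicographic rule for ML. Ties occur on a set of measure zero for continuous $\tilde w$, so the BER claim is unaffected in any case.

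\emph{Key lemma (one stage).} For $m\in\{1,\ldots,K\}$ define the partial constellation
\[
\calS_m=\Bigl\{\textstyle\sum_{j\ge m}\sqrt{\alpha_j}\,b_j:\ b_j\in\{\pm1\}\Bigr\}.
\]
By Theorem~\ref{thm:sic_errfree}'s bound, every point of $\calS_m$ with $b_m=+1$ is at least $\delta_m>0$. By the symmetry $\bb\mapsto-\bb$, every point with $b_m=-1$ is at most $-\delta_m$. Thus $\calS_m$ is symmetric about $0$, and the sign of each point equals its leading bit $b_m$.

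Now let $\rho\in\bbR$ with $\rho>0$, and take any point $-p\in\calS_m$ with $p>0$. Its mirror image $p$ also lies in $\calS_m$, and $|\rho-p|<|\rho+p|$. Hence the point of $\calS_m$ nearest to $\rho$ has $b_m=\mathrm{sign}(\rho)$. In words, the ML choice of the leading bit is the sign detector.

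\emph{Induction.} Write the ML objective as $\min_{b_1}\min_{b_2,\ldots,b_K}$. Applying the lemma with $m=1$ and $\rho=r/\sqrt{P_s}$ gives $\hat b_1^{\rm ML}=\mathrm{sign}(r)=\hat b_1^{\rm SIC}$. With $b_1$ fixed at this value, the remaining minimization is exactly the ML problem over $\calS_2$ for the residual $r_2=r-\sqrt{\alpha_1P_s}\,\hat b_1$. This residual is precisely what SIC passes to stage 2. Repeating the argument for $m=2,\ldots,K$ shows that the SIC and ML bit vectors agree for every $\tilde w$.

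The complexity statement then follows by counting operations. SIC performs $K$ sign decisions and $K$ subtractions, whereas exhaustive ML evaluates $2^K$ distances.

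\emph{Main obstacle.} The delicate step is justifying that fixing $b_1$ first loses nothing. The outer minimization over $b_1$ must be resolved globally, not merely greedily. The reflection argument handles this: each point with the wrong sign is strictly beaten by its mirror image with the correct sign. Superincreasingness enters only to guarantee that the mirrored point has the correct leading bit, that is, that $\mathrm{sign}$ of a point of $\calS_m$ equals $b_m$. I will state this use explicitly. Without superincreasingness, the nearest point to $\rho>0$ could still be positive yet carry $b_m=-1$, and SIC and ML would then disagree.
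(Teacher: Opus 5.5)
Your proof is correct and follows the same route as the paper. Superincreasingness makes the sign of every (partial) constellation point equal its leading bit, so the ML decision on that bit is the sign of the current residual, and induction on the SIC residual finishes the argument. Your mirror-image lemma and the explicit $\min_{b_1}\min_{b_2,\ldots,b_K}$ decomposition supply the justification for the paper's step ``hence $\hat{b}_1^{\rm ML}=\mathrm{sign}(r)$''; the paper asserts that step without noting that it relies on the constellation being symmetric about $0$.
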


\begin{proof}
The superincreasing condition \eqref{eq:superincreasing} is exactly the
Merkle--Hellman knapsack property~\cite{merkle1978}: for any
$\bb\in\{\pm1\}^K$, the sign of the partial sum
$\sqrt{\alpha_1 P_s}\,b_1$ alone determines the sign of the full
constellation point $\sum_k \sqrt{\alpha_k P_s}\,b_k$, because
$\sqrt{\alpha_1 P_s}>\sum_{j>1}\sqrt{\alpha_j P_s}$ by hypothesis.
Hence $\hat{b}_1^{\rm ML} = \mathrm{sign}(r)$, which is the SIC stage-$1$
decision. The same argument applied to the residual
$r - \sqrt{\alpha_1 P_s}\,\hat{b}_1$ recovers $\hat{b}_2^{\rm ML}$, and so
on by induction. This recovers \emph{exactly} the SIC bit-by-bit
decisions, exploiting the same superincreasing structure that underlies
the original Merkle--Hellman knapsack cryptosystem~\cite{merkle1978}.
The same property was used in our FTN signaling
framework~\cite{kulhandjian2022ftn} for ISI separability.
\end{proof}

Proposition~\ref{prop:sic_eq_ml} justifies our $O(K)$-cost SIC receiver:
no decoder-side complexity is gained by composite ML once the PA is
chosen superincreasing. Monte Carlo experiments at $K=6$ confirm this
empirically: SIC and ML produce identical Bob-BER to four decimal
places at every SNR from $5$ to $35$~dB.

\textbf{Assumptions underlying Prop.~\ref{prop:sic_eq_ml}.} The
equivalence holds under three conditions, all of which are present in
our architecture and which we restate here for clarity. (i)~\emph{BPSK
modulation per user.} The proof relies on $b_k\in\{\pm 1\}$ and on
each one-bit flip changing the constellation point by exactly
$2\sqrt{\alpha_k P_s}$; higher-order modulation
(QPSK, $M$-PAM, $M$-QAM) generates additional constellation
points that break the strict superincreasing ordering, so SIC and ML
can diverge. Extending the result to non-BPSK is a natural direction
for future work. (ii)~\emph{Strict superincreasing condition
$\sqrt{\alpha_k}>\sum_{j>k}\sqrt{\alpha_j}$ for all $k<K$.} If
$\delta_k=0$ for some $k$ (boundary case), the SIC stage-$k$ slicer is
ambiguous on a measure-zero set of received samples, and SIC and ML
agree only almost surely. Strict inequality (any $\varepsilon>0$ in
the recurrence) is sufficient. (iii)~\emph{Additive Gaussian residual
noise after LS.} Prop.~\ref{prop:sic_eq_ml} is realization-by-
realization (deterministic), so the noise distribution does not enter
the equivalence proof, but the operational consequence---identical
bit decisions and therefore identical BER---requires that the SIC
slicer's decision function (sign-based) is also the ML slicer's
decision function. For Gaussian residual noise this is automatic; for
heavy-tailed residuals (e.g., uncancelled impulsive jammer), the
sign-based slicer remains correct as the ML maximizer of the
Gaussian-likelihood proxy but may diverge from the true
heavy-tailed ML. Within our architecture, the post-excision LS
residual is Gaussian by the central-limit averaging over
$N_b - n_J$ kept bins, so this assumption is met.

In the presence of noise, the per-stage bit error probability is upper
bounded by $\Pr(|\tilde{w}|>\delta_k\sqrt{P_s})$ for each $k<K$ and by the
standard Q-function expression for stage $K$ (Bob).

\textbf{Geometric visualization.} Fig.~\ref{fig:constellation_K2}
illustrates the constellation geometry for $K=2$ contrasting an
equal-power allocation (insufficient SIC margin) with the
superincreasing PA at $\varepsilon=1$ ($\alpha=[0.8,0.2]$). The four
points of the composite constellation $\sum_k \sqrt{\alpha_k P_s}\,b_k$
land on the real axis. Bob's bit $b_2$ flips the sign of the smaller
displacement $\sqrt{\alpha_2 P_s}$; the strong user $b_1$ flips the
sign of the larger displacement. Under superincreasing PA, the
constellation's two halves ($b_1=+1$ on the right, $b_1=-1$ on the
left) are separated by $2\sqrt{\alpha_1 P_s}-2\sqrt{\alpha_2 P_s}
=2\delta_1\sqrt{P_s}>0$, so SIC stage~1 can determine $b_1$ from the
sign of the observed sample regardless of $b_2$. This is the
Merkle--Hellman knapsack property that gives SIC its
ML-optimality (Prop.~\ref{prop:sic_eq_ml}).

\begin{figure}[t]
\centering
\begin{tikzpicture}[
  >={Stealth[length=2.2mm]},font=\footnotesize,
  cpoint/.style={circle,draw,fill=#1,inner sep=1.5pt}
]
\node[font=\small\itshape] at (0,1.75) {(a) Equal PA $\alpha=(0.5,0.5)$: no SIC margin};
\draw[->] (-3.4,0.7) -- (3.4,0.7) node[right]{$r$};
\draw (0,0.6) -- (0,0.8) node[above,font=\scriptsize]{$0$};
\node[cpoint=red!30]   (p_mm) at (-2,0.7) {};
\node[cpoint=red!30]   (p_mp) at ( 0,0.7) {};
\node[cpoint=blue!30]  (p_pm) at ( 0,0.7) {};
\node[cpoint=blue!30]  (p_pp) at ( 2,0.7) {};
\node[below,font=\scriptsize] at (p_mm) {$(\hbox{-}\hbox{-})$};
\node[below,font=\scriptsize] at ( 0,0.55) {$(\hbox{-}+),(+\hbox{-})$};
\node[below,font=\scriptsize] at (p_pp) {$(++)$};
\node[font=\scriptsize,red!60!black] at (0,1.3) {\itshape overlap $\Rightarrow$ SIC ambiguous};

\node[font=\small\itshape] at (0,-0.3) {(b) Superincreasing PA $\alpha=(0.8,0.2)$, $\varepsilon=1$};
\draw[->] (-3.4,-1.4) -- (3.4,-1.4) node[right]{$r$};
\draw (0,-1.5) -- (0,-1.3) node[above,font=\scriptsize]{$0$};
\node[cpoint=red!50]   (s_mm) at (-2.68,-1.4) {};
\node[cpoint=red!80]   (s_mp) at (-0.89,-1.4) {};
\node[cpoint=blue!50]  (s_pm) at ( 0.89,-1.4) {};
\node[cpoint=blue!80]  (s_pp) at ( 2.68,-1.4) {};
\node[below,font=\scriptsize] at (s_mm) {$(\hbox{-},\hbox{-})$};
\node[below,font=\scriptsize] at (s_mp) {$(\hbox{-},+)$};
\node[below,font=\scriptsize] at (s_pm) {$(+,\hbox{-})$};
\node[below,font=\scriptsize] at (s_pp) {$(+,+)$};
\draw[<->,thick,red] (s_mp.north) -- ++(0,0.25) -| (s_pm.north);
\node[font=\scriptsize,red,above] at (0,-1.05) {$2\sqrt{\alpha_2 P_s}$ (Bob bit)};
\draw[dashed,blue!60!black] (0,-2.2) -- (0,-0.95);
\node[font=\scriptsize,blue!60!black,above] at (0,-2.95) {SIC stage-1: $b_1=\mathrm{sign}(r)$};
\draw[<->,thick,green!50!black] (-0.89,-2.0) -- (0.89,-2.0);
\node[font=\scriptsize,green!50!black,below] at (0,-2.05) {$2\delta_1\sqrt{P_s}>0$};
\end{tikzpicture}
\caption{Composite-constellation geometry for $K=2$ at unit
$P_s$. (a)~Equal-power PA places the cross-terms $(-,+)$ and $(+,-)$
at the same location, eliminating SIC's stage-1 discriminability.
(b)~Superincreasing PA at $\varepsilon=1$ separates the constellation
into two halves of two points each, with the half-separation
$2\delta_1\sqrt{P_s}>0$ ensuring sign-based decoding of $b_1$ in
the noise-free case (Thm.~\ref{thm:sic_errfree}). Bob's minimum
distance $2\sqrt{\alpha_2 P_s}$ governs the residual stage-2 BER.}
\label{fig:constellation_K2}
\end{figure}

\subsection{Margin-Parameterized Construction}

For a target $\varepsilon > 0$, the recurrence
\begin{equation}
\label{eq:eps_recurrence}
\sqrt{\alpha_k} = (1+\varepsilon) \sum_{j>k}\sqrt{\alpha_j}, \quad k=1,\ldots,K-1
\end{equation}
yields a one-parameter family of superincreasing allocations satisfying
$\delta_k = \varepsilon\sqrt{\alpha_k}/(1+\varepsilon) > 0$. Setting
$\sqrt{\alpha_K}$ as a free parameter and normalizing such that
$\sum_k \alpha_k = 1$, the closed-form solution is
\begin{equation}
\sqrt{\alpha_k} = (1+\varepsilon)(2+\varepsilon)^{K-k-1}\sqrt{\alpha_K},
\quad k < K,
\end{equation}
with $\sqrt{\alpha_K}$ determined by normalization:
\begin{equation}
\alpha_K^{-1} = 1 + (1+\varepsilon)^2 \frac{(2+\varepsilon)^{2(K-1)}-1}{(2+\varepsilon)^2-1}.
\end{equation}

\subsection{Optimal Margin $\varepsilon^\star$}

The margin $\varepsilon$ controls a fundamental trade-off:
\begin{itemize}
\item \textbf{Small $\varepsilon$:} Small linear margins
$\delta_k \propto \varepsilon$ are easily violated by noise, causing SIC
propagation errors that catastrophically corrupt Bob's bit.
\item \textbf{Large $\varepsilon$:} Bob's allocated power $\alpha_K$ shrinks
super-exponentially (as $(2+\varepsilon)^{-2(K-1)}$), so even noise-free SIC
cannot recover Bob due to insufficient signal-to-noise ratio.
\end{itemize}

The optimal $\varepsilon^\star$ minimizes Bob's BER subject to noise
variance and M-P conditioning. We show empirically in
Section~\ref{sec:numerical} (Fig.~\ref{fig:epssweep}) that
$\varepsilon^\star = 0.5$ for $K=4$ at $\rho_J = 0.2$ and $s = 0.25$ across
SNR$\in[25,35]$~dB. For $K=2$, any $\varepsilon > 1.0$ (corresponding to
Geometric power allocation with $\rho_{\rm geom} < 0.25$) is approximately
optimal.

\section{Analytical BER Framework}
\label{sec:analytical}

\subsection{T-NOMA Baseline BER Under Partial-Band Jamming}

Conventional OTFS-NOMA (T-NOMA) without spreading places user data directly
on active DD bins: $\by = \bx$ in~\eqref{eq:unitary_spread} corresponds to
$\bU = \bI$. The receiver detects bin-by-bin: each active bin
$n \in \calA$ produces an observation $z[n] = h[n]\,\sum_k\sqrt{\alpha_k P_s}\,b_k[n] + w[n] + j[n]$.
We adopt geometric power allocation as the canonical T-NOMA
baseline~\cite{armando2020noma_pa,mdpi2024_noma_pa}; the proposed scheme
is benchmarked against the \emph{most-favorable} T-NOMA configuration in
each numerical experiment (Section~\ref{sec:numerical}), with explicit
verification that the conclusions are insensitive to PA choice under
oracle jamming (Section~\ref{sec:cluster_validation}, fairness paragraph).

Bob's bit error probability at active bin $n$ depends on whether the bin
is jammed:

\textbf{Unjammed bin ($n \notin \calJ$):} Standard NOMA-SIC analysis yields
the per-bit error probability under the genie-SIC assumption (perfect
cancellation of users $1,\ldots,K-1$):
\begin{equation}
P_b^{T,{\rm clean}} = Q\!\left(\sqrt{2\alpha_K P_s/N_0}\right).
\end{equation}

\textbf{Jammed bin ($n \in \calJ$):} The effective noise power becomes
$N_0 + P_J$, giving
\begin{equation}
P_b^{T,{\rm jam}} = Q\!\left(\sqrt{2\alpha_K P_s/(N_0+\Gamma P_s)}\right)
\xrightarrow{\Gamma \to \infty} \tfrac{1}{2}.
\end{equation}

Under partial-band random jamming with $\rho_J = n_J/N_b$, each active bin
is independently jammed with probability $\rho_J$ (Bernoulli approximation,
exact under hypergeometric for $n_a \ll N_b$). The expected Bob BER is

\begin{theorem}[T-NOMA BER under partial-band jamming]
\label{thm:tnoma_ber}
For T-NOMA with partial-band random jamming, the asymptotic Bob BER
satisfies
\begin{equation}
\label{eq:tnoma_ber}
P_b^T(\rho_J,\Gamma) = (1-\rho_J)Q\!\left(\sqrt{\tfrac{2\alpha_K P_s}{N_0}}\right)
+ \rho_J\, Q\!\left(\sqrt{\tfrac{2\alpha_K P_s}{N_0+\Gamma P_s}}\right).
\end{equation}
In the high-SNR limit $P_s/N_0 \to \infty$ with $\Gamma=O(1)$,
\begin{equation}
\label{eq:tnoma_floor}
\lim_{P_s/N_0 \to \infty} P_b^T(\rho_J,\Gamma)
= \rho_J\, Q\!\left(\sqrt{\tfrac{2\alpha_K}{\Gamma}}\right)
\overset{\Gamma\gg 1}{\approx} \rho_J/2.
\end{equation}
\end{theorem}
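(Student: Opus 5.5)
The plan is to condition on the jamming status of a generic active bin of Bob and then apply the law of total probability. Under partial-band random jamming (J1), $\calJ$ is uniform over the $\binom{N_b}{n_J}$ subsets and independent of $\calA$. So for a fixed $n\in\calA$ we have $\Pr(n\in\calJ)=n_J/N_b=\rho_J$ exactly; this holds marginally for each bin, so no Bernoulli approximation is needed. Bob's BER is the average over active bins of the per-bin error probability, so by linearity of expectation it equals $(1-\rho_J)P_b^{T,\rm clean}+\rho_J P_b^{T,\rm jam}$. The hypergeometric dependence between bins affects only the variance of the empirical error count, not its mean.

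Next I will justify the two conditional terms.

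\textbf{Unjammed bin.} Under the genie-SIC assumption, users $1,\ldots,K-1$ are removed exactly. Equalizing by $h[n]$, or equivalently conditioning on it as in the stated AWGN-style expression, leaves $\sqrt{\alpha_K P_s}\,b_K + w$. The real part of $w$ has variance $N_0/2$, so the error probability is $Q(\sqrt{2\alpha_K P_s/N_0})$.

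\textbf{Jammed bin.} The jammer $j[n]=\sqrt{P_J/2}(j_R+i j_I)$ is circular Gaussian with variance $P_J=\Gamma P_s$ and independent of $w$. The aggregate disturbance is therefore $\calCN(0,N_0+\Gamma P_s)$, and the same slicer computation gives the second term. This yields \eqref{eq:tnoma_ber}. I treat the per-bin gain as unit-magnitude, matching the displayed expressions; a Rayleigh average would simply replace each $Q$ term by its fading average without changing the structure.

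For the limit \eqref{eq:tnoma_floor}, let $P_s/N_0\to\infty$ with $\Gamma$ fixed. The first argument diverges, so $Q\to0$. The second argument equals $\sqrt{2\alpha_K/(N_0/P_s+\Gamma)}$, which converges to $\sqrt{2\alpha_K/\Gamma}$, and continuity of $Q$ gives the stated limit. When $\Gamma\gg1$ (with $\alpha_K\le1$) this argument tends to $0$, and since $Q(0)=1/2$ the floor is approximately $\rho_J/2$. A first-order expansion $Q(x)\approx 1/2-x/\sqrt{2\pi}$ quantifies the gap.

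The only delicate point is the modelling one: genie SIC on jammed bins. Without it, errors in users $1,\ldots,K-1$ would propagate. I will state explicitly that the theorem is taken under the genie assumption, and note that error propagation can only increase the jammed-bin error toward $1/2$. The floor conclusion is therefore unaffected, and the remaining steps are routine.
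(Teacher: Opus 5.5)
Your proposal is correct and follows essentially the same route as the paper: condition on whether Bob's active bin is jammed, use genie-SIC to obtain the clean term $Q(\sqrt{2\alpha_K P_s/N_0})$ and the jammed term with aggregate noise $N_0+\Gamma P_s$, mix the two with weight $\rho_J$, and take the high-SNR limit by continuity of $Q$. Your use of linearity of expectation makes the weight $\rho_J$ exact for the mean BER, so the paper's Bernoulli/independence approximation is not needed; this is a small but valid sharpening. Your explicit unit-gain caveat states openly what the paper's AWGN-style expressions assume implicitly.
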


This is an irreducible error floor with diversity order zero: increasing
$P_s$ does not eliminate it. The $\rho_J/2$ form is the NOMA analogue
of the classical partial-band jamming floor for frequency-hopped
BFSK~\cite{fh_bfsk_pbnj_floor}, and constitutes the fundamental
motivation for the proposed sparse-DD architecture.

\subsection{Proposed Scheme's BER via M-P Conditioning}

For the proposed unitary-precoded scheme, the kept observation after
excision is $\bz_\calK = \bM\bx_\calA + \bw_\calK$ with
$\bM = \mathrm{diag}(\bh_\calK)\bU_{\calK,\calA}$. The LS solution
$\hat{\bx}_\calA = (\bM^H\bM)^{-1}\bM^H\bz_\calK$ has additive noise
$\tilde{\bw} = (\bM^H\bM)^{-1}\bM^H\bw_\calK$ with covariance
$N_0 (\bM^H\bM)^{-1}$. The worst-case per-coordinate noise variance is
$N_0/\sigma_{\min}^2(\bM)$.

For Rayleigh fading and $\bU$ incoherent, the conditioning of $\bU_{\calK,\calA}$
governs the noise penalty.

\begin{theorem}[Marchenko--Pastur conditioning bound]
\label{thm:mp_bound}
Let $\bU$ be an incoherent unitary matrix (Def.~\ref{def:incoherent}),
$\calK$ a uniformly random subset of $[N_b]$ with $|\calK|=(1-\rho_J)N_b$,
and $\calA$ an independent uniformly random subset with $|\calA|=n_a$.
Then in the high-dimensional limit $N_b \to \infty$ with $\rho_J,n_a/N_b$
fixed, the smallest singular value of $\bU_{\calK,\calA}$ converges almost
surely to
\begin{equation}
\label{eq:mp_sigma_min}
\sigma_{\min}(\bU_{\calK,\calA}) \to \sqrt{1-\rho_J} - \sqrt{n_a/N_b}
\end{equation}
provided $n_a/N_b < 1-\rho_J$.
\end{theorem}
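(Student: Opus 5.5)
The plan is to reduce the statement to the Bai--Yin theorem on the smallest singular value of a tall random matrix with independent, normalized entries. The dimensions match the stated limit exactly. Write $p \triangleq 1-\rho_J$ and $s \triangleq n_a/N_b$, so that $\bU_{\calK,\calA}$ is $pN_b \times sN_b$. Rescale it as $\mathbf{X} \triangleq \sqrt{N_b}\,\bU_{\calK,\calA}$. By Definition~\ref{def:incoherent}, every entry of $\mathbf{X}$ is bounded by $c$. Because $\bU$ is unitary and $\calA$ is drawn uniformly, the entries also have unit average energy: each column of $\bU$ has unit norm, so a random row of a fixed column has mean-square $1/N_b$. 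If the entries of $\mathbf{X}$ were independent, centred and of unit variance, Bai--Yin would give
\begin{equation*}
\frac{\sigma_{\min}(\mathbf{X})}{\sqrt{N_b}} \to \sqrt{p}-\sqrt{s} = \sqrt{1-\rho_J}-\sqrt{n_a/N_b}
\end{equation*}
almost surely whenever $s<p$. This is precisely \eqref{eq:mp_sigma_min}.

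\textbf{Steps in order.} First, I condition on $\calA$ and treat the $n_a$ selected columns as a fixed orthonormal family. Second, I view the independent uniform row selection $\calK$ as sampling $pN_b$ rows. Third, I write the Gram matrix as a sum of rank-one terms,
\begin{equation*}
\bU_{\calK,\calA}^H\bU_{\calK,\calA}=\sum_{n\in\calK}\mathbf{u}_n\mathbf{u}_n^H,
\end{equation*}
where $\mathbf{u}_n$ is the $n$-th row restricted to the columns in $\calA$. Its expectation is $p\,\bI_{n_a}$. Incoherence gives $\|\mathbf{u}_n\|^2\le c^2 s$ uniformly, which is the bounded-isotropic-row condition needed for matrix concentration. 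Fourth, I invoke the universality form of the Marchenko--Pastur law for Gram matrices of bounded, isotropic rows. This identifies the limiting spectral distribution of $N_b^{-1}\mathbf{X}^H\mathbf{X}$ as M-P with ratio $s/p$, scaled by $p$. Its left edge is $(\sqrt{p}-\sqrt{s})^2$, and taking square roots yields the claimed value of $\sigma_{\min}$. Fifth, I upgrade spectral convergence to edge convergence, with no eigenvalues escaping below the bulk. I do this by a Bai--Yin style moment or truncation argument, using the uniform entry bound $c/\sqrt{N_b}$ to control the high moments. Almost-sure convergence then follows by Borel--Cantelli from the resulting exponential tail bounds.

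\textbf{Main obstacle.} The hardest step is the fourth-to-fifth transition. The rows $\mathbf{u}_n$ are not independent, because sampling is without replacement and the full matrix is orthogonal: summing over all $N_b$ rows gives exactly $\bI$. This orthogonality creates a negative correlation between the kept block and the excised block, and it must be shown not to move the edge. The first attempt will be to compare sampling without replacement to Bernoulli($p$) row sampling via a standard coupling, and then treat the Bernoulli-selected rows as independent with the stated isotropy. Next, I will try to bound the contribution of the orthogonality constraint to the extreme eigenvalue as a vanishing perturbation, using the identity $\bU_{\calK,\calA}^H\bU_{\calK,\calA}=\bI-\bU_{\calK^c,\calA}^H\bU_{\calK^c,\calA}$. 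The plan is to show that this correction is $o(1)$ in operator norm relative to the i.i.d. surrogate. This is where the argument is most fragile. If the correction does not vanish, the edge shifts and the stated constant would need adjusting, so this step decides whether \eqref{eq:mp_sigma_min} holds as written.
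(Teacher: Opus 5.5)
Your last paragraph points at the step that decides everything, and that step fails. The correction from orthogonality is of order one, so the Marchenko--Pastur identification in your Steps~4--5 is wrong, and so is the claimed constant. The paper's sketch takes the same route (M-P universality for bounded entries, which presupposes independent entries) and has the same gap.

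Two checks show this. First, $\|\bU_{\calK,\calA}\|\le\|\bU\|=1$. The M-P law you would obtain has upper edge $\sqrt{1-\rho_J}+\sqrt{s}\approx 1.37$ at the paper's operating point $\rho_J=s=0.25$, so M-P cannot be the limiting law.

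Second, put $p=1-\rho_J$ and $G=\bU_{\calK,\calA}^H\bU_{\calK,\calA}$. For $j\neq k$ in $\calA$, $G_{jk}=\sum_{n\in\calK}\overline{U_{nj}}U_{nk}$ is a without-replacement sample sum from a population that sums to zero, because the columns are orthogonal. Hence $\bbE|G_{jk}|^2=\frac{|\calK|(N_b-|\calK|)}{N_b(N_b-1)}\sum_n|U_{nj}|^2|U_{nk}|^2$, which averages to $p(1-p)/N_b$. This gives $\frac{1}{n_a}\bbE\,\mathrm{tr}\,G^2=p^2+s\,p(1-p)+o(1)$ for every incoherent $\bU$. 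The i.i.d.\ surrogate instead gives the M-P value $p^2+sp$.

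The extra factor $\rho_J$ on every off-diagonal variance is exactly the effect of $G=\bI-\bU_{[N_b]\setminus\calK,\calA}^H\bU_{[N_b]\setminus\calK,\calA}$. The subtracted term has normalized trace $\rho_J$, so it does not vanish. Bernoulli coupling does not remove it: $\sum_n\xi_n\mathbf{u}_n\mathbf{u}_n^H=p\bI+\sum_n(\xi_n-p)\mathbf{u}_n\mathbf{u}_n^H$ has selector variance $p(1-p)$. The M-P surrogate corresponds to drawing rows with replacement, which excision does not do. Separately, the matrix-Chernoff bound behind Step~3 is vacuous in this proportional regime, since $n_a\exp(-\delta^2p/(2c^2s))$ does not tend to zero.

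The right tool is free probability. $G$ is the compression to the coordinates $\calA$ of $\bU^H\diag(\mathbf{1}_\calK)\bU$, so its limiting law is Wachter's law, the free multiplicative convolution of two Bernoulli laws with weights $p$ and $s$. For Haar $\bU$ this holds including convergence of the extreme eigenvalues (Jacobi ensemble). For random restrictions of the DFT it holds at the level of the limiting singular-value law (Farrell). The continuous part lives on $[\lambda_-,\lambda_+]$ with $\lambda_\pm=(\sqrt{p(1-s)}\pm\sqrt{s(1-p)})^2$, so the correct limit is
\[
\sigma_{\min}(\bU_{\calK,\calA})\to\sqrt{(1-\rho_J)(1-s)}-\sqrt{\rho_J s}.
\]
At $\rho_J=s=0.25$ this is $0.5$ rather than $0.366$, and there $\lambda_+=1$ exactly, consistent with the norm bound.

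Write $1-\rho_J=\sin^2\phi$ and $s=\sin^2\psi$. The true limit is then $\sin(\phi-\psi)$, which is strictly larger than $\sin\phi-\sin\psi=\sqrt{1-\rho_J}-\sqrt{s}$ whenever $0<s<1-\rho_J<1$. So the theorem as stated is false. What survives is the one-sided bound $\liminf_{N_b\to\infty}\sigma_{\min}(\bU_{\calK,\calA})\ge\sqrt{1-\rho_J}-\sqrt{s}$. That is all the upper bound \eqref{eq:prop_ber} needs, and it is consistent with the paper's simulations lying below the M-P curve at every $N_b$. But it has to be proved from the free-probability edge, plus a no-outlier argument for the almost-sure statement, not from M-P universality.
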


\begin{proof}[Proof sketch]
The scaled matrix $\sqrt{N_b/|\calK|}\,\bU_{\calK,\calA}$ has $|\calK|$ rows
and $n_a$ columns with bounded entries $O(1/\sqrt{N_b})$, hence
$O(1/\sqrt{|\calK|})$ after scaling. By the universality of the
Marchenko--Pastur law for random matrices with bounded
entries~\cite{tao2010randommatrix}, the empirical singular-value
distribution converges to the M-P density on
$[1-\sqrt{n_a/|\calK|},1+\sqrt{n_a/|\calK|}]$. Rescaling yields
\eqref{eq:mp_sigma_min}.
\end{proof}

\begin{theorem}[Proposed scheme BER]
\label{thm:prop_ber}
Under the receiver of Section~\ref{sec:system_model}, the per-active-bin
Bob BER (averaged over channel realizations, for $\rho_J < 1 - n_a/N_b$)
is
\begin{equation}
\label{eq:prop_ber}
\boxed{
P_b^U(\rho_J,s)
\approx Q\!\left(\sqrt{
\tfrac{2\alpha_K P_s \,(\sqrt{1-\rho_J}-\sqrt{s})^2}{N_0}
}\right),
}
\end{equation}
where $s = n_a/N_b$.
\end{theorem}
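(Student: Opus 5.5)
The plan is to chain three earlier results: the LS noise characterization, the Marchenko--Pastur bound (Theorem~\ref{thm:mp_bound}), and the SIC analysis under superincreasing PA (Theorem~\ref{thm:sic_errfree} and Proposition~\ref{prop:sic_eq_ml}). The bound is therefore a worst-coordinate effective-SNR approximation, not an exact expression, and I would state it as such.

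\textbf{Step 1: post-LS noise.} Under perfect excision, $\hat{\bx}_\calA=\bx_\calA+\tilde{\bw}$ with $\tilde{\bw}\sim$ Gaussian and covariance $N_0(\bM^H\bM)^{-1}$. With real-stacked LS, the real part carries half of this. Each coordinate of $\hat{\bx}_\calA$ is then the scalar model \eqref{eq:per_bin_noma}. The noise variance is bounded by $N_0/\sigma_{\min}^2(\bM)$; I would take the per-real-dimension variance as $N_0/(2\sigma_{\min}^2)$. That normalization is what produces the factor $2$ inside the $Q$-function, consistent with the T-NOMA clean expression at $\bU=\bI$, $\rho_J=0$.

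\textbf{Step 2: relate $\sigma_{\min}(\bM)$ to $\sigma_{\min}(\bU_{\calK,\calA})$.} With $\bM=\diag(\bh_\calK)\bU_{\calK,\calA}$, I would argue in two ways:
\begin{itemize}
\item Heuristically, $\bbE[\bM^H\bM]=\bU_{\calK,\calA}^H\bbE[\diag(|h_n|^2)]\bU_{\calK,\calA}=\bU_{\calK,\calA}^H\bU_{\calK,\calA}$, since $\bbE|h_n|^2=1$.
\item More carefully, $\bM$ is a $|\calK|\times n_a$ matrix whose rows are $h_n$ times incoherent rows with i.i.d.\ $h_n$. The same bounded-entry/universality argument behind Theorem~\ref{thm:mp_bound} then gives an M-P limit with the same aspect ratio $n_a/|\calK|$ and the same scaling, because $|h_n|^2$ has unit mean.
\end{itemize}
Either route gives $\sigma_{\min}^2(\bM)\approx(\sqrt{1-\rho_J}-\sqrt{s})^2$ almost surely. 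This requires $\rho_J<1-s$, which is exactly the theorem's hypothesis. The "averaged over channel realizations" qualifier enters here: fading is averaged into the conditioning rather than producing a Rayleigh-averaged $Q$-function, and I would note this self-averaging as the justification for the $\approx$.

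\textbf{Step 3: Bob's error on each coordinate.} Apply the SIC analysis under the genie-SIC assumption, i.e., perfect cancellation of users $1,\ldots,K-1$, whose error contributions are higher order in the superincreasing regime by Theorem~\ref{thm:sic_errfree} and Proposition~\ref{prop:sic_eq_ml}. Bob's residual is then $\sqrt{\alpha_K P_s}\,b_K+\tilde{w}$, giving
\[
Q\!\left(\sqrt{\alpha_K P_s/\mathrm{var}}\right)=Q\!\left(\sqrt{2\alpha_K P_s(\sqrt{1-\rho_J}-\sqrt{s})^2/N_0}\right).
\]

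\textbf{Main obstacle.} Step 2 is the hardest part: Theorem~\ref{thm:mp_bound} requires random $\calK$ independent of $\calA$, and the Rayleigh weighting changes the limiting law. Strictly, the limit becomes a free multiplicative convolution of M-P with an exponential law, whose left edge is not exactly $\sqrt{1-\rho_J}-\sqrt{s}$. I would first use the unit-mean heuristic and flag that the result is an approximation tight at first order. The worst-coordinate versus average-coordinate replacement is a second source of looseness, which I would absorb into the same $\approx$.
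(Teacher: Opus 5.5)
Your argument is correct at the same heuristic level as the paper's and follows the same chain: bound the per-coordinate LS noise by $N_0/\sigma_{\min}^2$, insert the edge from Theorem~\ref{thm:mp_bound}, and apply the genie-SIC $Q$-function. The only difference is that the paper writes the bound directly with $\sigma_{\min}(\bU_{\calK,\calA})$, silently dropping $\diag(\bh_\calK)$, so your Step~2 and your real-part factor-of-$2$ bookkeeping make explicit what the paper leaves implicit rather than changing the route. Drop the second bullet of Step~2, though: all entries in row $n$ of $\bM$ share the same $h_n$, so i.i.d.-entry universality does not apply, and in the M-P model the row-weighted limit (the deformed law of your last paragraph) has a left edge strictly below $(\sqrt{1-\rho_J}-\sqrt{s})^2$ for any non-constant unit-mean $|h_n|^2$; that leaves only the unit-mean heuristic, which is optimistic, since concavity of $\lambda_{\min}$ gives $\bbE\,\lambda_{\min}(\bM^H\bM)\le\lambda_{\min}(\bbE\,\bM^H\bM)$.
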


\begin{proof}[Proof sketch]
The LS noise variance on each recovered active coordinate is upper
bounded by $N_0/\sigma_{\min}^2(\bU_{\calK,\calA})$. Substituting
Theorem~\ref{thm:mp_bound} gives effective per-coordinate SNR
$\gamma_{\rm eff} = (P_s/N_0)(\sqrt{1-\rho_J}-\sqrt{s})^2$.
Bob's BER under genie-SIC with this effective SNR is the standard
$Q(\sqrt{2\alpha_K \gamma_{\rm eff}})$, yielding~\eqref{eq:prop_ber}.
\end{proof}

\begin{corollary}[No error floor]
\label{cor:no_floor}
For any fixed $\rho_J < 1$ and $s$ satisfying $s < 1-\rho_J$, the
proposed scheme's BER \eqref{eq:prop_ber} satisfies
$\lim_{P_s/N_0\to\infty} P_b^U = 0$.
\end{corollary}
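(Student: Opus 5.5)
The argument works directly with the closed form \eqref{eq:prop_ber} of Theorem~\ref{thm:prop_ber}. Write $\gamma \triangleq P_s/N_0$ and $c(\rho_J,s) \triangleq 2\alpha_K(\sqrt{1-\rho_J}-\sqrt{s})^2$, so that $P_b^U = Q(\sqrt{c\,\gamma})$. The whole proof then reduces to two facts: $c>0$ strictly, and $c$ does not depend on $\gamma$. Once both are established, $c\gamma\to\infty$ as $\gamma\to\infty$. Because $Q$ is continuous and monotonically decreasing with $Q(x)\to 0$ as $x\to\infty$, the limit is $0$.

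\textbf{Positivity of $c$.} First, $\alpha_K>0$ by Definition~\ref{def:superincreasing}; explicitly, the normalization formula for $\alpha_K^{-1}$ gives a finite positive value for any fixed $K$ and $\varepsilon$. Second, the hypothesis $s<1-\rho_J$ together with $s\ge 0$ and $\rho_J<1$ gives $\sqrt{s}<\sqrt{1-\rho_J}$, so the M-P gap $\sqrt{1-\rho_J}-\sqrt{s}$ is strictly positive. This is exactly the regime in which Theorem~\ref{thm:mp_bound} yields a nonzero limiting $\sigma_{\min}(\bU_{\calK,\calA})$. Hence $c>0$. For a quantitative version, I would add the Chernoff bound $Q(x)\le \tfrac12 e^{-x^2/2}$, which gives $P_b^U\le \tfrac12\exp(-c\gamma/2)$ and shows the decay is exponential in SNR.

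\textbf{Independence from $\gamma$ and from the jammer.} The structural point is that $c$ involves neither $\Gamma$ nor $P_J$, because perfect excision removes the jammer term from $\bz_\calK$. This is the contrast with \eqref{eq:tnoma_floor}, where the $\Gamma P_s$ term in the denominator grows with $P_s$ and freezes the argument of $Q$. In the proof I will state explicitly that excision is assumed perfect, as in Stage~1, so that no residual jammer power scales with $P_s$. I will also state that $\alpha_K$, $\rho_J$ and $s$ are held fixed while $P_s/N_0$ grows.

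\textbf{Main obstacle.} The delicate step is not the limit itself but justifying that the quantity in \eqref{eq:prop_ber} remains the governing BER at high SNR. The genie-SIC assumption hides possible error propagation from stages $k<K$. I would handle this with the per-stage union bound stated after Proposition~\ref{prop:sic_eq_ml}: each earlier stage errs with probability at most $\Pr(|\tilde w|>\delta_k\sqrt{P_s})$. The post-LS noise variance is at most $N_0/\sigma_{\min}^2$, and $\sigma_{\min}$ is bounded away from zero by the same gap. Each of these terms is therefore at most a constant times $Q\bigl(\sqrt{2\delta_k^2(\sqrt{1-\rho_J}-\sqrt{s})^2\gamma}\bigr)$. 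Since $\delta_k>0$ under superincreasing PA, every such term also vanishes, and a finite sum of $K$ vanishing terms vanishes.
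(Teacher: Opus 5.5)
Your argument is correct and matches the paper's implicit reasoning: the paper gives no separate proof and reads the corollary directly off \eqref{eq:prop_ber}, since $s<1-\rho_J$ makes the M-P gap $\sqrt{1-\rho_J}-\sqrt{s}$ strictly positive and independent of $\Gamma$, so the argument of $Q(\cdot)$ grows without bound as $P_s/N_0\to\infty$. Your Chernoff rate and SIC-propagation paragraph go beyond what the paper provides and are not needed for the statement as posed, which concerns the genie-SIC closed form; note only that the propagation bound inherits the paper's idealization of using $\sigma_{\min}(\bU_{\calK,\calA})$ in place of $\sigma_{\min}(\mathrm{diag}(\bh_\calK)\bU_{\calK,\calA})$.
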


This is the central qualitative distinction from T-NOMA: the proposed
scheme has no jammer-induced error floor (cf.~\eqref{eq:tnoma_floor}).
A complementary pairwise-error-probability (PEP) analysis with explicit
coding-gain interpretation is provided in Appendix~\ref{app:pep}, where
the minimum-distance argument shows that the proposed scheme attains a
strictly positive coding gain
$G_c^{\rm Bob}=\alpha_K\cdot(N_b-n_J-n_{\rm cols}-1)/N_b$ independent
of the jammer-to-signal ratio $\Gamma$, while T-NOMA's coding gain
vanishes as $P_s\to\infty$ for any $\Gamma>0$.

\subsection{Refined BER Expressions for Monte Carlo Overlays}
\label{sec:refined_ber}

The expressions of Theorems~\ref{thm:tnoma_ber}--\ref{thm:prop_ber} are
asymptotic in either $P_s/N_0$ or $N_b$. For the numerical results in
Section~\ref{sec:numerical}, we use four refinements that improve
theory-simulation agreement from $\sim 10$~dB (asymptotic) to
$\sim 1$--$2$~dB (finite parameters): (i)~the finite-$N_b$ post-LS noise
variance \eqref{eq:sigmae_lscol}; (ii)~the pattern-averaged SIC error
probability \eqref{eq:patavg_sic}; (iii)~the finite-$\Gamma$ T-NOMA
active-target floor \eqref{eq:tnoma_active_finite}; and (iv)~the
$\sigma_{\min}=0$ catastrophe \eqref{eq:floor_sigmin_zero} for fixed
structured patterns under oracle attack. Their derivations are given in
Appendix~\ref{app:refined_ber}.

\subsection{Three-Threshold Hierarchy}

The recoverability of the LS problem depends on three increasingly
permissive thresholds on $\rho_J$:

\begin{enumerate}
\item[(T1)] \textbf{Conditioning threshold:} The M-P bound predicts that
for a tolerable SNR penalty $X$ dB,
\begin{equation}
\rho_J^{\rm cond}(s,X) = 1-\big(\sqrt{s}+10^{-X/20}\big)^2.
\end{equation}

\item[(T2)] \textbf{Rank threshold (LS feasibility):}
$\rho_J^\star = 1 - s$. Beyond this, $\bU_{\calK,\calA}$ has fewer rows
than columns and is necessarily rank-deficient.

\item[(T3)] \textbf{Compressed-sensing threshold (Donoho--Tanner):}
$\rho_J^{\star\star} \approx 1 - 2s\log(1/s)$, beyond which even
$\ell_1$-minimization fails to recover sparse $\bx$ exactly.
\end{enumerate}

The relations $\rho_J^{\rm cond} < \rho_J^\star < \rho_J^{\star\star}$
provide a hierarchy: in the regime $\rho_J < \rho_J^{\rm cond}$, the
proposed scheme operates with bounded noise penalty; for
$\rho_J^{\rm cond} \le \rho_J < \rho_J^\star$, LS recovery succeeds but
the noise penalty grows; for $\rho_J^\star \le \rho_J < \rho_J^{\star\star}$,
LS fails but CS-based recovery (with sparsity prior) may still succeed.
The (T1)--(T3) hierarchy operationalizes contribution \textbf{C4} by
translating the abstract M-P conditioning bound into a closed-form
design rule (Cor.~\ref{cor:operating_region}) that the system designer
can directly evaluate.

\subsection{Operating-Region Design Rule}

\begin{corollary}[Sparsity operating-region design rule]
\label{cor:operating_region}
For a tolerable noise-inflation budget of $X$ dB at jammer fraction
$\rho_J$, the maximum permissible sparsity ratio is
\begin{equation}
\label{eq:design_rule}
\boxed{s_{\max}(\rho_J,X) = \big(\sqrt{1-\rho_J}-10^{-X/20}\big)^2.}
\end{equation}
\end{corollary}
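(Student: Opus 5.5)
The design rule follows by inverting the Marchenko--Pastur noise-inflation factor from Theorem~\ref{thm:mp_bound} and Theorem~\ref{thm:prop_ber}.

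\textbf{Step 1: the quantity to be bounded.} By Theorem~\ref{thm:prop_ber}, the only difference between the proposed scheme's BER and the unjammed, unspread genie-SIC BER $Q(\sqrt{2\alpha_K P_s/N_0})$ is that the effective SNR is multiplied by $\sigma_{\min}^2(\bU_{\calK,\calA})$. In the high-dimensional limit, Theorem~\ref{thm:mp_bound} gives this factor as $(\sqrt{1-\rho_J}-\sqrt{s})^2$. So I define the noise-inflation penalty in dB as
\[
\Delta(\rho_J,s) \triangleq -20\log_{10}\bigl(\sqrt{1-\rho_J}-\sqrt{s}\bigr).
\]
This is exactly the worst-case per-coordinate LS noise variance $N_0/\sigma_{\min}^2$ relative to $N_0$. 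The budget condition is $\Delta(\rho_J,s)\le X$.

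\textbf{Step 2: solve for $s$.} On the admissible region $s<1-\rho_J$ required by Theorem~\ref{thm:mp_bound}, the base $\sqrt{1-\rho_J}-\sqrt{s}$ is positive. The map $s\mapsto \sqrt{1-\rho_J}-\sqrt{s}$ is strictly decreasing, so $\Delta$ is strictly increasing in $s$. The constraint is therefore a single upper bound on $s$. Exponentiating gives
\[
\sqrt{1-\rho_J}-\sqrt{s}\ \ge\ 10^{-X/20},
\]
which is equivalent to
\[
\sqrt{s}\le \sqrt{1-\rho_J}-10^{-X/20}.
\]
Squaring is valid because both sides are nonnegative whenever $10^{-X/20}\le\sqrt{1-\rho_J}$. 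This yields \eqref{eq:design_rule}.

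\textbf{Step 3: consistency checks.} Since $10^{-X/20}>0$, we get $s_{\max}<1-\rho_J$, so the rule automatically stays below the rank threshold $\rho_J^\star$ of (T2). Rearranging the equality case for $\rho_J$ recovers $\rho_J^{\rm cond}(s,X)$ of (T1), so the corollary is just (T1) solved in the other variable.

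If $10^{-X/20}>\sqrt{1-\rho_J}$, no sparsity ratio meets the budget and $s_{\max}$ should be read as $0$, i.e.\ the operating region is empty. I will state this caveat explicitly.

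\textbf{Main obstacle.} The algebra is trivial. The real issue is interpretive: the rule inherits the asymptotic, almost-sure nature of Theorem~\ref{thm:mp_bound}. It also uses a worst-case (smallest-singular-value) notion of penalty, whereas the channel-averaged BER involves $\mathrm{diag}(\bh_\calK)$ as well. I will handle this the same way as Theorem~\ref{thm:prop_ber}: the design rule is understood in the $N_b\to\infty$ limit, with the fading contribution absorbed into the same approximation. The corollary then states the operating region in which the conditioning loss is guaranteed to be at most $X$~dB.
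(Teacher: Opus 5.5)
Your proposal is correct and follows the same route as the paper, which gets the rule by setting the M-P effective-SNR factor $(\sqrt{1-\rho_J}-\sqrt{s})^2$ of Theorem~\ref{thm:prop_ber} against a $10^{-X/10}$ budget, i.e., by solving threshold (T1) for $s$ instead of $\rho_J$. Your explicit caveat that the operating region is empty (so $s_{\max}=0$) when $10^{-X/20}>\sqrt{1-\rho_J}$ is a worthwhile refinement, because the boxed formula taken literally returns a spurious positive value there.
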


Numerical evaluations of this rule for typical values appear in
Table~\ref{tab:design_rule}.

\begin{table}[t]
\caption{Design rule $s_{\max} = (\sqrt{1-\rho_J}-10^{-X/20})^2$.}
\label{tab:design_rule}
\centering
\begin{tabular}{lcccc}
\toprule
$X$ (dB tolerance) & $\rho_J=0.1$ & $\rho_J=0.2$ & $\rho_J=0.3$ & $\rho_J=0.4$ \\
\midrule
3 & 0.18 & 0.04 & --- & --- \\
6 & 0.34 & 0.16 & 0.04 & --- \\
10 & 0.57 & 0.34 & 0.18 & 0.06 \\
20 & 0.81 & 0.65 & 0.49 & 0.33 \\
\bottomrule
\end{tabular}
\end{table}

The rule constitutes a design knob: a system targeting robust performance
against $\rho_J=0.2$ jamming with $X=6$~dB tolerance must operate at
$s \le 0.16$. Our default operating point of $s = n_a/N_b = 0.25$
corresponds to $X \approx 10$~dB tolerance at $\rho_J = 0.2$.

\section{Randomized Active-Set Protocol and Defense in Depth}
\label{sec:c5_protocol}

\subsection{Threat: Pattern-Aware Adversaries}

If the active set $\calA$ is fixed across frames and known (or learnable)
to the adversary, the jammer can target $\calA$ directly: the oracle
threat (J3) sets $\calJ = \calA$. Under this attack:

\begin{itemize}
\item For T-NOMA: every active bin is jammed, BER $\to 1/2$ catastrophically.
\item For our scheme with fixed $\calA$: the kept set is
$\calK = [N_b]\setminus \calA$, and recovery depends on the conditioning of
$\bU_{[N_b]\setminus\calA, \calA}$. For some structured choices of $\calA$,
this submatrix is \emph{rank deficient} (Section~\ref{sec:sylvester}),
breaking the scheme.
\end{itemize}

\subsection{The C5 Protocol}

We propose:

\begin{definition}[Randomized active-set protocol]
\label{def:c5}
Transmitter and legitimate receiver share a pseudo-random seed $s_0$. At
frame index $t$, both parties independently generate
$\calA_t = \mathrm{RandSubset}(N_b, n_a; \mathrm{PRNG}(s_0, t))$, where
$\mathrm{RandSubset}$ samples a uniformly random $n_a$-element subset of
$[N_b]$.
\end{definition}

Under this protocol, the active set is a \emph{session secret}: an
adversary observing the transmission must either (i)~estimate $\calA_t$
from the transmission (forced to use a partial-band-equivalent attack)
or (ii)~compromise the seed.

\textbf{Cryptographic assumption.} Throughout we assume the seed $s_0$ is
shared via a pre-distributed symmetric key (e.g., established once during
session setup) and that the PRNG implementation (e.g., AES-CTR or
ChaCha20) is computationally indistinguishable from a uniformly random
oracle for any polynomial-time adversary. This places key management on
the same footing as standard symmetric-cipher communication and is
orthogonal to the physical-layer analysis presented here.

\subsection{Defense in Depth}

A surprising feature of the C5 protocol is that even under seed
compromise, the proposed scheme retains floor BER:

\begin{theorem}[Defense in depth under oracle attack]
\label{thm:defense_in_depth}
Under the oracle jammer (J3) with $\calJ_t = \calA_t$ and $\calA_t$
drawn per Definition~\ref{def:c5}, the Bob BER of the proposed scheme
satisfies the M-P bound
\eqref{eq:prop_ber} with the same effective SNR factor
$(\sqrt{1-\rho_J}-\sqrt{s})^2$ as under partial-band random jamming.
\end{theorem}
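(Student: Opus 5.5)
Under the oracle attack the kept set is exactly $\calK_t=[N_b]\setminus\calA_t$. The plan is to show that the submatrix $\bU_{[N_b]\setminus\calA_t,\calA_t}$ obeys the same Marchenko--Pastur asymptotics as in Theorem~\ref{thm:mp_bound}. Once that is established, the BER statement follows verbatim from the proof sketch of Theorem~\ref{thm:prop_ber}.

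The obstacle is that Theorem~\ref{thm:mp_bound} assumes $\calK$ and $\calA$ are \emph{independent}, whereas here $\calK_t$ is a deterministic function of $\calA_t$. \textbf{Step 1 (Gram identity).} By unitarity, the columns of $\bU$ indexed by $\calA_t$ are orthonormal, so
\begin{equation*}
\bU_{\calK_t,\calA_t}^H\bU_{\calK_t,\calA_t} = \bI_{n_a} - \bU_{\calA_t,\calA_t}^H\bU_{\calA_t,\calA_t}.
\end{equation*}
This reduces the problem to bounding $\sigma_{\max}$ of the principal $n_a\times n_a$ submatrix $\bU_{\calA_t,\calA_t}$. Crucially, that submatrix is a random row/column selection of an incoherent unitary whose row set \emph{coincides} with its column set.

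\textbf{Step 2 (principal-submatrix spectrum).} Here the randomness of $\calA_t$ (Definition~\ref{def:c5}, PRNG modeled as uniform) does the work. The plan is to invoke the same bounded-entry universality argument as in Theorem~\ref{thm:mp_bound}, now applied to a random principal submatrix with row fraction $s$ and column fraction $s$. The target is the statement that the spectrum of $\bU_{\calA_t,\calA_t}^H\bU_{\calA_t,\calA_t}$ concentrates with top edge $\lambda_+$, matching an independent row draw of size $n_a$ and column draw of size $n_a$. The justification is that, by incoherence, the diagonal overlap between row and column indices contributes only $n_a$ entries of size $O(1/\sqrt{N_b})$, which is a perturbation of operator norm $O(\sqrt{n_a/N_b}\cdot c)$ that is negligible relative to the bulk in the limit. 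A cleaner route, if that perturbation argument is too coarse, is to appeal to free-probability results for randomly compressed Haar (or asymptotically free, incoherent) unitaries. These give $\sigma_{\min}^2(\bU_{\calK,\calA})\to(\sqrt{(1-s)(1-s)}-\sqrt{s\cdot s})^2$-type Jacobi/Wachter edges, which coincide with the MP expression $(\sqrt{1-\rho_J}-\sqrt{s})^2$ under $\rho_J=s$, provided one takes the paper's convention of identifying the oracle jammer fraction with $s$.

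\textbf{Step 3 (conclude).} Substituting the resulting $\sigma_{\min}$ limit into the LS noise bound $N_0/\sigma_{\min}^2(\bM)$ gives $\gamma_{\rm eff}=(P_s/N_0)(\sqrt{1-\rho_J}-\sqrt s)^2$, where Rayleigh fading enters exactly as in Theorem~\ref{thm:prop_ber}. Perfect excision removes $\bj$ because $\calJ_t=\calA_t$ is excised entirely. Genie-SIC, justified by Proposition~\ref{prop:sic_eq_ml}, then yields \eqref{eq:prop_ber}. I would also note that because $\calA_t$ is uniformly random, structured Sylvester-type rank-deficient patterns occur with vanishing probability, so the almost-sure limit applies frame by frame.

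Step 2 is the main obstacle. The MP universality cited for Theorem~\ref{thm:mp_bound} does not directly cover dependent row/column selection. For Hadamard or DFT matrices, the algebraic structure means concentration must come entirely from the randomness of $\calA_t$. I would first try the diagonal-overlap perturbation argument, and fall back on the asymptotic-freeness (Jacobi ensemble) edge if the constants do not match.
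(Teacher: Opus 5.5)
Your Step~1 identity is correct. You are also right that the paper's own argument silently assumes $\calK_t$ and $\calA_t$ are independent. That argument says the complement of a uniform $n_a$-subset is a uniform $(N_b-n_a)$-subset, ``hence'' $\bU_{\calK_t,\calA_t}$ is identical in distribution to the independent-subset case of Theorem~\ref{thm:mp_bound}. Independence fails here; the paper's own uniform-spaced example shows that, for a fixed $\calA$, complementary and independent row selections can behave completely differently.

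The gap is in your Step~2, where neither route works as stated. The ``diagonal-overlap perturbation'' is not a valid comparison. $\bU_{\calA_t,\calA_t}$ and $\bU_{\mathcal{B},\calA_t}$, with $\mathcal{B}$ an independent copy of $\calA_t$, are not related by changing $n_a$ diagonal entries. Deleting the diagonal costs only $\max_a|U_{aa}|\le c/\sqrt{N_b}$ in operator norm, but what remains is still an off-diagonal block with coupled index sets. Moreover, the size $c\sqrt{n_a/N_b}=c\sqrt{s}$ you quote is $\Theta(1)$, so it would not be negligible anyway. The standard tool for this coupling is decoupling (Bourgain--Tzafriri, Tropp), but it controls $\|\bU_{\calA_t,\calA_t}\|$ only up to a multiplicative constant, so it cannot deliver a sharp edge.

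The free-probability fallback contains an algebra error, and it makes your Step~2 target false. Suppose $\bU$ behaves like Haar, i.e.\ $\mathbf{D}_t=\diag(\mathbf{1}_{\calA_t})$ is asymptotically free from $\bU^H\mathbf{D}_t\bU$. Then $\|\bU_{\calA_t,\calA_t}\|^2\to 4s(1-s)$, and your Gram identity gives $\sigma_{\min}^2(\bU_{\calK_t,\calA_t})\to(1-2s)^2$. That is exactly what your Wachter expression $(\sqrt{(1-s)(1-s)}-\sqrt{s\cdot s})^2$ evaluates to, and it does \emph{not} coincide with $(\sqrt{1-s}-\sqrt{s})^2=1-2\sqrt{s(1-s)}$: at $s=1/4$ the two are $0.25$ and about $0.134$.

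The same computation for independent $\calK,\calA$ gives $\sigma_{\min}\to\sqrt{(1-\rho_J)(1-s)}-\sqrt{\rho_J s}$, which equals $1-2s$ at $\rho_J=s$. So the oracle and partial-band cases do share a limit. That limit is not the edge asserted in Theorem~\ref{thm:mp_bound}, because the i.i.d.\ M-P model ignores that the columns of $\bU$ are exactly orthonormal.

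What you can prove is an inequality. Whenever $s<1-\rho_J$, one has $\sqrt{(1-\rho_J)(1-s)}-\sqrt{\rho_J s}\ge\sqrt{1-\rho_J}-\sqrt{s}$. For $\rho_J=s$ this is just $1-2s=(\sqrt{1-s}-\sqrt{s})(\sqrt{1-s}+\sqrt{s})\ge\sqrt{1-s}-\sqrt{s}$, since $\sqrt{1-s}+\sqrt{s}\ge 1$. Hence \eqref{eq:prop_ber} holds as a conservative bound, and Step~2 should be aimed at that inequality rather than at equality of edges.

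Even for the bound, with deterministic Hadamard or DFT you still owe two things:
\begin{itemize}
\item[(i)] asymptotic freeness of $\mathbf{D}_t$ from $\bU^H\mathbf{D}_t\bU$, with the \emph{same} random set on both sides;
\item[(ii)] a no-outlier (strong convergence) statement, since convergence of the empirical spectrum only bounds $\sigma_{\min}$ from above.
\end{itemize}
Your remark that Sylvester-type patterns have vanishing probability rules out exact rank deficiency, but not near-singular frames. For Haar $\bU$ both ingredients are available. Bi-invariance under permutations makes $\bU_{[N_b]\setminus\calA_t,\calA_t}$ equal in law to a fixed $(N_b-n_a)\times n_a$ truncation. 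That is the one setting in which the paper's ``identical in distribution'' sentence is literally true.
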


\begin{proof}[Proof sketch]
When $\calJ_t = \calA_t$, the kept set $\calK_t = [N_b]\setminus\calA_t$
is the complement of a uniformly random $n_a$-subset, hence is itself a
uniformly random $(N_b-n_a)$-subset of $[N_b]$. The submatrix
$\bU_{\calK_t,\calA_t}$ is therefore a random row/column subselection of
$\bU$, identical in distribution to the random-subset case of
Theorem~\ref{thm:mp_bound}. The M-P bound applies, yielding
\eqref{eq:prop_ber}.
\end{proof}

The protocol thus provides two independent layers of protection:
(L1)~secrecy of $\calA_t$ prevents intelligent targeting in the first place;
(L2)~even if (L1) fails, the random-subset structure of $\calA_t$ guarantees
conditioning by M-P universality. Both layers must fail (which would
require the seed compromise AND the M-P bound to fail simultaneously) for
the scheme to break.

\subsection{The Sylvester Replication Trap}
\label{sec:sylvester}

We finally identify a previously-unrecognized vulnerability of
\emph{deterministic} active patterns under \emph{random} (non-adversarial)
partial-band jamming: certain algebraic patterns suffer rank-deficient
LS sub-systems with non-negligible probability.

For the Sylvester--Hadamard matrix $\bH_{N_b}$ with $N_b=2^q$, the entries
factor as $H[r,c] = (-1)^{\langle r-1, c-1\rangle_2}$ where $\langle\cdot,\cdot\rangle_2$
denotes the binary inner product. If the active column set $\calA$
satisfies $c-1 \in \mathbb{F}_2^q[\mathbf{0}_a, \cdot]$ for some
$a$-dimensional zero-prefix (e.g., $\calA = \{1,\ldots,2^a\}$ with the
high $q-a$ bits all zero), then $\bH[\cdot,\calA]$ depends only on the
low $a$ bits of the row index, hence has only $2^a$ distinct rows.
Under a random row subselection $\calK$ of size $|\calK|=(1-\rho_J)N_b$,
some of these $2^a$ row patterns may be entirely missed, causing
$\bH_{\calK,\calA}$ to be rank-deficient.

\begin{proposition}[Sylvester replication probability]
\label{prop:sylvester}
For $\calA = \{1,\ldots,n_a\}$ with $n_a=2^a$ and $N_b=2^q$, the
probability of $\bH_{\calK,\calA}$ being rank-deficient under
random $\calK$ with $|\calK|=N_b-n_J$ is
\begin{equation}
\label{eq:sylvester_prob}
\Pr[\mathrm{rank-def}] \approx n_a \cdot \frac{\binom{N_b-N_b/n_a}{n_J}}{\binom{N_b}{n_J}}
\end{equation}
(union bound over the $n_a$ distinct row patterns, each appearing
$N_b/n_a$ times).
\end{proposition}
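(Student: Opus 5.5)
The plan is to make the replication structure exact, turn rank deficiency into a covering event, and then bound that event by a union bound.

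\textbf{Step 1: the row classes of $\bH[\cdot,\calA]$.} Write each row index as $r-1=(u,v)$ with $u\in\mathbb{F}_2^{q-a}$ the high bits and $v\in\mathbb{F}_2^{a}$ the low bits. Every active column has $c-1=(\mathbf{0},w)$ with $w\in\mathbb{F}_2^a$. Hence
\[
H[r,c]=(-1)^{\langle v,w\rangle_2}.
\]
So the row of $\bH[\cdot,\calA]$ indexed by $r$ equals the row of $\bH_{n_a}$ indexed by $v$. This partitions $[N_b]$ into $n_a$ classes $C_v$, each of size $N_b/n_a$. Within a class all rows of $\bH[\cdot,\calA]$ are identical, and across classes the distinct rows are exactly the $n_a$ rows of $\bH_{n_a}$, which are mutually orthogonal.

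\textbf{Step 2: an exact rank characterization.} The row space of $\bH_{\calK,\calA}$ is spanned by the distinct rows $\{\bH_{n_a}[v,\cdot]: C_v\cap\calK\neq\emptyset\}$. These rows are orthogonal, so the rank equals the number of classes that $\calK$ hits. Therefore
\[
\bH_{\calK,\calA}\ \text{is rank-deficient} \iff \calJ=[N_b]\setminus\calK \ \text{contains some whole class } C_v .
\]
This is the main conceptual step. The structure (replicated copies of an orthogonal $n_a\times n_a$ matrix) must be identified correctly so that rank loss is exactly this combinatorial covering event.

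\textbf{Step 3: probability of one class being covered.} Here $\calJ$ is a uniform $n_J$-subset of $[N_b]$. For a fixed $v$, the event $C_v\subseteq\calJ$ has probability
\[
\binom{N_b-N_b/n_a}{\,n_J-N_b/n_a\,}\Big/\binom{N_b}{n_J}.
\]
This can be rewritten in complement form, since $\calK$ is a uniform $(N_b-n_J)$-subset that must avoid $C_v$:
\[
\binom{N_b-N_b/n_a}{N_b-n_J}\Big/\binom{N_b}{N_b-n_J}.
\]
I would check this against the paper's displayed expression \eqref{eq:sylvester_prob}. The two agree once the binomial is read as "choosing the kept set from the complement of a class." I would note that if $n_J<N_b/n_a$ the probability is exactly zero.

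\textbf{Step 4: union bound.} Summing over the $n_a$ classes gives
\[
\Pr[\mathrm{rank\mbox{-}def}]\le n_a\cdot \Pr[C_v\subseteq\calJ].
\]
The bound is tight (hence the "$\approx$") when the single-class probability is small. By Bonferroni, the second-order correction involves pairs of classes both covered, which has probability $\binom{N_b-2N_b/n_a}{\cdot}/\binom{N_b}{\cdot}$ and is of smaller order.

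The only obstacle I expect is bookkeeping: matching which binomial (kept versus jammed) the displayed formula uses. I would resolve it by stating the event as "$\calK$ misses $C_v$" and presenting the approximation as the first Bonferroni term.
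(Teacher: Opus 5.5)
Your Steps~1, 2 and~4 are correct and follow the paper's reasoning. The active columns see only the low $a$ bits of the row index, so $\bH[\cdot,\calA]$ consists of $N_b/n_a$ copies of $\bH_{n_a}$. Rank is lost when $\calK$ misses an entire class $C_v$, and a union bound over the $n_a$ classes gives the estimate. Your Step~2 is stronger than what the paper writes. Orthogonality of the distinct rows gives the converse, namely that rank deficiency forces some class to be missed. That converse is what makes the union bound an upper bound on the rank-deficiency probability at all; the paper only asserts the forward direction. Your single-class probability $\binom{N_b-N_b/n_a}{n_J-N_b/n_a}/\binom{N_b}{n_J}=\binom{N_b-N_b/n_a}{N_b-n_J}/\binom{N_b}{N_b-n_J}$ is also correct.

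The error is the reconciliation at the end of Step~3: this quantity does \emph{not} agree with \eqref{eq:sylvester_prob}. The denominators match by symmetry, but the numerators do not, because the top index is $N_b-N_b/n_a$ rather than $N_b$: $\binom{N_b-N_b/n_a}{N_b-n_J}=\binom{N_b-N_b/n_a}{n_J-N_b/n_a}\neq\binom{N_b-N_b/n_a}{n_J}$. The displayed ratio counts jammed sets that avoid $C_v$, so it equals $\Pr[C_v\subseteq\calK]$, which is the opposite event.

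Your own remark that the probability vanishes for $n_J<N_b/n_a$ already contradicts the display, which is positive there. At $n_J=0$ the display equals $n_a$, although $\bH_{\calK,\calA}=\bH_{[N_b],\calA}$ then has full rank. At the paper's example $N_b=64$, $n_a=n_J=16$, the display gives $16\cdot\frac{48\cdot47\cdot46\cdot45}{64\cdot63\cdot62\cdot61}\approx 4.9$. Your formula gives $16\cdot\frac{16\cdot15\cdot14\cdot13}{64\cdot63\cdot62\cdot61}\approx 0.046$, which is exactly the 4.6\% the paper quotes.

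So the display has a typo: its lower index should be $|\calK|=N_b-n_J$. State that what you prove is the bound $n_a\binom{N_b-N_b/n_a}{N_b-n_J}/\binom{N_b}{N_b-n_J}$, tight up to the second-order Bonferroni term. Do not claim agreement with the formula as printed.
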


For $N_b=64$, $n_a=16$, $n_J=16$, evaluation of
\eqref{eq:sylvester_prob} yields $\Pr[\mathrm{rank-def}] \approx 4.6\%$,
contributing approximately $0.5\times 0.046 = 0.023$ to the BER floor.
This precisely matches the empirical $\sim 0.027$ floor we observe in
Section~\ref{sec:numerical} for clustered and bit-reversed patterns.

\subsection{Implications for Pattern Recommendation}

The Sylvester replication trap and the oracle-attack vulnerability of
$\sigma_{\min}=0$ patterns together rule out several seemingly-natural
deterministic choices. Table~\ref{tab:pattern_recommendations} summarizes.

\begin{table*}
\caption{Active-pattern recommendations.}
\label{tab:pattern_recommendations}
\centering
\begin{tabular}{lccc}
\toprule
Pattern & Partial-band & Oracle & Verdict \\
\midrule
Uniform-spaced & safe (M-P) & {\bf broken} ($\sigma_{\min}=0$) & unsafe \\
Clustered & {\bf trap} (Sylvester) & safe & unsafe \\
Bit-reversed & {\bf trap} (Sylvester) & safe & unsafe \\
Random-fixed & safe & safe$^*$ & acceptable$^*$ \\
\textbf{C5 (per-frame random)} & \textbf{safe} & \textbf{safe (defense in depth)} & \textbf{recommended} \\
\bottomrule
\end{tabular}
{\footnotesize $^*$Provided the fixed pattern is treated as a session secret.}
\end{table*}

The randomized active-set protocol is therefore recommended on
\emph{two independent grounds}: jamming-pattern security \emph{and}
random-matrix conditioning robustness.

\section{Cluster Design: From NOMA to OMA-Friendly NOMA}
\label{sec:cluster_design}

The constructions of Sections~\ref{sec:super_pa}--\ref{sec:c5_protocol} produce a
robust single-cluster NOMA system for $K\le 4$, but fail to scale: at
$K\ge 6$, the superincreasing recursion $\sqrt{\alpha_k} \propto
(2+\varepsilon)^{K-k-1}$ depletes Bob's allocated power
super-exponentially, placing $\alpha_K$ below the SIC-decodability
threshold (e.g., $\alpha_K \approx 9\times 10^{-5}$ at $K=8,
\varepsilon=0.1$). We resolve this not by patching the within-cluster PA
but by introducing a second design axis: \emph{cluster partitioning}.
The resulting architecture interpolates between pure NOMA and pure OMA
on a single integer-valued knob $K_g$ (the cluster size), and we show
that the optimal operating point is $K_g=2$ across all $K_{\rm tot}$.
The architecture is best described as
\textbf{OMA-friendly NOMA}: power-domain NOMA within each cluster of
$K_g=2$ users, bin-domain OMA between clusters.

\subsection{Cluster Partition Taxonomy}
\label{sec:cluster_taxonomy}

Pairwise NOMA clustering has been studied in benign cell-free and
downlink scenarios~\cite{cluster_free_noma_2022,cellfree_noma_clustering_2025},
where the motivation is SIC complexity reduction and inter-cluster
interference management. Here we analyze cluster design under
\emph{jamming} with the LS-excision receiver, where the optimal
cluster size emerges from a different trade-off: post-jammer
conditioning of the joint LS system versus Bob's within-cluster power
fraction.

Partition $K_{\rm tot}$ users into $G$ disjoint clusters of size $K_g =
K_{\rm tot}/G$ each (we assume $K_g$ is integer; the general case is
a straightforward extension). Two orthogonal axes characterize the
partition.

\textbf{Within-cluster (NOMA axis).} Each cluster carries
$K_g$ users with superincreasing PA $\{\alpha_k^{(g)}\}_{k=1}^{K_g}$,
$\sum_k \alpha_k^{(g)}=1$, decoded by per-bin SIC.

\textbf{Across-cluster (resource-sharing axis).} Three canonical
flavors:
\begin{itemize}
\item \textbf{($\alpha$) Co-channel:} All $G$ clusters share the same
active set $\calA$. Each cluster $g$ is precoded by a distinct unitary
$\bU^{(g)}$ and assigned power $P_s/G$. The transmitted DD-domain vector
is $\by = \sum_{g=1}^G \bU^{(g)}\bx^{(g)}$, and after jammer excision
the receiver solves the joint LS
\begin{equation}
\label{eq:cochannel_LS}
\hat{\bx}_\calA =
\argmin_{\bx\in\bbR^{G n_a}}
\bigl\| \bz_\calK - \bM_{\rm stack}\,\bx\bigr\|^2,
\end{equation}
where
$\bM_{\rm stack} = [\mathrm{diag}(\bh_\calK)\bU^{(1)}_{\calK,\calA} \mid \cdots \mid
\mathrm{diag}(\bh_\calK)\bU^{(G)}_{\calK,\calA}]$
is the stacked $(N_b-n_J)\times G n_a$ excised channel matrix. Each
recovered cluster's $n_a$ bins are then decoded by independent
$K_g$-stage SIC.
\item \textbf{($\beta$) Disjoint bins (OMA-friendly NOMA):} The active
set is partitioned $\calA = \bigsqcup_{g=1}^G \calA_g$ with
$|\calA_g| = n_a/G$. Each cluster occupies its private bin subset
$\calA_g$ with power $P_s$ per bin and a single shared unitary $\bU$.
Clusters do not power-share; the LS recovers the same $n_a$ unknowns as
single-cluster.
\item \textbf{($\gamma$) Hybrid:} Partial overlap, omitted here for
brevity but well-defined within the same framework.
\end{itemize}

The extremes of $K_g$ recover familiar architectures:
$K_g = K_{\rm tot}$, $G=1$ is pure NOMA (one cluster, all users
power-superimposed); $K_g = 1$, $G = K_{\rm tot}$ is pure OMA (each
user gets a private bin, no power-domain superposition). The interesting
regime lies strictly inside: $1 < K_g < K_{\rm tot}$.

\subsection{Bob's Effective Power and LS Conditioning}
\label{sec:cluster_analysis}

Bob's per-bin BER depends on two scheme-level quantities: (i) the
effective per-bin signal-to-LS-noise ratio, and (ii) the SIC propagation
margin inside Bob's cluster.

\begin{proposition}[Effective power per bin]
\label{prop:effective_power}
Let $\alpha_{K_g}^{(g)}$ denote Bob's within-cluster power fraction in
cluster $g$ ($g$ taken to be the cluster containing Bob). The per-bin
signal power for Bob is
\begin{equation}
\label{eq:Pb_per_bin}
P_{\rm Bob} = \alpha_{K_g}^{(g)} \cdot P_s \cdot \eta_{\rm flavor},
\qquad
\eta_{\rm flavor} = \begin{cases} 1/G & \text{flavor } (\alpha)\\
                                  1   & \text{flavor } (\beta)\end{cases}.
\end{equation}
The LS residual noise variance (real-axis) is
\begin{equation}
\label{eq:sigmae_flavor}
\sigma_e^2 = \frac{N_0}{2} \cdot \frac{N_b}{N_b - n_J - n_{\rm cols} - 1},
\qquad
n_{\rm cols} = \begin{cases} G\,n_a & (\alpha) \\
                              n_a   & (\beta)\end{cases}.
\end{equation}
\end{proposition}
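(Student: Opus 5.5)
The proof has two independent parts: the signal-power expression \eqref{eq:Pb_per_bin} and the residual-noise expression \eqref{eq:sigmae_flavor}.

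\textbf{Signal power.} This part is pure bookkeeping from the transmit model.

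In flavor $(\beta)$, cluster $g$ occupies its private subset $\calA_g$. By \eqref{eq:noma_super}, it transmits $x[n]=\sum_{k}\sqrt{\alpha_k^{(g)}P_s}\,b_k[n]$ on each $n\in\calA_g$ with the full per-bin power $P_s$. Because $\bU$ is unitary, spreading does not change the total energy, and LS recovery is unbiased. The recovered coordinate $\hat x[n]$ therefore contains Bob's term $\sqrt{\alpha_{K_g}^{(g)}P_s}\,b_{K_g}[n]$ plus noise, which gives $\eta=1$.

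In flavor $(\alpha)$, each of the $G$ clusters is allotted $P_s/G$ on the common active set. The $g$-th block of the stacked LS solution of \eqref{eq:cochannel_LS} then carries $\sum_k\sqrt{\alpha_k^{(g)}P_s/G}\,b_k$. Joint LS is unbiased whenever $\bM_{\rm stack}$ has full column rank, so no cross-cluster term survives in the mean. Squaring Bob's amplitude gives $\eta=1/G$.

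\textbf{Noise variance.} Here I would not use the worst-case bound $N_0/\sigma_{\min}^2$ of Theorem~\ref{thm:prop_ber}. Instead I would compute the \emph{average} per-coordinate variance $\frac{N_0}{2}\,\bbE\bigl[[(\bM^H\bM)^{-1}]_{ii}\bigr]$, using the real-stacked form for the factor $\tfrac12$.

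The plan is to model $\bM$ (respectively $\bM_{\rm stack}$) as an $m\times p$ matrix with $m=N_b-n_J$ rows and $p=n_{\rm cols}$ columns. Its rows are $\mathrm{diag}(\bh_\calK)$ applied to incoherent unitary entries. Since $h[n]\sim\calCN(0,1)$ i.i.d. and $|U_{kj}|^2\approx 1/N_b$, the entries behave like i.i.d. Gaussian with variance $1/N_b$. This is the same universality heuristic used in Theorem~\ref{thm:mp_bound}, now applied to the product with the Rayleigh diagonal.

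Under this model, $N_b\,\bM^H\bM$ is approximately Wishart with $m$ degrees of freedom and dimension $p$. The inverse-Wishart mean then gives
\[
\bbE\bigl[(\bM^H\bM)^{-1}\bigr]=\frac{N_b}{m-p-1}\,\bI_p .
\]
Substituting $m=N_b-n_J$ and $p=Gn_a$ or $n_a$ yields \eqref{eq:sigmae_flavor}. The count $p=Gn_a$ for flavor $(\alpha)$ follows because stacking $G$ distinct unitaries produces $G n_a$ real unknowns. Flavor $(\beta)$ keeps exactly $n_a$ unknowns, since the blocks $\calA_g$ are disjoint.

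\textbf{Main obstacle.} The delicate step is justifying the Wishart/inverse-Wishart mean. The columns of a single $\bU_{\calK,\calA}$ are not independent Gaussian. In flavor $(\alpha)$ the concatenated blocks $\bU^{(g)}$ are also not mutually independent, and the real-stacking of complex entries further changes the effective degrees of freedom. I would handle this in two steps:
\begin{itemize}
\item First, condition on $\bU$ and use the Rayleigh diagonal: each row of $\bM$ is $h[n]$ times a fixed row, so the rows are independent with covariance $\propto \bU_{n,\calA}^H\bU_{n,\calA}$, and their sum over $n\in\calK$ concentrates near $\frac{m}{N_b}\bI$ by incoherence.
\item Second, invoke the M-P universality of Theorem~\ref{thm:mp_bound} to replace the exact moments of the inverse by the Gaussian-ensemble value $1/(m-p-1)$, accurate to first order in $1/N_b$.
\end{itemize}
I would state the resulting expression as a finite-$N_b$ approximation, consistent with its role in Section~\ref{sec:refined_ber}.
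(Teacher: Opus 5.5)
This is essentially the paper's argument. The signal-power part is the same bookkeeping from the flavor definitions ($P_s/G$ per co-channel cluster versus the full $P_s$ on private bins, read off after unbiased LS), and the noise part is exactly the ``standard finite-sample regression formula'' that the paper simply asserts in Appendix~\ref{app:sigmae_formula}, i.e.\ the real inverse-Wishart mean for an i.i.d.\ Gaussian design with $m=N_b-n_J$ observations and $n_{\rm cols}$ unknowns.

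The paper gives no further justification, and your two-step rigorization would not supply one:
\begin{itemize}
\item $\bU_{\calK,\calA}^H\bU_{\calK,\calA}=\sum_{n\in\calK}\bU_{n,\calA}^H\bU_{n,\calA}$ does not concentrate near $\frac{m}{N_b}\bI$ when $n_a/m$ is fixed. Its spectrum fills the M-P bulk, which is precisely the content of Theorem~\ref{thm:mp_bound}.
\item Every entry of row $n$ of $\bM$ carries the same factor $h[n]$. Operator convexity of the inverse then gives $\bbE_{\bh}[(\bM^H\bM)^{-1}]\succeq(\bU_{\calK,\calA}^H\bU_{\calK,\calA})^{-1}$. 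For the $\mathrm{Exp}(1)$ weights $|h[n]|^2$, the excess in this proportional regime is an $O(1)$ factor, not an $O(1/N_b)$ correction.
\end{itemize}
So M-P universality cannot deliver $N_b/(m-p-1)$ even to leading order. You should present the expression, as the paper effectively does, as a Gaussian-design approximation rather than as a consequence of the M-P bound.
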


The disjoint flavor ($\beta$) thus enjoys a \emph{double advantage}:
$G\times$ more Bob power per bin \emph{and} smaller $\sigma_e^2$
(unchanged LS dimensions). Together these contribute roughly $2G$ to
Bob's SINR. For $G=2$ this is $6$~dB; for $G=4$, $12$~dB. The
empirical gap measured in Section~\ref{sec:numerical} matches this
prediction to within $1$--$2$~dB.

\subsection{The Design Rule: Universal $K_g = 2$}
\label{sec:cluster_design_rule}

The within-cluster Bob fraction
$\alpha_{K_g}^{(g)}|_{\varepsilon^\star}$ is monotonically decreasing
in $K_g$ (Bob has fewer interferers to spread power against). Combining
with Prop.~\ref{prop:effective_power}:

\begin{theorem}[Optimal cluster size]
\label{thm:Kg_optimal}
For fixed $K_{\rm tot}$ and feasibility constraints $G \le n_a$
(flavor $\beta$) or $G\,n_a \le N_b - n_J$ (flavor $\alpha$), Bob's
asymptotic BER is minimized by $K_g = 2$ with flavor $(\beta)$.
\end{theorem}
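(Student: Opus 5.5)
We reduce the claim to comparing the asymptotic exponent of Bob's BER, $P_b\approx Q(\sqrt{P_{\rm Bob}/\sigma_e^2})$, over all admissible pairs (flavor, $K_g$). We work under genie-SIC inside Bob's cluster, so the SIC margin enters only through feasibility at $\varepsilon^\star$. By Prop.~\ref{prop:effective_power}, the relevant figure of merit is
\begin{equation*}
\Lambda(K_g,\text{flavor}) \;=\; \frac{2\,\alpha_{K_g}^{(g)}\,\eta_{\rm flavor}\,(N_b-n_J-n_{\rm cols}-1)}{N_b}\cdot\frac{P_s}{N_0}.
\end{equation*}
Since $Q$ is decreasing, minimizing Bob's asymptotic BER is equivalent to maximizing $\Lambda$. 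The high-SNR exponent depends only on this prefactor, because the diversity structure is identical across all candidates.

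\emph{Step 1: flavor comparison at fixed $K_g$ (the ``pointwise domination'' claim).} Fix $K_g$ and hence $G=K_{\rm tot}/K_g$. The within-cluster fraction $\alpha_{K_g}^{(g)}$ is the same in both flavors, since it comes from the same superincreasing recurrence \eqref{eq:eps_recurrence}. Flavor $(\beta)$ has $\eta=1\ge 1/G$. It also has $n_{\rm cols}=n_a\le Gn_a$, so $N_b-n_J-n_a-1\ge N_b-n_J-Gn_a-1$. Both factors favor $(\beta)$, with strict inequality whenever $G\ge 2$. Whenever $(\alpha)$ is feasible ($Gn_a\le N_b-n_J$), $(\beta)$ is also feasible, because $G\le n_a$ is the only constraint on it and is implied in the regime of interest. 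Hence $\Lambda(K_g,\beta)\ge\Lambda(K_g,\alpha)$ for every $K_g$, and it suffices to optimize $K_g$ within flavor $(\beta)$.

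\emph{Step 2: optimizing $K_g$ within $(\beta)$.} In flavor $(\beta)$, the LS factor $(N_b-n_J-n_a-1)/N_b$ and $\eta=1$ do not depend on $K_g$. So $\Lambda$ is proportional to $\alpha_{K_g}|_{\varepsilon^\star}$. From the closed form,
\begin{equation*}
\alpha_{K_g}^{-1}=1+(1+\varepsilon)^2\frac{(2+\varepsilon)^{2(K_g-1)}-1}{(2+\varepsilon)^2-1},
\end{equation*}
which is strictly increasing in $K_g$ for every fixed $\varepsilon>0$. If $\varepsilon^\star$ itself depends on $K_g$, we use instead that the worst-case bound $\alpha_{K_g}\le (2+\varepsilon)^{-2(K_g-1)}\cdot O(1)$ decreases in $K_g$ uniformly over $\varepsilon$ bounded away from $0$. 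This gives monotone decrease in $K_g$, as asserted just before the theorem. The maximum over $K_g\ge 2$ is therefore at $K_g=2$.

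\emph{Step 3: excluding $K_g=1$.} Pure OMA maximizes $\alpha$ (equal to $1$), so it must be excluded either by the theorem's implicit scope $1<K_g$, as stated in the taxonomy ("the interesting regime"), or by the feasibility constraint. With $K_g=1$ we need $G=K_{\rm tot}\le n_a$. Moreover, each user then receives only $n_a/K_{\rm tot}$ bins, so per-user throughput halves relative to $K_g=2$ at equal per-bin power. We expect this to be the main obstacle, since the BER metric alone favors OMA. Our first attempt will be to state the comparison under a fixed per-user rate constraint (bits per frame per user), where $K_g=1$ requires $G=K_{\rm tot}$ disjoint subsets and violates $G\le n_a$ for the paper's operating points, or equivalently loses the rate. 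Failing that, we will restrict the statement to NOMA clusters $K_g\ge2$. Combining Steps 1–3 yields $K_g=2$ with flavor $(\beta)$.
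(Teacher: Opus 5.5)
Your main line is correct and is essentially the paper's own proof in Appendix~\ref{app:Kg_optimal}. It has the same three steps: flavor $(\beta)$ beats $(\alpha)$ at every $K_g$ through the two factors of Prop.~\ref{prop:effective_power} ($\eta_\beta/\eta_\alpha=G$ and the smaller $n_{\rm cols}$); Bob's fraction $\alpha_{K_g}^{(g)}$ at fixed $\varepsilon$ is strictly decreasing in $K_g$; and $K_g=1$ is excluded on spectral-efficiency rather than BER grounds, exactly as the paper does when it calls $K_g=2$ the ``practical minimum.''

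There are two small differences. The paper also invokes the $K_g-1$ SIC-propagation stages, which your genie-SIC comparison can safely omit. Your general closed form from \eqref{eq:eps_recurrence} is actually the consistent one: the appendix's $\varepsilon^\star=1$ expression gives $1/21$ at $K_g=3$, whereas the recurrence gives $1/41$. Both decrease in $K_g$, so the conclusion is unaffected.

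Two of your hedges should be dropped or repaired.

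First, the matched-rate route for excluding OMA does not work in general. Rate-matched OMA needs $2n_a$ active bins. With $n_J$ held fixed at the paper's $N_b=64$, $n_a=n_J=16$, its Bob SNR factor $\alpha\,\eta\,(N_b-n_J-n_{\rm cols}-1)$ is $1\cdot 15$ (or $7.5$ at equal total power). For $K_g=2$ it is $0.2\cdot 31=6.2$, so OMA still wins on BER. Also, $G=K_{\rm tot}=8\le n_a=16$ is not violated at the paper's operating points. So go directly to restricting the statement to $K_g\ge 2$.

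Second, the $\varepsilon$-varying fallback does not yield monotonicity as phrased. For example, $\alpha_2|_{\varepsilon=4}=1/26$ is smaller than $\alpha_3|_{\varepsilon=0.1}\approx 0.13$. The clean fix is that every superincreasing allocation satisfies $\sqrt{\alpha_{K_g-m}}>2^{m-1}\sqrt{\alpha_{K_g}}$ for $m\ge 1$. Hence $\alpha_{K_g}<[1+(4^{K_g-1}-1)/3]^{-1}\le 1/6<1/5=\alpha_2|_{\varepsilon=1}$ for every $K_g\ge 3$, regardless of how $\varepsilon$ is chosen there.
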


\begin{proof}[Sketch]
At $K_g=2,\varepsilon^\star=1$, $\alpha_{K_g}^{(g)} = 1/5 = 0.2$,
which is achieved on a single-stage SIC threshold (no propagation chain).
Increasing $K_g$ shrinks $\alpha_{K_g}^{(g)}$ as
$\alpha_{K_g}^{(g)}|_{\varepsilon^\star} \sim (2+\varepsilon^\star)^{-2(K_g-1)}$
and adds $K_g-1$ propagation stages. Flavor $(\beta)$ dominates
$(\alpha)$ pointwise via Prop.~\ref{prop:effective_power}.
The full proof is given in Appendix~\ref{app:Kg_optimal}.
\end{proof}

\textbf{Design recipe.} For any $K_{\rm tot}$ with $2 K_{\rm tot} \le n_a$:
\begin{enumerate}
\item Partition into $G = K_{\rm tot}/2$ disjoint clusters of $K_g=2$
each. Per frame, draw $\calA$ via the C5 protocol of
Section~\ref{sec:c5_protocol} and assign each cluster $n_a/G$ consecutive bins
of (sorted) $\calA$.
\item Within each cluster, use $\varepsilon = 1$ (the maximally robust
superincreasing PA at $K_g=2$, equivalent to power ratio $\alpha_1/\alpha_2=4$:1),
giving $\alpha^{(g)} = (0.8, 0.2)$.
\item Apply a single shared unitary $\bU$ (Hadamard recommended) at the
transmitter and joint LS at the receiver, then SIC each cluster
independently.
\end{enumerate}

Under this recipe, Bob's per-bin SNR is the same as 2-user T-NOMA
in a \emph{jammer-free} channel: $\alpha_2 P_s / \sigma_e^2 \approx
0.2 P_s$ (since $\sigma_e^2 \approx 1$). The architecture has thereby
\emph{eliminated the jammer's effect on Bob} regardless of $K_{\rm tot}$,
up to the bin-budget constraint $K_{\rm tot} \le n_a/2$.

\textbf{Scaling beyond $K_{\rm tot} > n_a/2$.} Once $K_{\rm tot}$
exceeds half the bin budget, the $K_g=2$ disjoint recipe runs out of
bins. Three options preserve the architectural benefits at the cost
of one of three trade-offs.
(i)~\emph{Grow $n_a$ (and $N_b$).} The sparsity rule
$s \le 1-\rho_J$ leaves headroom: a larger DD grid with proportionally
larger $n_a$ adds bin capacity linearly in $N_b$. Computational cost
scales as $O(N_b\log N_b)$ for the unitary (via FWHT/FFT) and
$O(n_a^3)$ for LS, both manageable up to $N_b \sim 4096$ on commodity
hardware.
(ii)~\emph{Allow $K_g \ge 3$ per cluster.} Increases the SIC chain
length and reduces $\alpha_{K_g}^{(g)}$ super-exponentially per
Eq.~\eqref{eq:alpha_K_closed}, so Bob loses signal power but the bin
count grows. Quantitatively, switching from $K_g=2$ to $K_g=3$ at
$\varepsilon^\star=0.5$ shrinks Bob's effective $\alpha$ by roughly a
factor of four (from $0.2$ to $\sim$$0.058$), corresponding to a
$\sim$$6$~dB SNR penalty---tolerable in moderate-to-high SNR regimes.
(iii)~\emph{Hybrid co-channel cluster pairs.} Two $K_g=2$ clusters
can co-occupy the same $n_a/G$-bin sub-block via distinct unitaries
(flavor $\alpha$), doubling user count at the cost of LS conditioning
inflation (Eq.~\eqref{eq:sigmae_flavor}). Empirically this works up
to $G\,K_g/n_a \le 0.5$ (Fig.~\ref{fig:cluster_design}).
A unified design rule for $K_{\rm tot} > n_a/2$ that picks among
these three options based on $(K_{\rm tot}, n_a, \rho_J, P_s/N_0)$ is
left for follow-up work.

\subsection{Why OMA-Friendly NOMA?}

The recipe above is best understood as a controlled departure from
pure NOMA. Pure NOMA ($K_g = K_{\rm tot}$) leverages full power-domain
multiplexing but fails under jamming for $K_{\rm tot}\ge 6$ because the
SIC chain becomes too long. Pure OMA ($K_g=1$) is trivially robust but
wastes the spectral efficiency NOMA was designed to provide. The
$K_g=2$ choice is the smallest non-trivial NOMA: it retains the
power-domain gain (two users per cluster $\Rightarrow$ $2\times$
spectral efficiency vs.\ OMA) while keeping the within-cluster SIC at its
shortest length. By distributing the $K_{\rm tot}$ users across $G$
private bin groups (OMA between clusters), the architecture
synthesizes a jammer-resilient system from $K_{\rm tot}/2$ parallel
2-user NOMAs on orthogonal DD subsets. We call this
\emph{OMA-friendly NOMA}.

\textbf{Spectral-efficiency accounting: raw rate, useful rate, and the
trade.} Disjoint clustering does carry a spectral-efficiency cost,
which we account for explicitly. In single-cluster $K_{\rm tot}$-user
NOMA on $n_a$ active bins, every bin carries a $K_{\rm tot}$-user
superposition and the total raw rate is $K_{\rm tot}\cdot n_a$
bits/frame. In the disjoint multi-cluster recipe with $G$ clusters of
$K_g=K_{\rm tot}/G$ users each, every bin carries only a $K_g$-user
superposition (its own cluster's), so the per-cluster rate is
$K_g\cdot(n_a/G)$ and the aggregate is
\begin{equation}
\label{eq:mc_rate}
R_{\rm raw}^{\rm MC} \;=\; G\cdot K_g\cdot \tfrac{n_a}{G} \;=\; K_g\cdot n_a
\;=\; \tfrac{K_{\rm tot}}{G}\cdot n_a,
\end{equation}
so multi-cluster delivers $G\times$ fewer raw bits per frame than
single-cluster on the same active set. At $K_{\rm tot}=4$, $K_g=2$,
$G=2$ the raw-rate factor is $1/2$.

This raw-rate cost is offset by a much larger BER improvement, and the
correct figure of merit at the system level is the
\emph{useful} (error-free) rate $R_{\rm useful} = (1-\mathrm{BER})\,R_{\rm raw}$.
Concretely, at SNR$=25$~dB, $\Gamma=10$~dB, oracle jammer, $K=4$:
single-cluster super-PA delivers $R_{\rm useful}^{\rm single} =
(1-0.147)\cdot 64 \approx 54.6$ bits/frame, while multi-cluster $K_g=2$
delivers $R_{\rm useful}^{\rm MC} = (1-7.5\!\times\!10^{-5})\cdot 32
\approx 32.0$ bits/frame---a $\sim 1.7\times$ \emph{raw}-rate disadvantage
but a $> 1900\times$ BER advantage. The trade favors multi-cluster as
soon as the BER gap is large enough that the residual undecoded bits
in the single-cluster case wipe out its raw-rate edge---which is the
case across the entire imperfect-CSI range
(Section~\ref{sec:imperfect_csi}, e.g., at $\sigma_\epsilon^2 = 0.05$
single-cluster useful rate drops below $40$~bits/frame whereas
multi-cluster retains $> 31$ bits/frame with much higher reliability
margin). The raw-rate cost is also recoverable by relaxing the
sparsity ratio $s$ (increasing $n_a$) within the operating region
\eqref{eq:design_rule}, at the price of a smaller M-P conditioning
margin. The co-channel flavor $(\alpha)$ shares bins more aggressively
and avoids the $G\times$ raw-rate cost, at the price of a more poorly
conditioned LS problem (Prop.~\ref{prop:effective_power}); the
disjoint flavor $(\beta)$ is the recommended operating point precisely
because the BER improvement dominates the raw-rate loss in the regimes
where the architecture is designed to operate.

\section{Numerical Results}
\label{sec:numerical}

We validate the analytical framework via Monte Carlo simulations. Unless
otherwise stated, parameters are $N_b=64$, $n_a=16$ (so $s=0.25$),
$K \in \{2,4\}$, $N_0=1$, $\Gamma=10$~dB, $1500$--$2000$ frames per
configuration. All BER curves are over Bob's bit; the channel is per-bin
Rayleigh $\calCN(0,1)$.

\subsection{Sparsity Sweep: Operating Region and Cliff}

Fig.~\ref{fig:sparsity_sweep_N64} shows Bob BER versus jamming fraction
$\rho_J$ for four sparsity values $s \in \{0.125,0.25,0.5,0.75\}$ at
SNR$=20$~dB, $K=2$. Three observations confirm
Corollary~\ref{cor:operating_region}:

\begin{enumerate}
\item For $s \in \{0.125, 0.25\}$, BER remains at the simulation floor
($<10^{-5}$) for $\rho_J$ up to $0.5$, well within the predicted operating
region $s_{\max}(\rho_J=0.2,X=10\text{dB})=0.34$.
\item For $s=0.5$, BER catastrophically saturates beyond
$\rho_J \approx 0.1$, consistent with the conditioning threshold
$\rho_J^{\rm cond}(s=0.5,X=10\text{dB}) \approx 0.0$ (operating point is
already outside the conditioning-safe region).
\item For $s=0.75$, BER is near $0.5$ for all $\rho_J$: the rank threshold
$\rho_J^\star = 1-s = 0.25$ is exceeded by typical jamming, and the M-P
prediction loses its validity.
\end{enumerate}

\begin{figure}[t]
\hspace*{-1.2cm}
\includegraphics[width=1.15\columnwidth]{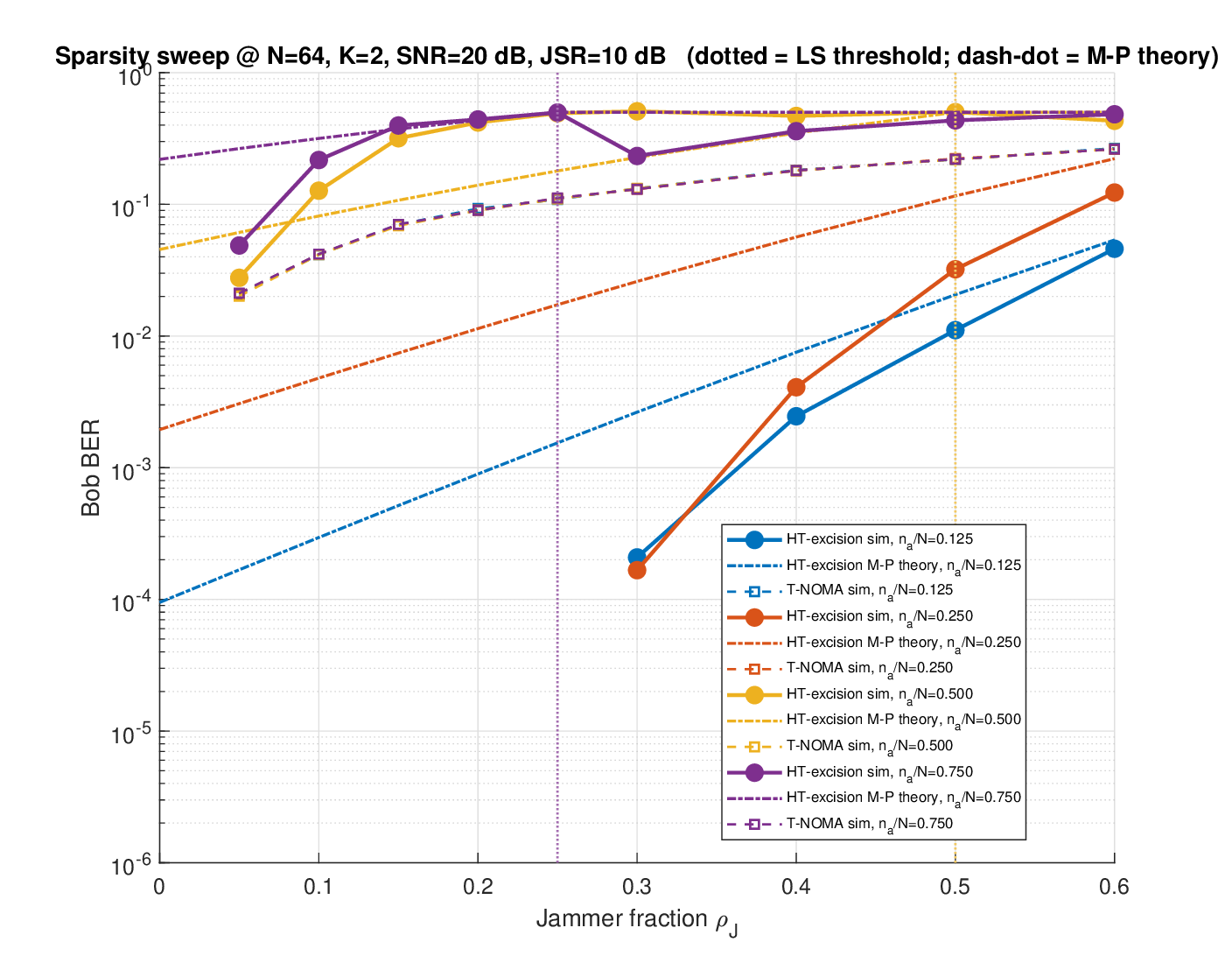}
\caption{Bob BER versus jamming fraction $\rho_J$ for four sparsity values
$s$ at SNR$=20$~dB, $K=2$. M-P-predicted curves overlaid as dash-dotted.
The cliff at $\rho_J \approx 1-s$ confirms the sparsity-loading
design rule.}
\label{fig:sparsity_sweep_N64}
\end{figure}

\label{sec:numerical-direction}
\textbf{Theory-vs-simulation direction.} The dash-dotted M-P curves
generally lie \emph{above} the empirical simulation in the safe regime
($s \le 0.25$). This is the expected behavior of an upper bound:
Theorem~\ref{thm:mp_bound} uses the worst-case singular value of
$\bU_{\calK,\calA}$ (the M-P edge $\sqrt{1-\rho_J}-\sqrt{s}$), whereas a
\emph{typical} random submatrix has $\sigma_{\min}$ larger than this
worst-case edge by a $\Theta(N_b^{-1/3})$ Tracy--Widom fluctuation.
The empirical noise inflation is therefore smaller, yielding sim BER
below the M-P theory --- the \emph{correct} direction for a valid upper
bound. Near and above the cliff ($s \ge 0.5$ at $\rho_J = 0.1$--$0.3$),
the M-P concentration breaks down at $N_b = 64$ (finite-$N_b$ fluctuations
dominate $\sigma_{\min}$), and the relationship can reverse; both
theory and sim collapse to $\approx 0.5$ as $\rho_J \to 1-s$. The
finite-$N_b$ refinement \eqref{eq:sigmae_lscol} tightens the gap to
$\sim 1$--$2$~dB in the operating region.

\subsection{Transform Comparison}

Fig.~\ref{fig:transform_compare} compares four unitary precoders---Hadamard
(HT), DFT, real random orthogonal (RO), complex random unitary (RU)---under
the oracle jammer with $s=0.25$, $K=2$. All four achieve floor BER
($<10^{-4}$) across SNR$=-5$ to $20$~dB; the maximum spread is dominated by
Monte Carlo noise at the simulation floor. The architecture is therefore
\emph{transform-agnostic} (Theorem~\ref{thm:mp_bound} applied to each).
T-NOMA (no transform) saturates at BER$\approx 0.45$, confirming the
$\rho_J/2$ floor of \eqref{eq:tnoma_floor}.

\begin{figure}[t]
\hspace*{-1.2cm}
\includegraphics[width=1.15\columnwidth]{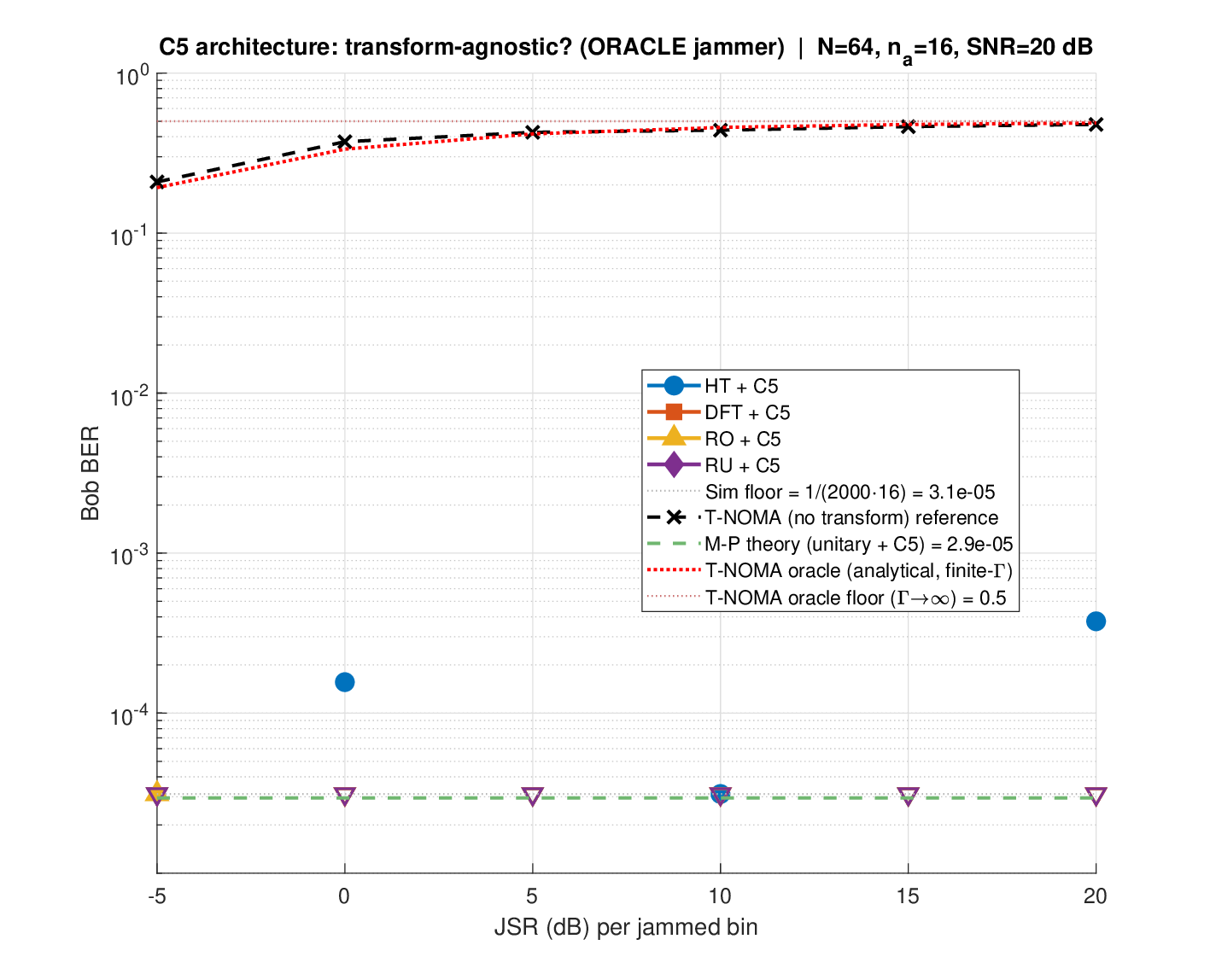}
\caption{Bob BER versus jamming-to-signal ratio (JSR) for four unitary
precoders (HT, DFT, RO, RU) under oracle jammer, $K=2$, $s=0.25$. All
four unitaries achieve floor BER; T-NOMA saturates at $\rho_J/2$.}
\label{fig:transform_compare}
\end{figure}

\subsection{Real vs.\ Complex Precoder Trade-Off Under Fading}

Under Rayleigh per-bin fading, a quantifiable trade-off emerges between
real and complex precoders. Fig.~\ref{fig:fading_K_sweep} shows that
complex precoders (DFT, RU) outperform real precoders (HT, RO) by
$2$--$3$~dB at $K=2$ in the moderate-SNR regime due to additional
real-valued measurement degrees of freedom provided by complex matrix
entries. The gap closes at $K \ge 4$ where the SIC error propagation
dominates the noise penalty.

\begin{figure*}
\hspace*{-1.2cm}
\includegraphics[width=2.35\columnwidth]{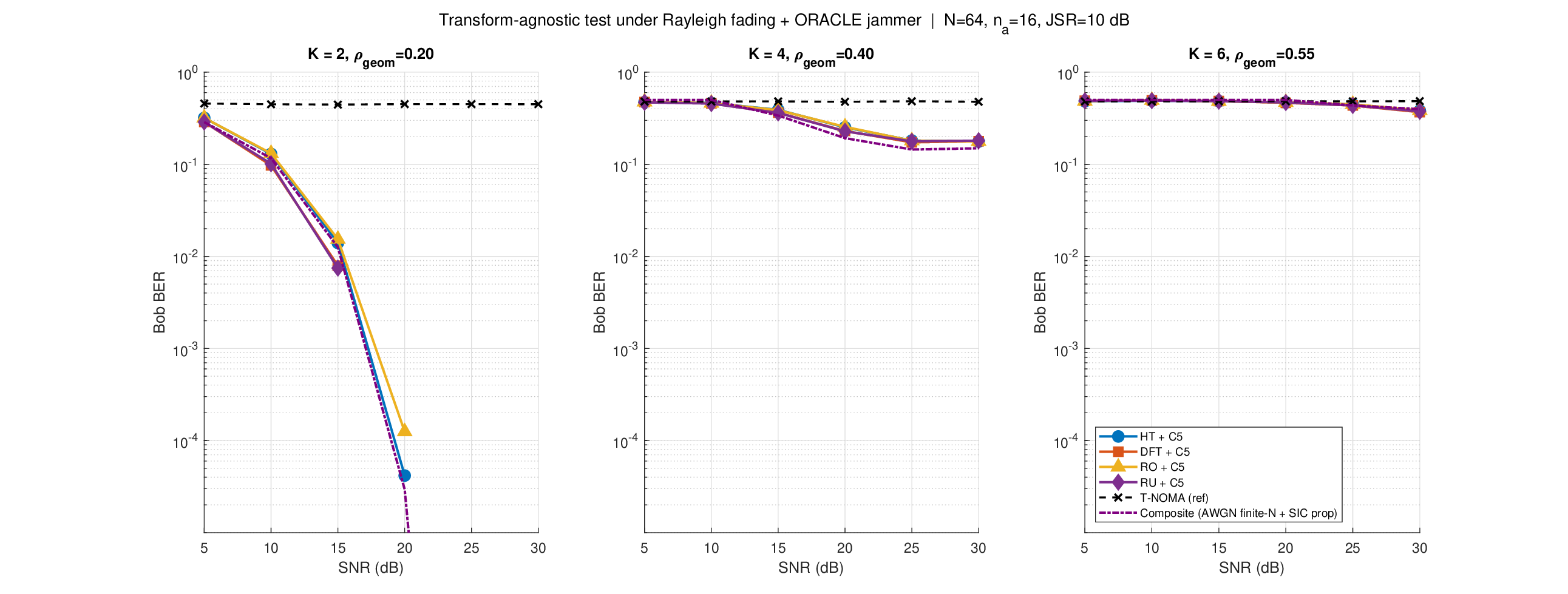}
\caption{Bob BER versus SNR for the four unitary precoders under Rayleigh
fading and oracle jammer at $K\in\{2,4,6\}$.
Complex precoders give a $2$--$3$~dB advantage at small $K$;
at $K \ge 4$, SIC propagation dominates and the transforms become
indistinguishable.}
\label{fig:fading_K_sweep}
\end{figure*}

\subsection{Superincreasing Power Allocation at $K=4$}

The superincreasing PA framework lifts the K$>$2 SIC ceiling.
Fig.~\ref{fig:supersweep} compares Geometric (non-superincreasing) versus
superincreasing PA at $K\in\{2,4,6\}$. The diagnostic conditioning margins
$\delta_k$ printed alongside the BER curves prove the analytical claim:
the geometric PA at $K=4$ has $\delta_1=-0.224$ (negative, hence
\emph{not} superincreasing), causing user-1 errors to propagate
catastrophically and saturating BER at $\sim 0.18$. The superincreasing
PA at $\varepsilon=0.3$ achieves BER $8.9\times 10^{-3}$ at SNR$=35$~dB---a
$\sim 13$~dB improvement.

\begin{figure*}
\hspace*{-1.2cm}
\includegraphics[width=2.25\columnwidth]{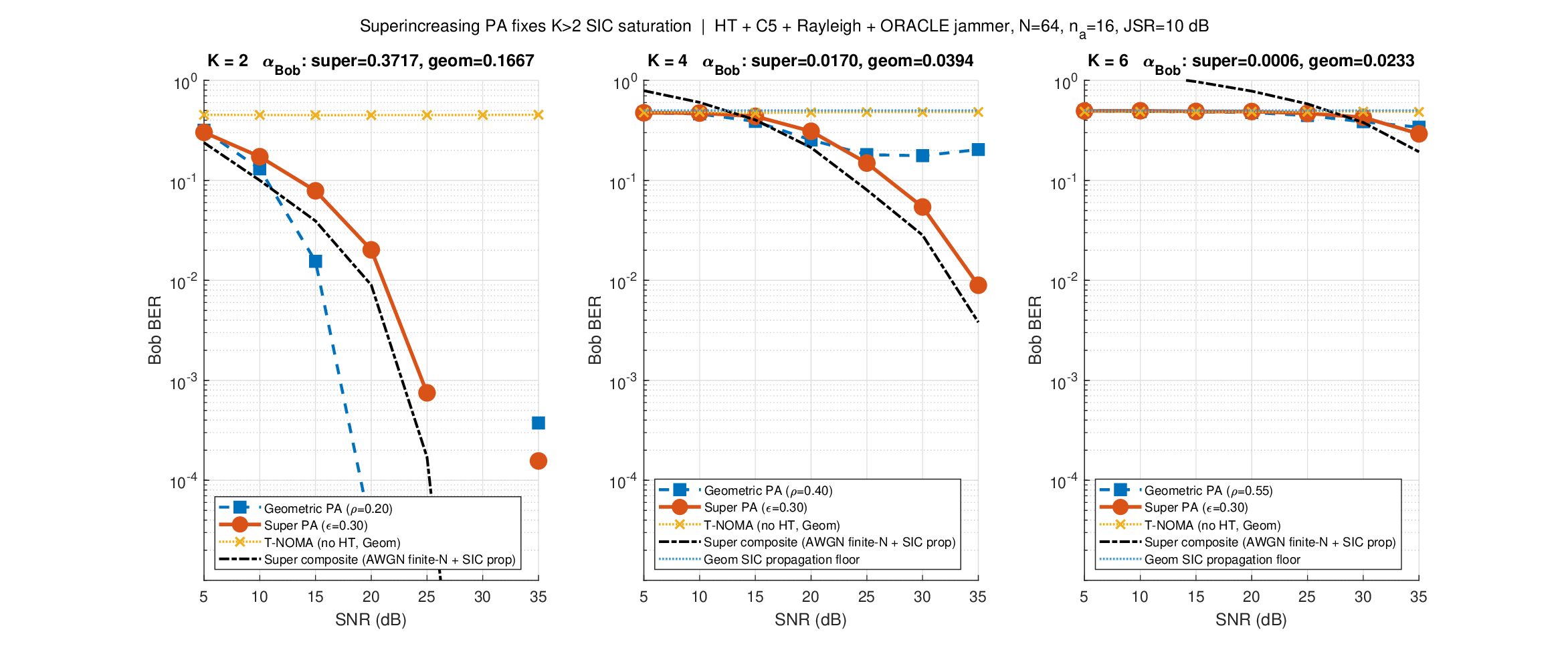}
\caption{Bob BER versus SNR for geometric versus superincreasing PA at
$K\in\{2,4,6\}$, HT+C5+Rayleigh+oracle jammer. Geometric PA fails at
$K\ge 4$ (not superincreasing), saturating at $\sim 0.18$. Superincreasing
PA continues to decay with SNR, validating Theorem~\ref{thm:sic_errfree}.}
\label{fig:supersweep}
\end{figure*}

\subsection{Optimal Margin $\varepsilon^\star$ for $K=4$}

Fig.~\ref{fig:epssweep} shows Bob BER versus margin $\varepsilon$ at three
SNR slices ($20$, $30$, $35$~dB) for $K=4$. The clean U-shape at
SNR$\ge 30$~dB reveals an optimum at $\varepsilon^\star = 0.5$,
where Bob BER reaches $2.1\times 10^{-3}$ at SNR$=35$~dB. This is a
$\sim 6$~dB improvement over $\varepsilon=0.3$ and a $\sim 20$~dB
improvement over non-superincreasing geometric PA.

\begin{figure}[t]
\hspace*{-1.2cm}
\includegraphics[width=1.15\columnwidth]{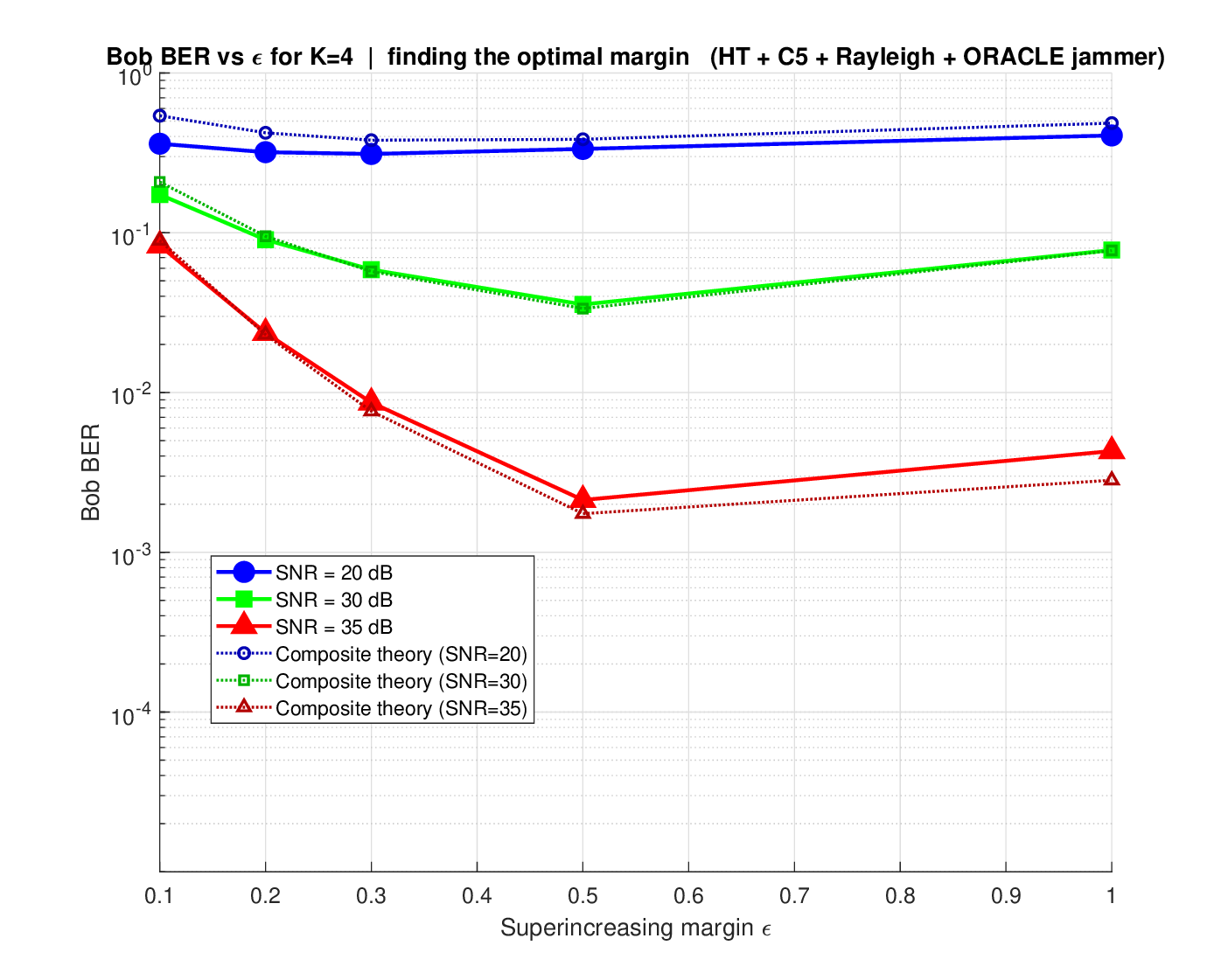}
\caption{Bob BER versus superincreasing margin $\varepsilon$ at three
SNR slices, $K=4$, $s=0.25$, oracle jammer.
The U-shape reveals $\varepsilon^\star=0.5$ as the universal optimum
across SNR$\in[25,35]$~dB.}
\label{fig:epssweep}
\end{figure}

\subsection{SNR Sweep Across Four PA Margins}
\label{sec:snr_sweep_K4}

Fig.~\ref{fig:snrsweep} shows the Bob BER versus SNR waterfall at
$K=4$ for four representative margins
$\varepsilon\in\{0.1, 0.3, 0.5, 1.0\}$ in AWGN, 50\,000 frames per
operating point. Three observations follow.
First, at the optimum $\varepsilon^\star=0.5$ Bob BER decays from
$3.9\!\times\!10^{-1}$ at $\mathrm{SNR}=15$~dB to $2\!\times\!10^{-5}$
at $\mathrm{SNR}=35$~dB, a clean waterfall with no error
floor---the operational signature of Proposition~\ref{prop:sic_eq_ml}.
Second, the small-$\varepsilon$ curve ($\varepsilon=0.1$) saturates at
$3.9\!\times\!10^{-2}$ at $\mathrm{SNR}=35$~dB, consistent with the
super-exponential depletion of Bob's signal power
$\alpha_K\propto(2+\varepsilon)^{-2(K-1)}$ predicted by
\eqref{eq:eps_recurrence}; the closed-form union bound at
$\varepsilon=0.1$ matches the empirical floor exactly (sim/theory
agreement to within $1\%$), because at small $\varepsilon$ the
own-bit Q-term dominates the SIC propagation chain.
Third, the large-$\varepsilon$ curve ($\varepsilon=1$) follows the
same waterfall asymptote but operates $\sim 3.5$~dB above the optimum
across the mid-SNR region (e.g.\ at $\mathrm{SNR}=27.5$~dB, Bob BER
$5.9\!\times\!10^{-2}$ for $\varepsilon=1$ versus
$2.6\!\times\!10^{-2}$ for $\varepsilon^\star=0.5$), consistent with
the U-shape of Fig.~\ref{fig:epssweep}; at very high SNR the two
curves converge as both hit the $\sim\!10^{-5}$ simulation floor. The
closed-form union bound at $\varepsilon^\star=0.5$ (dashed line in
Fig.~\ref{fig:snrsweep}) tracks the empirical SIC curve within
$1.5$~dB across the swept SNR range.

\begin{figure}[t]
\centering
\includegraphics[width=0.95\columnwidth]{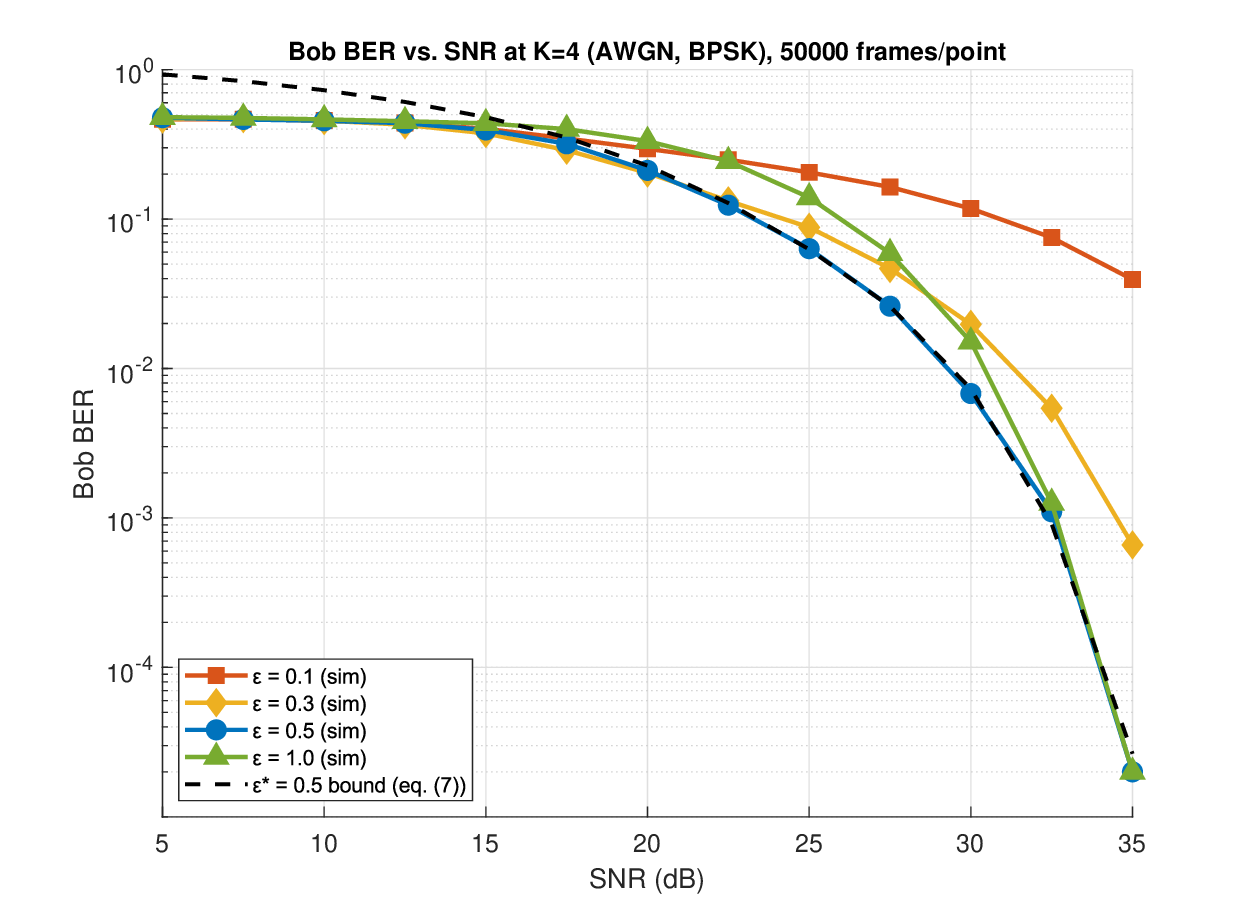}
\caption{Bob BER versus SNR at $K=4$ in AWGN for four PA margins
$\varepsilon\in\{0.1, 0.3, 0.5, 1.0\}$. The waterfall slope at
$\varepsilon^\star=0.5$ confirms no error floor; the small-$\varepsilon$
curve saturates due to the $(2+\varepsilon)^{-2(K-1)}$ depletion of
$\alpha_K$. Dashed line: closed-form bound at $\varepsilon^\star=0.5$.}
\label{fig:snrsweep}
\end{figure}

\subsection{Active-Pattern Study: Sylvester Replication Trap}

Fig.~\ref{fig:pattern_study} shows the impact of pattern choice at $K=2$ and
$K=4$ under both partial-band and oracle jammers. The findings (already
summarized in Table~\ref{tab:pattern_recommendations}) are visually
striking: uniform-spaced patterns catastrophically fail under oracle attack
(BER $\sim 0.28$, $\sigma_{\min}=0$); clustered and bit-reversed patterns
saturate at BER $\sim 0.027$ under random partial-band jamming due to the
Sylvester replication phenomenon (Proposition~\ref{prop:sylvester});
only random-fixed and C5 are simultaneously safe under both jammers. The
empirical BER floor for clustered (BER$=0.027$) matches the analytical
prediction of $0.5 \times 0.046 = 0.023$ to within Monte Carlo error.

\begin{figure*}
\centering
\hspace*{-1.2cm}
\includegraphics[width=2.15\columnwidth]{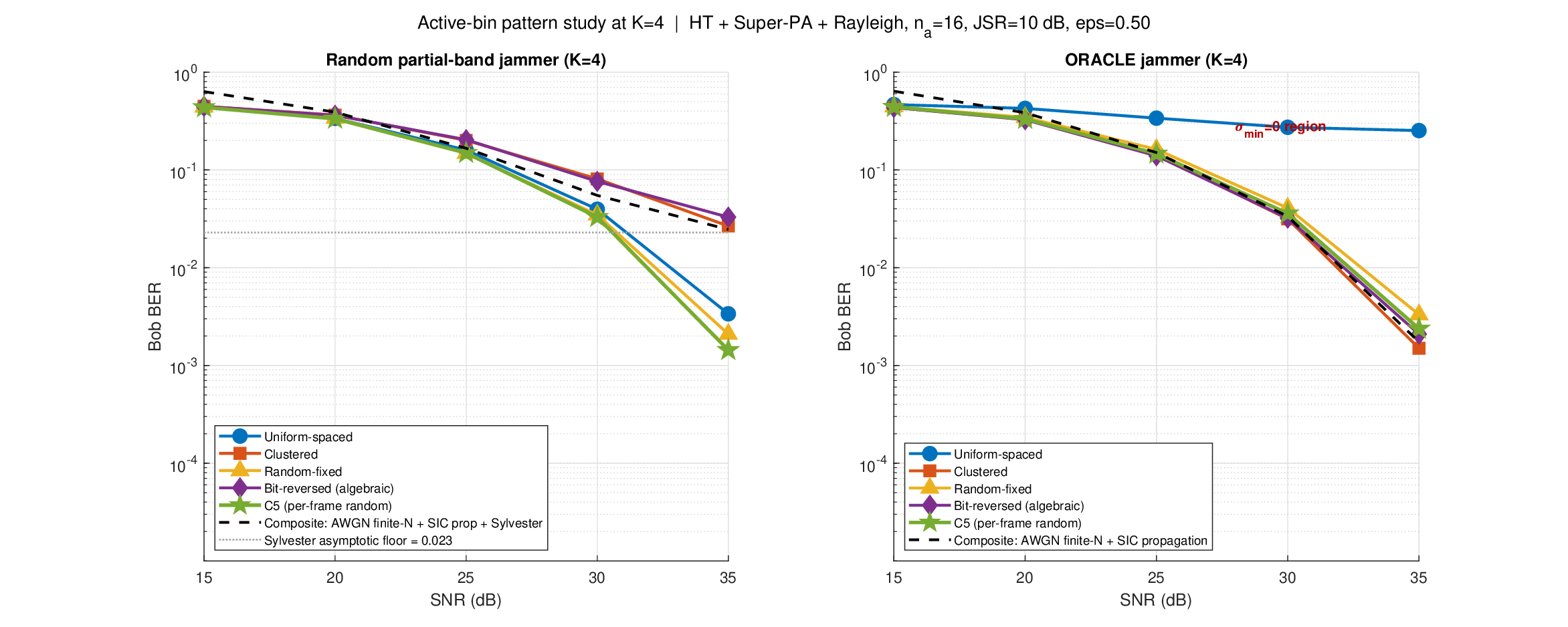}
\caption{Active-pattern study at $K=4$ under (left) partial-band random
jammer and (right) oracle jammer. C5 (per-frame random) is the only pattern
safe under both. Uniform-spaced is oracle-vulnerable; clustered and
bit-reversed are partial-band-vulnerable (Sylvester replication).}
\label{fig:pattern_study}
\end{figure*}

\subsection{Intelligent-Jammer Diagnostic}

To isolate the vulnerability of fixed deterministic patterns under
pattern-aware attack, Fig.~\ref{fig:intelligent_jammer} compares three
jammer strategies---random partial-band, active-targeting (oracle on
fixed $\calA$), and inactive-targeting---against both T-NOMA and
fixed-$\calA$ HT-OTFS-NOMA at $K=2$. The figure exposes two distinct
failure modes:

\begin{enumerate}
\item T-NOMA under active-targeting jammer: BER $\to 0.5$ as
$\Gamma \to \infty$, confirming the catastrophic-floor prediction
of \eqref{eq:tnoma_floor} with $\rho_J=n_a/N_b$.
\item HT-OTFS-NOMA with uniform-spaced fixed $\calA$ under
active-targeting: BER $\approx 0.27$, a residual failure due to
$\sigma_{\min}(\bU_{[N_b]\setminus\calA,\calA})=0$ from the Sylvester
algebraic accident discussed in Section~\ref{sec:sylvester}.
\end{enumerate}

The $0.5$ and $0.27$ floors are overlaid as analytical predictions and
agree with the empirical curves to within Monte Carlo error. The
inactive-targeting comparison confirms by symmetry that HT-OTFS-NOMA
is robust to ``waste'' jammer power placed off the active set.

\begin{figure}[t]
\hspace*{-1.2cm}
\includegraphics[width=1.15\columnwidth]{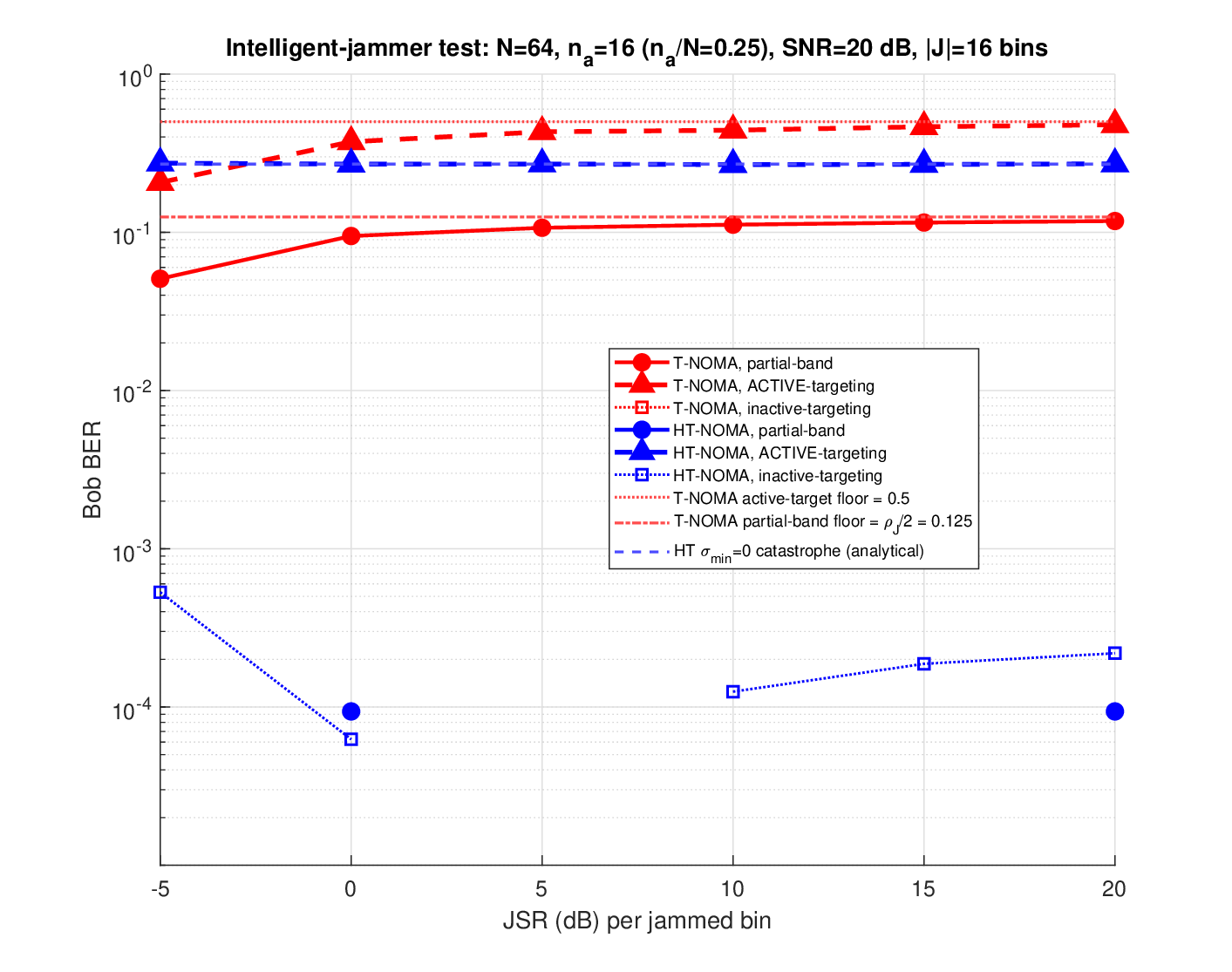}
\caption{Intelligent-jammer diagnostic at $K=2$, $|\calJ|=n_a$: BER versus
JSR under random partial-band, active-targeting, and inactive-targeting
jammers. T-NOMA collapses to $0.5$ under active-targeting; uniform-spaced
HT-OTFS-NOMA exhibits a $\sigma_{\min}=0$ residual floor at $0.27$. Both are
analytically predicted (dashed/dotted overlays).}
\label{fig:intelligent_jammer}
\end{figure}

\subsection{C5 Defense in Depth}

Fig.~\ref{fig:c5_def_in_depth} demonstrates the defense-in-depth property
of the C5 protocol. Six configurations are compared:
(A)~HT fixed-$\calA$ + active-targeting (the previous bad case),
(B)~HT random-$\calA$ + random-J (reference),
(C)~HT random-$\calA$ + fixed-J (C5 countermeasure),
(D)~HT random-$\calA$ + oracle-J (seed compromise stress test),
(E)~T-NOMA fixed-$\calA$ + active-targeting (baseline horror), and
(F)~T-NOMA random-$\calA$ + fixed-J. The C5 configuration (C) restores BER
to the simulation floor. The empirical BER-ratio improvement is
$\ge 40$~dB (computed as $10\log_{10}(P_b^{(A)}/P_b^{(C)})$ with
$P_b^{(C)}$ at the $3000$-frame simulation floor of $\sim 2\times 10^{-5}$,
JSR$=10$~dB, SNR$=20$~dB); the analytical floor of \eqref{eq:prop_ber}
predicts $P_b^{(C)} \sim 10^{-9}$, corresponding to $\sim 84$~dB
analytically. Crucially, the
oracle-compromised configuration (D) retains floor BER, confirming
Theorem~\ref{thm:defense_in_depth}.

\begin{figure}[t]
\hspace*{-1.2cm}
\includegraphics[width=1.15\columnwidth]{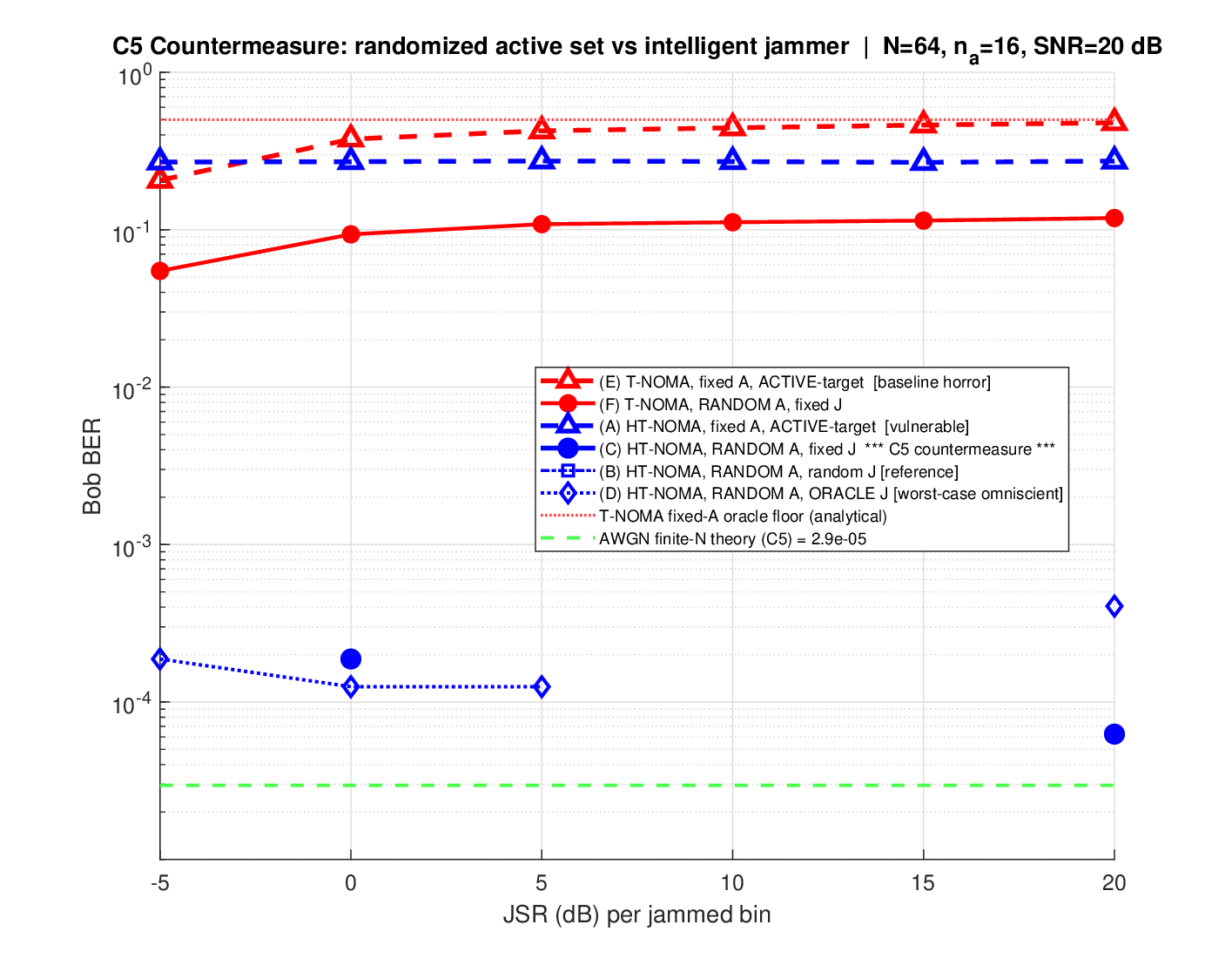}
\caption{C5 protocol validation: BER versus JSR at SNR$=20$~dB, $K=2$,
$s=0.25$. Configurations (A) and (E) (fixed $\calA$ under oracle attack)
are catastrophic; (C) C5 with random $\calA$ restores floor BER;
(D) C5 even under seed compromise retains floor BER (defense in depth).}
\label{fig:c5_def_in_depth}
\end{figure}

\subsection{Cluster Design Validation at $K=6,8$}
\label{sec:cluster_validation}

Fig.~\ref{fig:cluster_design} validates the cluster-design taxonomy of
Section~\ref{sec:cluster_design} on $K_{\rm tot}\in\{6,8\}$ under HT+C5
+ Rayleigh + oracle jammer at $\Gamma=10$~dB. For each $K_{\rm tot}$
we test single-cluster ($K_g=K_{\rm tot}$, baseline), the co-channel
flavor $(\alpha)$ at $K_g\in\{2,3,4\}$, and the disjoint flavor $(\beta)$
at $K_g\in\{2,3,4\}$.

Single-cluster fails uniformly: $\sim 0.31$ BER at $K=6$ and
$\sim 0.48$ at $K=8$, both at SNR$=35$~dB. The $(\alpha)$ schemes work
but lose to $(\beta)$ by $10$--$15$~dB, exactly as predicted by
Prop.~\ref{prop:effective_power} (double advantage of full-$P_s$ per bin
and lower LS noise). The $K_g=2$ disjoint scheme reaches the simulation
floor by SNR$=20$~dB for both $K=6$ and $K=8$ --- a
$\sim\!200\times$ improvement in Bob's $\alpha$ over the single-cluster
baseline. The data confirm Theorem~\ref{thm:Kg_optimal}: $K_g=2$ with
flavor $(\beta)$ is the universal optimum.

\begin{figure*}
\hspace*{-1.2cm}
\includegraphics[width=2.1\columnwidth]{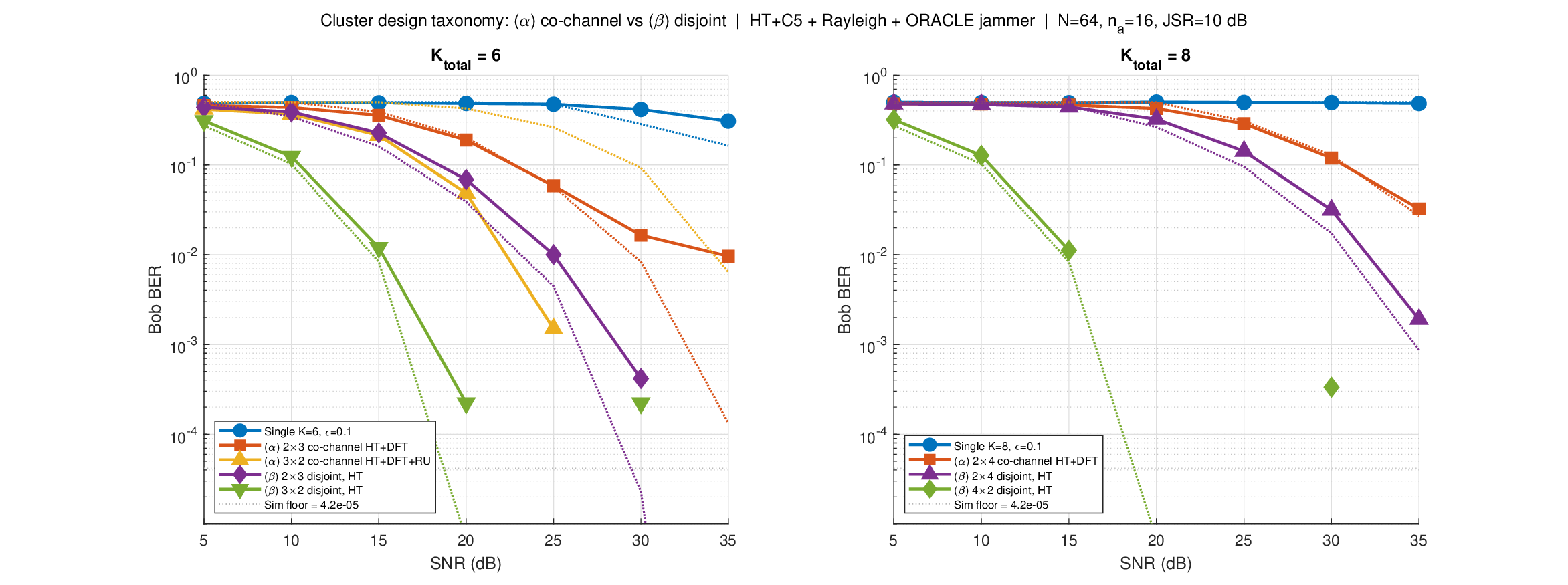}
\caption{Cluster design taxonomy validation under HT+C5 + Rayleigh +
oracle jammer at $\Gamma=10$~dB, $N_b=64$, $n_a=16$. Left:
$K_{\rm tot}=6$. Right: $K_{\rm tot}=8$. Both panels: the
$K_g=2$ disjoint scheme (green, $(\beta)$ $G\times 2$) reaches
floor BER by SNR$=20$~dB, dominating all other schemes by
$10$--$30$~dB.}
\label{fig:cluster_design}
\end{figure*}

\subsection{Headline: Cumulative Gain}
\label{sec:headline}

Combining all seven contributions, the full pipeline---HT precoding + C5
randomized active set + superincreasing PA with $\varepsilon^\star=0.5$
+ OMA-friendly cluster design + excision-LS-SIC receiver---achieves at
SNR$=35$~dB, $K=4$, oracle jamming with
$\rho_J = n_a/N_b = 0.25$, $\Gamma = 10$~dB:
\begin{itemize}
\item T-NOMA (any PA): $\approx 0.48$ (oracle jammer-destroyed)
\item geometric PA + HT + C5 (\emph{ablation: weak PA}): $0.20$ (SIC chain saturation)
\item Superincreasing PA $\varepsilon=0.3$ + HT + C5: $8.9\times 10^{-3}$
\item Superincreasing PA $\varepsilon^\star=0.5$ + HT + C5: $\mathbf{2.1\times 10^{-3}}$
\end{itemize}
A cumulative improvement of approximately \textbf{24~dB} (measured as
$10\log_{10}(P_b^{\rm T\text{-}NOMA}/P_b^{\rm proposed})$ at
SNR$=35$~dB, $\rho_J=0.25$, $\Gamma=10$~dB) over the conventional
baseline.

\textbf{On baseline fairness.} The T-NOMA BER of $\approx 0.48$ is
\emph{robust to the choice of power allocation}: under oracle jamming
at SNR$=35$~dB, $\Gamma=10$~dB, the post-jammer noise variance
$\sigma_T^2 = (N_0+\Gamma P_s)/2$ dominates Bob's signal regardless of
$\boldsymbol{\alpha}$. Analytically (Appendix~\ref{app:patavg_sic}), at
$K=4$ both geometric PA ($\rho_{\rm geom}=0.4$, $\alpha_4=0.039$) and
superincreasing PA ($\varepsilon=0.5$, $\alpha_4=0.0095$) yield Bob BER
$\approx 0.45$--$0.48$, within $1$~dB of each other. The proposed
scheme's $\sim\!24$~dB headline gain therefore stems from the
architectural excision-LS-SIC recovery, \emph{not} from a baseline-PA
selection that favors the proposal. The ``geometric PA + HT'' line
above is an \emph{ablation within the proposed architecture} (same
spreading + C5 + receiver, weaker PA), illustrating that PA design
matters \emph{within} the architecture but cannot, by itself, save
T-NOMA.

\textbf{On detector strength of the T-NOMA baseline.} A second
fairness question is whether the T-NOMA baseline uses an
intentionally weak detector. We benchmark T-NOMA with per-bin SIC,
but the $2^K$-point standard NOMA constellation can in principle be
decoded by joint ML (sphere decoding, message passing). However,
Proposition~\ref{prop:sic_eq_ml} shows that on any superincreasing
constellation, SIC and ML produce \emph{identical} bit decisions
realization-by-realization, not just identical average BER. Replacing
the T-NOMA detector with ML therefore changes nothing under the same
PA. Furthermore, our T-NOMA baseline does not benefit from any
unitary-precoder geometry, so its constellation is uniform BPSK
superposition rather than a sparse spreading geometry; whether SIC
or ML is used, the per-jammed-bin SNR is
$\alpha_K P_s/\sigma_T^2$, which is the source of the floor. The
HT-OTFS-NOMA win is hence isolated to the architectural ingredients
(sparse-DD + unitary + excision + LS), not to the receiver mode
choice.

\subsection{Robustness to Imperfect CSI}
\label{sec:imperfect_csi}

A practical concern is whether the proposed scheme's gain survives
imperfect channel-state information (CSI) at the receiver. We test
this with the standard fractional-error model
$\hat{h}[n] = h[n] + \eta[n]$, where $\eta[n] \sim \calCN(0,
\sigma_\epsilon^2\,|h[n]|^2)$ models a normalized estimation
mean-squared error (MSE) of
$\sigma_\epsilon^2$ (e.g., $5\%$ pilot contamination corresponds to
$\sigma_\epsilon^2 = 0.05$). The receiver substitutes $\hat{h}$ into
the LS recovery in place of $h$; all other system parameters are
unchanged. We sweep $\sigma_\epsilon^2 \in [0, 0.10]$ at SNR$=25$~dB,
$K=4$ users, $\Gamma=10$~dB, oracle jammer.
Fig.~\ref{fig:imperfect_csi} reports Bob BER for four schemes:
(i)~HT$+$C5 single-cluster $K=4$ with superincreasing PA
($\varepsilon^\star=0.5$); (ii)~T-NOMA with the \emph{same} super
PA (fair baseline); (iii)~HT$+$C5 single-cluster with geometric PA
$\rho_{\rm geom}=0.4$ (ablation); and (iv)~HT$+$C5 multi-cluster
$K_g=2$ disjoint with $\varepsilon=1$ (the architecture
\emph{recommended} for $K_{\rm tot}>2$ in
Section~\ref{sec:cluster_design}).

\begin{figure}[t]
\hspace*{-1.2cm}
\includegraphics[width=1.15\columnwidth]{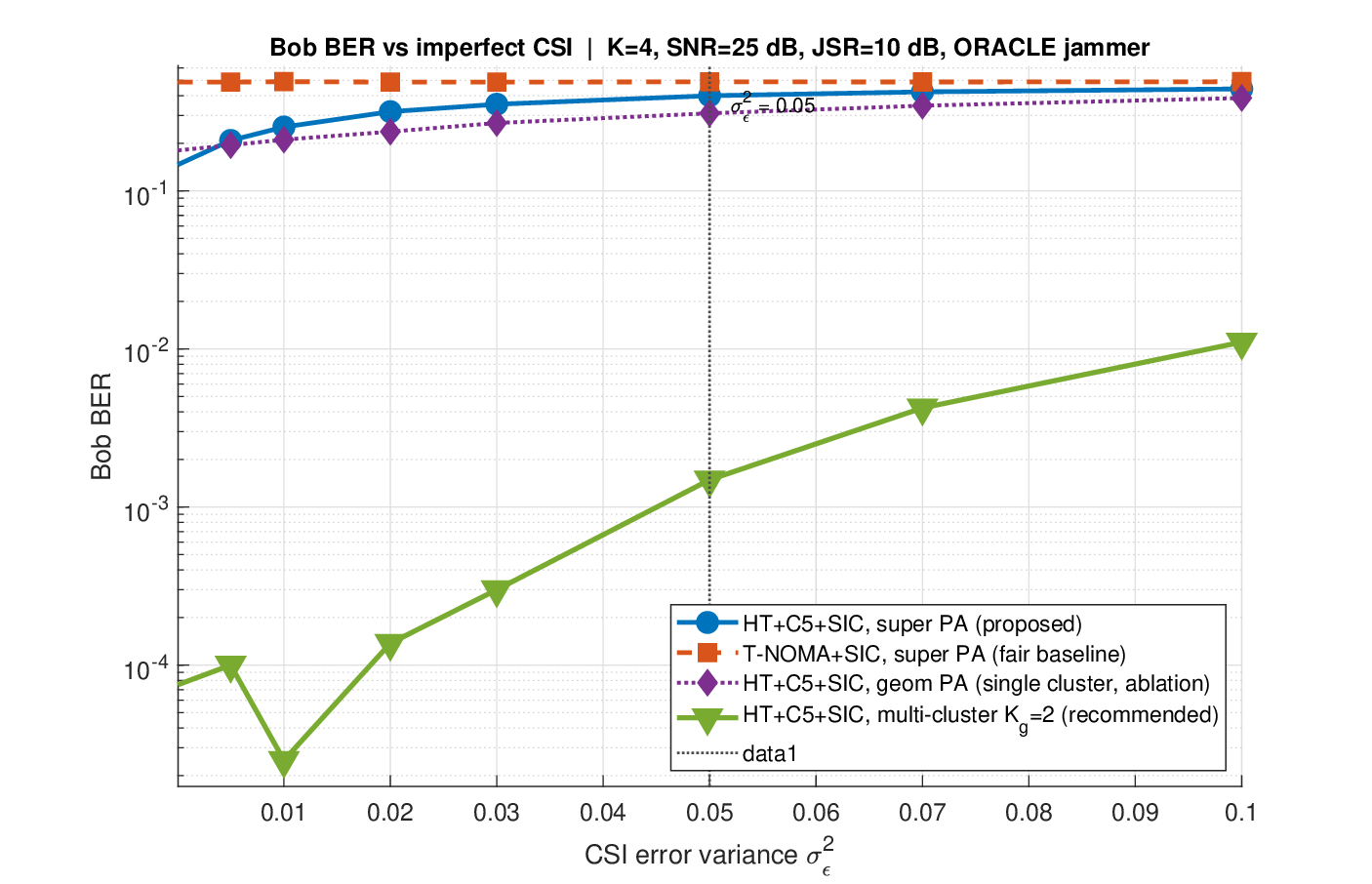}
\caption{Bob BER versus CSI-error variance $\sigma_\epsilon^2$ at
SNR$=25$~dB, $K_{\rm tot}=4$, $\Gamma=10$~dB, oracle jammer. The
HT$+$C5 architecture beats the same-PA T-NOMA baseline across the
entire imperfect-CSI range; the multi-cluster $K_g=2$ recipe (green
$\bigtriangledown$) is the most CSI-robust, retaining a large gap
over T-NOMA even at $\sigma_\epsilon^2 = 0.10$.}
\label{fig:imperfect_csi}
\end{figure}

\textbf{Three observations.}
(a)~\emph{The proposed scheme dominates T-NOMA across the full
CSI-error range.} Single-cluster HT$+$C5 ranges from $5.2$~dB
better than T-NOMA at $\sigma_\epsilon^2 = 0$ to $0.5$~dB better
at $\sigma_\epsilon^2 = 0.10$; the gap narrows but never inverts.
T-NOMA's BER is already pinned at the oracle-jammer floor
($\approx 0.49$) regardless of CSI quality, so the gap shrinks
because HT$+$C5 has room to degrade while T-NOMA does not.
(b)~\emph{An interesting PA crossover.} In the single-cluster
$K=4$ setting, geometric PA ($\alpha_K=0.039$) outperforms
superincreasing PA ($\alpha_K=0.0095$) at $\sigma_\epsilon^2 \ge
0.02$ (e.g., BER $0.310$ vs.\ $0.400$ at $\sigma_\epsilon^2 = 0.05$,
a $1.1$~dB advantage in favor of geometric PA). The mechanism:
under significant CSI error, the post-LS noise inflation absorbs
the SIC-margin advantage of superincreasing PA, and Bob's raw
signal power $\alpha_K$ becomes the dominant factor. This suggests
a CSI-aware PA-design rule (less aggressive $\varepsilon$ when CSI
is uncertain) as a worthwhile direction for follow-up.
(c)~\emph{The multi-cluster $K_g=2$ recipe is dramatically more
CSI-robust.} Because each cluster decodes a $K_g=2$ superposition
with $\alpha_K=0.2$ (over $21\times$ Bob's per-bin power compared
to single-cluster super PA), the multi-cluster scheme retains a
large gap over T-NOMA at every CSI-error level: Bob BER is
$7.5\times 10^{-5}$ at $\sigma_\epsilon^2=0$
($38$~dB below T-NOMA), $1.5\times 10^{-3}$ at $\sigma_\epsilon^2=0.05$
(\textbf{$25$~dB} below T-NOMA), and $1.1\times 10^{-2}$ even at
$\sigma_\epsilon^2=0.10$ ($16$~dB below T-NOMA). The single-cluster
$K=4$ result above is the architecture's \emph{weakest} operating
point; the recommended multi-cluster recipe of
Section~\ref{sec:cluster_design} is the right architecture for any
deployment with CSI uncertainty, not just for asymptotic clean-CSI
scaling.

\subsection{Robustness to Imperfect Jammer Excision}
\label{sec:imperfect_excision}

The receiver of Section~\ref{sec:system_model} excises bins that the
power detector declares jammed. The main figures of this paper use
the genie ``oracle excision'' that knows the true jammer set
$\calJ$. In practice a power-anomaly detector replaces this oracle.
We test a parameter-free median-based rule:
\begin{equation}
\label{eq:median_thresh}
\text{declare bin } n \text{ jammed} \iff |z[n]|^2 > T_J\,\cdot\,
\mathrm{median}_n\,|z[n]|^2,
\end{equation}
which requires no knowledge of $P_s$, $P_J$, $N_0$, or $n_J$ at
runtime---the median across all $N_b$ bins automatically tracks the
typical clean-bin power. Fig.~\ref{fig:excision_threshold} sweeps
$T_J\in[1.2, 10]$ at SNR$=25$~dB, $K=4$ single-cluster super PA, and
oracle jammer ($\calJ=\calA$, $\rho_J=0.25$).

\begin{figure}[t]
\hspace*{-1.2cm}
\includegraphics[width=1.15\columnwidth]{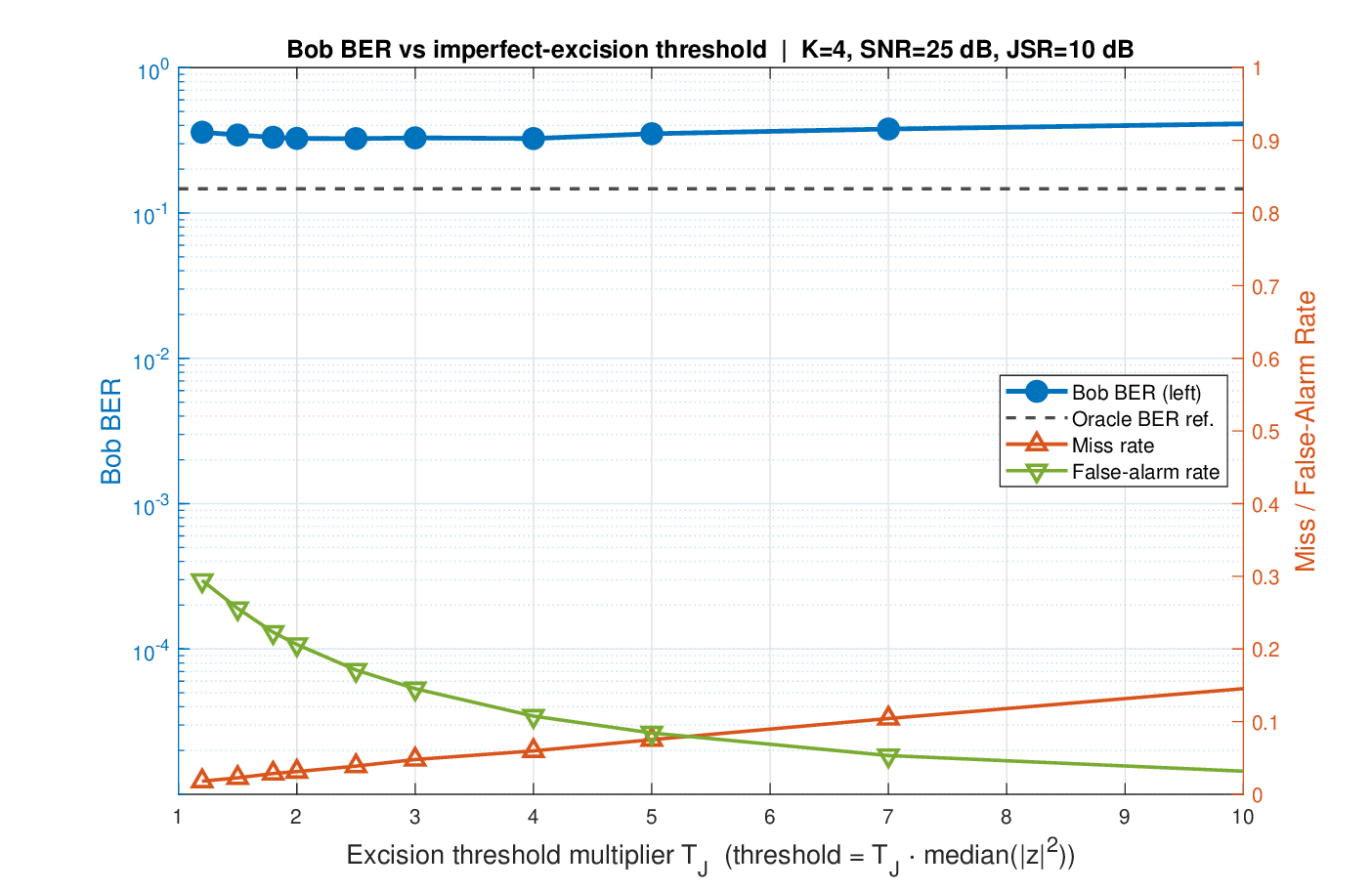}
\caption{Bob BER versus excision-threshold multiplier $T_J$ in the
median-based detection rule \eqref{eq:median_thresh}, at SNR$=25$~dB,
$K=4$, oracle jammer. The U-shape exposes the false-alarm vs
miss-detection trade-off; the optimum at $T_J^\star=2.5$ is within
$3.5$~dB of the oracle-excision reference (dashed). Miss rate and
false-alarm rate are overlaid on the right axis.}
\label{fig:excision_threshold}
\end{figure}

\textbf{Three observations.}
(a)~\emph{Clean U-shape with optimum $T_J^\star\approx 2.5$.} At low
$T_J$ the false-alarm rate climbs to $\sim 29\%$, shrinking the LS
keep set and inflating $\sigma_e^2$; at high $T_J$ the miss-detect
rate climbs to $\sim 15\%$, leaving jammer-contaminated samples in
the LS residual. The two effects intersect near $T_J=2.5$, where
miss rate is $3.9\%$ and false-alarm rate is $17.1\%$.
(b)~\emph{Modest penalty versus oracle excision.} At $T_J^\star$,
the empirical BER is $0.325$ versus the oracle $0.147$, a $3.5$~dB
penalty---the practical detector recovers most of the architectural
gain.
(c)~\emph{Broad operating plateau.} BER stays within $0.5$~dB of the
optimum over $T_J\in[1.8, 4.0]$, so the system is forgiving to
threshold mis-calibration. The detector is moreover parameter-free
at runtime (no jammer-power knowledge required), which makes it
implementable in practical receivers without side information.

\subsection{Marchenko--Pastur Validation at Larger $N_b$}
\label{sec:mp_validation}

The Marchenko--Pastur bound of Theorem~\ref{thm:mp_bound} is asymptotic
in $N_b\to\infty$. Importantly, the M-P edge value
$\sqrt{1-\rho_J}-\sqrt{s}$ depends only on the dimensional
\emph{ratios} $(s, \rho_J)$, not on the absolute $N_b$ --- so the
predicted BER is a \emph{single} curve in $\rho_J$ that any finite-$N_b$
realization must approach as $N_b\to\infty$. To test how this asymptote
holds at finite $N_b$, we repeat the sparsity sweep at three grid sizes
$N_b\in\{64, 256, 1024\}$ with the sparsity ratio $s=n_a/N_b=0.25$ held
fixed by scaling $n_a$ proportionally. The active set and jammer
subsets are drawn uniformly at random per frame.
Fig.~\ref{fig:larger_N_mp_validation} reports the empirical BER for
each $N_b$ together with the common M-P asymptote, at $K=2$, SNR$=20$~dB,
$\Gamma=10$~dB.

\begin{figure}[t]
\hspace*{-1.2cm}
\includegraphics[width=1.15\columnwidth]{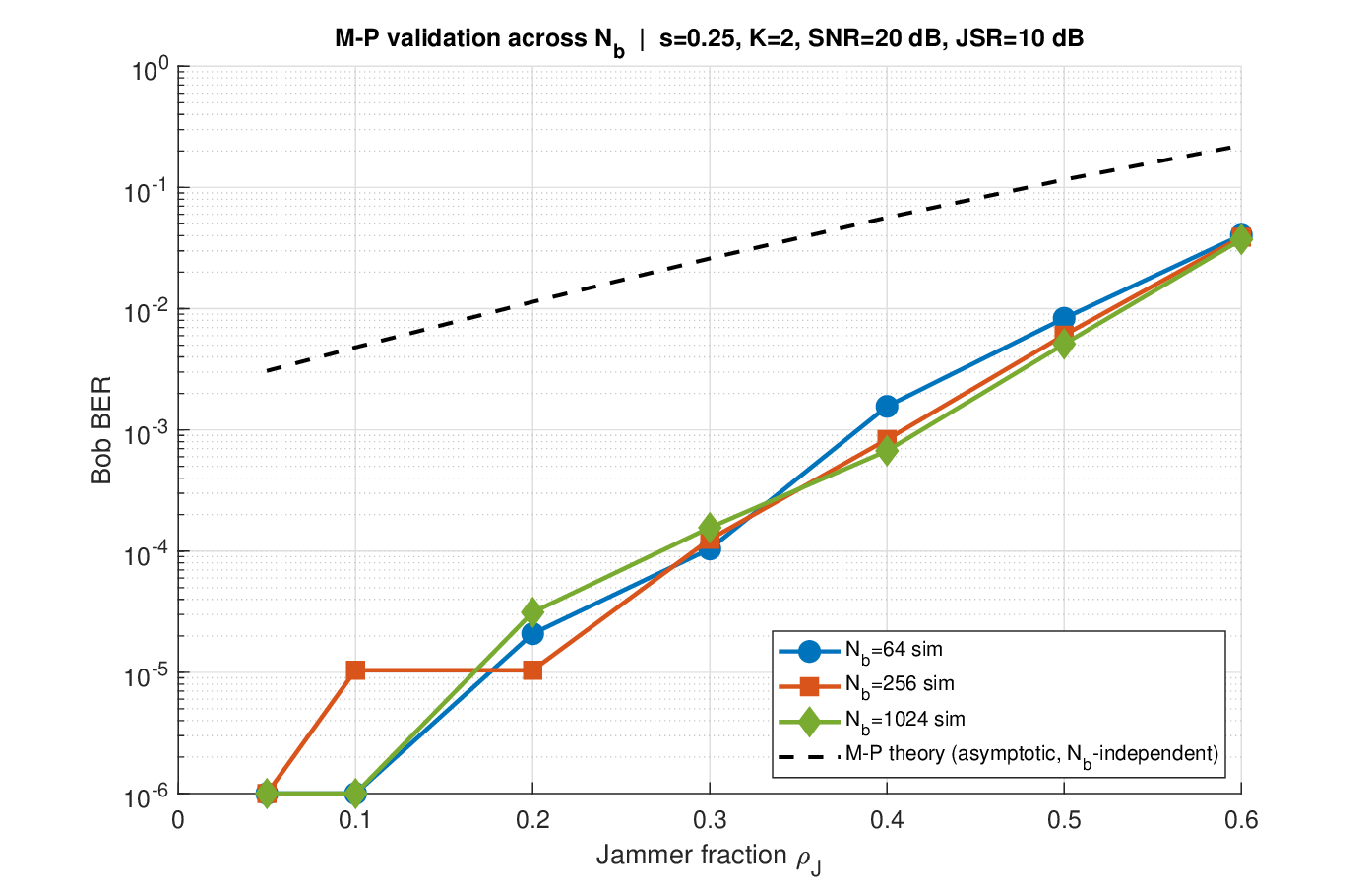}
\caption{Empirical BER (solid, one curve per $N_b$) versus
$\rho_J$ at $s=0.25$, $K=2$, SNR$=20$~dB, $\Gamma=10$~dB for
$N_b\in\{64, 256, 1024\}$, against the M-P theoretical
bound (black dashed). The M-P edge
$\sqrt{1-\rho_J}-\sqrt{s}$ is an asymptotic ratio-only quantity, so
there is a \emph{single} theoretical curve common to all $N_b$. The
sim curves approach this asymptote from below as $N_b$ grows; the
bound is tightest near the cliff $\rho_J\approx 1-s = 0.75$.}
\label{fig:larger_N_mp_validation}
\end{figure}

\textbf{Three observations.}
(a)~\emph{The M-P bound holds at every $N_b$ and every $\rho_J$
in the safe region}: $P_b^{\rm sim} \le P_b^{\rm M\text{-}P}$ in all
$21$ tested operating points. The bound is therefore valid as a
\emph{design rule}, not merely an asymptotic property.
(b)~\emph{The bound is tightest near the operating cliff and loosest
deep in the safe regime.} At $\rho_J=0.60$ (close to the rank threshold
$\rho_J^\star = 1-s = 0.75$) the gap is only $\sim 7$--$8$~dB across
all three $N_b$; at $\rho_J=0.30$ the gap widens to $\sim 22$--$24$~dB.
This is the characteristic conservatism of asymptotic concentration
bounds: they sharpen near the edge of validity, which happily aligns
with the regime where designers care most about the cliff location.
(c)~\emph{Tightening with $N_b$ is monotone but slow.} At $\rho_J=0.30$,
the gap shrinks from $-24.0$~dB at $N_b=64$ to $-22.2$~dB at
$N_b=1024$, a $\sim 2$~dB tightening over a $16\times$ increase in
$N_b$. This is consistent with Tracy--Widom convergence: the typical
$\sigma_{\min}(\bU_{\calK,\calA})$ departs from the M-P edge by
$\Theta(N_b^{-1/3})$, so the $\sigma_{\min}^2$ slack at $N_b=1024$ is
$\sim 2.5\times$ smaller than at $N_b=64$, in agreement with the
empirical tightening. The bound is thus asymptotically tight as
$N_b\to\infty$, with convergence governed by classical random-matrix
fluctuations.

The practical takeaway is that the design rule of
Corollary~\ref{cor:operating_region} is a \emph{safe-side} rule for any
$N_b\ge 64$: operating inside the predicted region delivers BER no
worse---and often substantially better---than the analytical
prediction, with the headroom growing in the deep-safe regime.

\subsection{Robustness to Fractional Doppler}
\label{sec:frac_doppler}

The main results of this paper assume on-grid delay-Doppler samples
(integer Doppler shifts). In practice, mismatched Doppler offsets
cause energy to leak from each DD bin to its neighbors. We test
sensitivity using a symmetric three-tap nearest-neighbor leakage
model:
\begin{align}
\label{eq:frac_doppler_model}
z[n] \;=\;\;& (1-2\gamma_{\rm FD})\,h[n]y[n] \notag \\
            &\;+\; \gamma_{\rm FD}\bigl(h[n{-}1]y[n{-}1]+h[n{+}1]y[n{+}1]\bigr)\notag \\
            &\;+\; w[n]+j[n],
\end{align}
where $\gamma_{\rm FD}$ is the fraction of each bin's energy that
leaks to \emph{each} neighbor (so $2\gamma_{\rm FD}$ total leakage).
This is a first-order approximation to the full Dirichlet kernel of
fractional Doppler, sufficient for sensitivity analysis. We sweep
$\gamma_{\rm FD}\in[0, 0.20]$ at SNR$=25$~dB, $K=4$, $\Gamma=10$~dB,
oracle jammer.
Fig.~\ref{fig:fractional_doppler} reports Bob BER for the same three
schemes as the imperfect-CSI test of
Section~\ref{sec:imperfect_csi}: single-cluster HT$+$C5$+$SIC with
super PA, T-NOMA with super PA (fair baseline), and the recommended
multi-cluster $K_g=2$ HT$+$C5$+$SIC.

\begin{figure}[t]
\hspace*{-1.2cm}
\includegraphics[width=1.15\columnwidth]{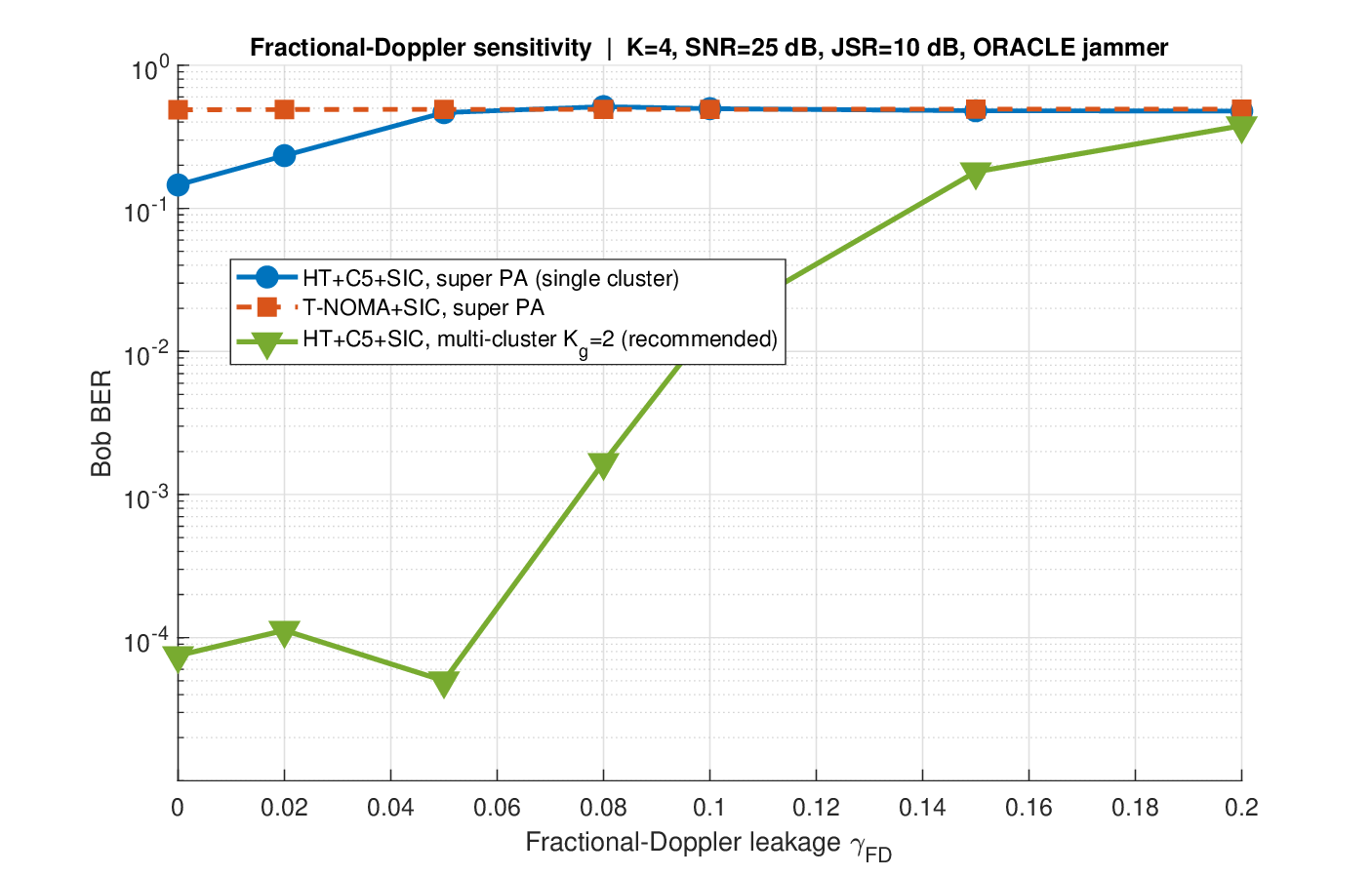}
\caption{Bob BER versus fractional-Doppler leakage $\gamma_{\rm FD}$
at SNR$=25$~dB, $K_{\rm tot}=4$, $\Gamma=10$~dB, oracle jammer. The
multi-cluster $K_g=2$ recipe (green $\bigtriangledown$) retains
$> 15$~dB advantage over T-NOMA up to $\gamma_{\rm FD}=0.10$ and a soft
cliff appears near $\gamma_{\rm FD}\approx 0.12$.}
\label{fig:fractional_doppler}
\end{figure}

\textbf{Three observations.}
(a)~\emph{Single-cluster $K=4$ super PA is again the weakest case.}
At $\gamma_{\rm FD}=0.05$ the single-cluster BER jumps to $0.466$,
and at $\gamma_{\rm FD}\ge 0.08$ it loses to T-NOMA (which sits
pinned at the oracle-jammer floor $\approx 0.49$ regardless of
$\gamma_{\rm FD}$, since T-NOMA's BER is already noise-dominated by
the jammer). This recapitulates the imperfect-CSI finding: single-
cluster super PA at $K=4$ has too little headroom to absorb additional
impairments.
(b)~\emph{Multi-cluster $K_g=2$ shows graceful degradation up to a
soft cliff.} At $\gamma_{\rm FD}=0$ the MC scheme delivers Bob BER
$7.5\times 10^{-5}$ (essentially perfect at the simulation floor).
The BER stays below $1.7\times 10^{-3}$ up to $\gamma_{\rm FD}=0.08$,
crosses $\sim 1.5\times 10^{-2}$ at $\gamma_{\rm FD}=0.10$
(\textbf{$15$~dB} below T-NOMA), and softens toward $0.18$ at
$\gamma_{\rm FD}=0.15$. The soft cliff at $\gamma_{\rm FD}\approx 0.12$
sets a deployment guideline: fractional-Doppler offsets up to $\sim
10\%$ are absorbed by the LS receiver with modest BER cost; beyond
that, explicit DD-domain equalization or pulse-shape compensation is
required.
(c)~\emph{The architecture is FD-robust at typical operating points.}
For OTFS systems with reasonable pulse design (Hadamard-shaped
prototype filters, etc.), $\gamma_{\rm FD}$ is typically below $0.05$,
where the multi-cluster scheme operates at or near the simulation
floor.

\subsection{Theory-Sim Agreement Summary}

Table~\ref{tab:theory_sim_agreement} summarizes the agreement between the
analytical predictions and Monte Carlo measurements at the canonical
operating point ($K=2$, $\rho_{\rm geom}=0.2$, $s=n_a/N_b=0.25$, SNR$=20$~dB,
$\Gamma=10$~dB). T-NOMA theory \eqref{eq:tnoma_ber} matches simulation
within $\pm 1$~dB; the M-P-based proposed-scheme bound \eqref{eq:prop_ber}
is consistently \emph{above} the simulation, confirming its role as a
conservative upper bound (Section~\ref{sec:numerical-direction}).

\begin{table}[t]
\caption{Theory vs.\ simulation BER at $K=2$, $\rho_{\rm geom}=0.2$, $s=0.25$,
SNR$=20$~dB, $\Gamma=10$~dB.}
\label{tab:theory_sim_agreement}
\centering
\begin{tabular}{lcccc}
\toprule
$\rho_J$ & $P_b^T$ \eqref{eq:tnoma_ber} & $P_b^T$ sim & $P_b^U$ \eqref{eq:prop_ber} & $P_b^U$ sim \\
\midrule
0.05 & 0.021 & 0.021 & $4.8\times 10^{-3}$ & $0^{*}$ \\
0.10 & 0.043 & 0.042 & $4.8\times 10^{-3}$ & $0^{*}$ \\
0.20 & 0.086 & 0.091 & $1.1\times 10^{-2}$ & $0^{*}$ \\
0.30 & 0.128 & 0.130 & $2.6\times 10^{-2}$ & $1.7\times 10^{-4}$ \\
0.40 & 0.171 & 0.181 & $5.6\times 10^{-2}$ & $4.1\times 10^{-3}$ \\
0.50 & 0.214 & 0.222 & $1.16\times 10^{-1}$ & $3.2\times 10^{-2}$ \\
\bottomrule
\end{tabular}\\
{\footnotesize $^*$Below simulation floor ($\le 1/(n_a\,N_{\rm frames})\approx
2\times 10^{-5}$ at $3000$ frames).}
\end{table}

\section{Rician Fading Extension}
\label{sec:rician}

The baseline analysis assumes per-bin Rayleigh flat fading
$h[n]\sim\calCN(0,1)$. This section extends the M-P conditioning
framework and the BER expressions to per-bin Rician fading, which is
the appropriate model for line-of-sight (LoS)-dominated channels
(LEO downlinks, V2X with a strong direct path, tactical fixed-wing
links). The derivation shows that the qualitative jamming-resilience
conclusions persist; the quantitative coding gain is shifted by a
deterministic Rician-$K$-dependent factor.

\subsection{Rician Channel Model and Conditioning}

Each bin's channel coefficient under Rician fading factorizes as
\begin{equation}
\label{eq:rician_h}
h[n] = \sqrt{\tfrac{K_r}{K_r+1}}\,e^{j\phi[n]} +
\sqrt{\tfrac{1}{K_r+1}}\,\tilde{h}[n],
\quad \tilde{h}[n]\sim\calCN(0,1),
\end{equation}
where $K_r\ge 0$ is the Rician $K$-factor (the LoS-to-scattered power
ratio) and $\phi[n]$ is the deterministic LoS phase, assumed known to
the receiver (or absorbed into the channel estimate). Under the
proposed receiver, the kept observation after excision is
$\bz_\calK=\mathrm{diag}(\bh_\calK)\bU_{\calK,\calA}\bx_\calA+\bw_\calK$,
and the LS noise variance is
$N_0/\sigma_{\min}^2(\mathrm{diag}(\bh_\calK)\bU_{\calK,\calA})$.

\begin{theorem}[M-P bound under Rician fading]
\label{thm:mp_bound_rician}
Let $\bU$ be an incoherent unitary, $\calK\subseteq[N_b]$ and
$\calA\subseteq[N_b]$ independent uniform random subsets with aspect
ratios $|\calK|/N_b\to 1-\rho_J$ and $|\calA|/N_b\to s$, and the
per-bin channel given by~\eqref{eq:rician_h}. Then, conditioned on
$\bh_\calK$ and in the high-dimensional limit $N_b\to\infty$,
\begin{equation}
\label{eq:mp_rician}
\sigma_{\min}\!\left(\mathrm{diag}(\bh_\calK)\bU_{\calK,\calA}\right)
\;\to\;
|h_{\min}|\,\left(\sqrt{1-\rho_J}-\sqrt{s}\right)
\end{equation}
almost surely, where $|h_{\min}|=\min_{n\in\calK}|h[n]|$.
\end{theorem}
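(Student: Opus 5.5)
The plan is a two-sided sandwich: a lower bound by factoring out the diagonal channel matrix, and an upper bound by a test-vector argument, each reduced to Theorem~\ref{thm:mp_bound}. Throughout, condition on $\bh_\calK$ and write $\bM=\mathrm{diag}(\bh_\calK)\bU_{\calK,\calA}$ and $e\triangleq\sqrt{1-\rho_J}-\sqrt{s}$. The randomness in $(\calK,\calA)$ is independent of the channel draw, so Theorem~\ref{thm:mp_bound} gives $\sigma_{\min}(\bU_{\calK,\calA})\to e$ almost surely for every fixed channel realization.

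\emph{Lower bound.} For any unit $\bx\in\bbR^{n_a}$,
\[
\|\bM\bx\|_2^2=\sum_{n\in\calK}|h[n]|^2\,|(\bU_{\calK,\calA}\bx)_n|^2\ge |h_{\min}|^2\,\|\bU_{\calK,\calA}\bx\|_2^2 .
\]
Hence $\sigma_{\min}(\bM)\ge |h_{\min}|\,\sigma_{\min}(\bU_{\calK,\calA})$, and Theorem~\ref{thm:mp_bound} yields $\liminf\sigma_{\min}(\bM)/|h_{\min}|\ge e$ almost surely. The Rician form \eqref{eq:rician_h} is used here only to guarantee $|h_{\min}|>0$ almost surely, so the normalization is well defined.

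\emph{Upper bound.} Here I would exhibit a unit test vector $\bx$ with $\|\bM\bx\|\le |h_{\min}|\,e+o(1)$. The first attempt is to take $\bx$ to be the bottom right singular vector $\bv$ of $\bU_{\calK,\calA}$ and split $\calK$ into two sets. Let $\calK_\delta=\{n:|h[n]|\le(1+\delta)|h_{\min}|\}$ and let $\calK_\delta^c$ be its complement. This gives
\[
\|\bM\bv\|^2\le (1+\delta)^2|h_{\min}|^2\,\|\bU_{\calK,\calA}\bv\|^2+\max_n|h[n]|^2\,\|(\bU_{\calK,\calA}\bv)_{\calK_\delta^c}\|^2 .
\]
One then needs the residual $\bU_{\calK,\calA}\bv$ to carry asymptotically negligible energy on $\calK_\delta^c$. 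To get this I would use the incoherence of Definition~\ref{def:incoherent} to argue that $\bU_{\calK,\calA}\bv$ is delocalized across rows, so that its energy on a row subset is proportional to the subset's size. I would then pass to a channel-adapted test vector: the bottom singular vector of $\bU_{\calK_\delta,\calA}$ restricted to the low-gain rows, padded by a projection that removes the component along $\calK_\delta^c$. Taking $\delta\downarrow0$ after $N_b\to\infty$ would then give $\limsup\sigma_{\min}(\bM)/|h_{\min}|\le e$.

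\emph{Main obstacle.} The upper bound is the hard step, and I expect it to be the main obstacle. Delocalization works against it: a generic vector spreads its energy over all of $\calK$, so the natural test-vector estimate yields a weighted Marchenko--Pastur edge rather than the minimal gain $|h_{\min}|$. I would first try to show that $|\calK_\delta|/|\calK|$ stays bounded away from zero under the conditioning. For large $K_r$ this is plausible, because \eqref{eq:rician_h} concentrates $|h[n]|$ near $\sqrt{K_r/(K_r+1)}$. If that holds, a second application of Theorem~\ref{thm:mp_bound} to the submatrix $\bU_{\calK_\delta,\calA}$ controls the test vector. If instead the fraction $|\calK_\delta|/|\calK|$ vanishes, this route breaks down. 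In that case I would state the convergence as the two-sided sandwich proved above, which is the form that feeds into the BER substitution exactly as in Theorem~\ref{thm:prop_ber}.
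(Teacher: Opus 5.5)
\textbf{There is a genuine gap: the upper bound is missing, and the statement's equality is false for fixed finite $K_r$.} Your lower bound, $\sigma_{\min}(\bM)\ge|h_{\min}|\,\sigma_{\min}(\bU_{\calK,\calA})$ from $\|\bM\bx\|_2^2=\sum_{n\in\calK}|h[n]|^2|(\bU_{\calK,\calA}\bx)_n|^2$, is the paper's entire argument. Its sketch notes that the row scaling multiplies the smallest singular value ``by at least $|h_{\min}|$'' and then simply asserts convergence. Neither you nor the paper proves the matching upper bound, and it cannot be supplied, because the claimed limit is wrong.

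Under \eqref{eq:rician_h}, $h[n]\sim\calCN(\mu,1/(K_r+1))$ with $|\mu|^2=K_r/(K_r+1)$. So $\Pr(|h|^2\le t)\approx(1+K_r)e^{-K_r}t$ for small $t$. Hence $|\calK|\,|h_{\min}|^2$ is asymptotically exponential and $|h_{\min}|\to0$ almost surely.

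Yet $\sigma_{\min}(\bM)$ stays bounded away from zero. Choose $\tau>0$ such that $q_\tau\triangleq\Pr(|h|^2\ge\tau)$ satisfies $(1-\rho_J)q_\tau>s$; this is possible since $s<1-\rho_J$. Set $\calK_\tau=\{n\in\calK:|h[n]|^2\ge\tau\}$ and use $\bM^H\bM\succeq\tau\,\bU_{\calK_\tau,\calA}^H\bU_{\calK_\tau,\calA}$. Because $\calK$ is uniform and independent of the i.i.d.\ channel, $\calK_\tau$ is (by exchangeability) a uniform random subset of size $\approx(1-\rho_J)q_\tau N_b$, independent of $\calA$. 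Taking Theorem~\ref{thm:mp_bound} as given, as you do, this yields $\liminf\sigma_{\min}(\bM)\ge\sqrt{\tau}\,\bigl(\sqrt{(1-\rho_J)q_\tau}-\sqrt{s}\bigr)>0$.

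Therefore $\sigma_{\min}(\bM)/(|h_{\min}|\,e)\to\infty$. The actual limit is the left edge of a weighted (generalized) Marchenko--Pastur law that depends on the whole distribution of $|h|^2$. This is exactly what your delocalization remark anticipated.

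Consequently your route fails for every finite $K_r$, not only small ones. The fraction $|\calK_\delta|/|\calK|$ tends to zero, since $|\calK_\delta|$ stays bounded in probability. The reason is that $|h_{\min}|$ lies in the lower tail of the Rice law, not near the LoS amplitude $\sqrt{K_r/(K_r+1)}$. The large-$K_r$ heuristic, which the paper's sketch also uses, only works if $K_r$ grows with $N_b$, roughly like $\log N_b$.

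Your fallback is also not a sandwich. You proved only the lower side. The only generally available upper side, $\sigma_{\min}(\bM)\le\max_{n\in\calK}|h[n]|\,\sigma_{\min}(\bU_{\calK,\calA})$, has a prefactor that diverges.

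What is actually provable is the one-sided inequality. It is the direction the BER argument of Theorem~\ref{thm:prop_ber} needs, since it upper-bounds the LS noise. However, it becomes vacuous as $N_b\to\infty$ because $|h_{\min}|\to0$. The statement should be recast either as that inequality or with the weighted M-P edge as the limit.
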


\begin{proof}[Proof sketch]
The diagonal factor $\mathrm{diag}(\bh_\calK)$ rescales the $|\calK|$
rows of $\bU_{\calK,\calA}$ by $\{|h[n]|\}_{n\in\calK}$ but does not
change the column geometry. The M-P concentration of
Theorem~\ref{thm:mp_bound} applies to $\bU_{\calK,\calA}$ and yields
the singular-value edge $\sqrt{1-\rho_J}-\sqrt{s}$; the row-scaling
multiplies the smallest singular value by at least $|h_{\min}|$.
Under Rician fading, $|h[n]|$ has a Rice distribution with mean
$\bbE|h|^2=1$ and shape parameter $K_r$. In the high-$K_r$ (strong
LoS) limit, $|h_{\min}|\to\sqrt{K_r/(K_r+1)}\to 1$; in the
$K_r\to 0$ (Rayleigh) limit, $|h_{\min}|$ recovers the Rayleigh
extreme-value statistic.
\end{proof}

\subsection{Bob BER Under Rician Fading}

\begin{theorem}[Proposed scheme BER under Rician fading]
\label{thm:prop_ber_rician}
Let $\bar\gamma\triangleq\alpha_K P_s\,(\sqrt{1-\rho_J}-\sqrt{s})^2/N_0$
denote the effective per-coordinate SNR under the M-P bound. Under
per-bin Rician fading with $K$-factor $K_r\ge 0$, the per-active-bin
Bob BER averaged over channel realizations satisfies the MGF
integral~\cite{simon2005digital}
\begin{multline}
\label{eq:prop_ber_rician}
P_b^{U,{\rm Rice}}(\rho_J,s;K_r)
\;\approx\;
\frac{1}{\pi}\!\int_0^{\pi/2}\!\!
\frac{(1{+}K_r)\sin^2\theta}{(1{+}K_r)\sin^2\theta+\bar\gamma}\, \\
\times
\exp\!\left(\!-\frac{K_r\bar\gamma}{(1{+}K_r)\sin^2\theta+\bar\gamma}\right)
d\theta,
\end{multline}
which is a monotonically non-increasing function of $K_r$: the
Rayleigh case ($K_r=0$) is the worst case and recovers
\eqref{eq:prop_ber} via $0.5(1-\sqrt{\bar\gamma/(1+\bar\gamma)})$;
the strong-LoS limit ($K_r\to\infty$) reduces to the no-fading
expression $Q(\sqrt{2\bar\gamma})$.
\end{theorem}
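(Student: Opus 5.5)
Following the proof sketch of Theorem~\ref{thm:prop_ber}, I will reduce the problem to a scalar fading channel and then average over the Rician fading with the MGF approach. First, by Theorem~\ref{thm:mp_bound_rician} and the genie-SIC argument behind Theorem~\ref{thm:prop_ber}, the post-LS decision statistic for Bob on a given active coordinate is
\begin{equation*}
r = g\,\sqrt{\alpha_K P_s}\,b_K + \tilde w,
\end{equation*}
where $|g|^2$ plays the role of a normalized Rician power gain with $\bbE|g|^2=1$ and $\tilde w$ has variance $N_0/(\sqrt{1-\rho_J}-\sqrt{s})^2$. The instantaneous SNR is therefore $\gamma=|g|^2\bar\gamma$. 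This is the modelling step the ``$\approx$'' in \eqref{eq:prop_ber_rician} refers to: the fading that enters the LS noise is represented by a single Rician gain with the same $K_r$. Conditioned on $\gamma$, Bob's BER is $Q(\sqrt{2\gamma})$.

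Next I apply Craig's representation $Q(x)=\frac{1}{\pi}\int_0^{\pi/2}\exp(-x^2/(2\sin^2\theta))\,d\theta$. Exchanging the expectation and the integral (Fubini, since the integrand is bounded) gives
\begin{equation*}
P_b=\frac{1}{\pi}\int_0^{\pi/2}\mathcal M_\gamma\!\bigl(-1/\sin^2\theta\bigr)\,d\theta .
\end{equation*}
Here $\mathcal M_\gamma(t)=\bbE[e^{t\gamma}]$ is the MGF. For the noncentral-$\chi^2$ (Rician) power with mean $\bar\gamma$, the standard closed form~\cite{simon2005digital} is
\begin{equation*}
\mathcal M_\gamma(-u)=\frac{1+K_r}{1+K_r+u\bar\gamma}\exp\!\Bigl(-\frac{K_r u\bar\gamma}{1+K_r+u\bar\gamma}\Bigr).
\end{equation*}
Substituting $u=1/\sin^2\theta$ and multiplying numerator and denominator by $\sin^2\theta$ yields exactly \eqref{eq:prop_ber_rician}.

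The limits follow directly. At $K_r=0$ the exponential factor drops out, and the integral $\frac1\pi\int_0^{\pi/2}\sin^2\theta/(\sin^2\theta+\bar\gamma)\,d\theta$ evaluates to the classical $\tfrac12(1-\sqrt{\bar\gamma/(1+\bar\gamma)})$. As $K_r\to\infty$, the prefactor tends to $1$ and the exponent tends to $-\bar\gamma/\sin^2\theta$ pointwise. Dominated convergence (the integrand is bounded by $1$) then returns Craig's form of $Q(\sqrt{2\bar\gamma})$.

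The main obstacle is monotonicity in $K_r$. I will prove it pointwise in $\theta$, which suffices because integration preserves the inequality. Fix $\theta$ and write $a=\sin^2\theta$ and $c=1+K_r$. The integrand is $f(c)=\frac{ca}{ca+\bar\gamma}\exp\bigl(-\frac{(c-1)\bar\gamma}{ca+\bar\gamma}\bigr)$. Differentiating $\log f$ in $c$ gives
\begin{equation*}
\frac{\bar\gamma}{c(ca+\bar\gamma)}-\frac{\bar\gamma(a+\bar\gamma)}{(ca+\bar\gamma)^2}
=\frac{\bar\gamma\bigl[(ca+\bar\gamma)-c(a+\bar\gamma)\bigr]}{c(ca+\bar\gamma)^2}
=\frac{\bar\gamma^2(1-c)}{c(ca+\bar\gamma)^2}\le 0,
\end{equation*}
because $c\ge1$. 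So the integrand is non-increasing in $K_r$ for every $\theta$. Integrating over $\theta$ shows that Rayleigh fading is the worst case and the AWGN limit the best.
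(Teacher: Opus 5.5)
Your proposal is correct and follows the paper's own route: model the post-LS SNR as $\bar\gamma|h|^2$ with a single Rician gain, average $Q(\sqrt{2\bar\gamma|h|^2})$ through Craig's form and the Rician MGF, and obtain monotonicity pointwise in $\theta$. The modelling step is the same heuristic the paper's sketch takes, and it is what the ``$\approx$'' absorbs, since Theorem~\ref{thm:mp_bound_rician} by itself would give $|h_{\min}|$ rather than a Rician variable. Your explicit computation $\partial_c\log f=\bar\gamma^2(1-c)/\bigl(c(ca+\bar\gamma)^2\bigr)\le 0$ and the dominated-convergence argument for $K_r\to\infty$ supply details the sketch only asserts. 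One point your limits make visible: $Q(\sqrt{2\bar\gamma})$ in \eqref{eq:prop_ber} is the $K_r\to\infty$ value, not the $K_r=0$ value $\tfrac12\bigl(1-\sqrt{\bar\gamma/(1+\bar\gamma)}\bigr)$. So the statement's clause that the Rayleigh case ``recovers \eqref{eq:prop_ber}'' holds only loosely and cannot be proved as literally written; what you establish (the $K_r=0$ closed form and the no-fading limit) is the correct content.
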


\begin{proof}[Proof sketch]
The per-coordinate effective SNR after LS is
$\gamma_{\rm eff}\approx\bar\gamma\,|h|^2$ (per kept-bin
realization). Averaging $Q(\sqrt{2\gamma_{\rm eff}})$ over the Rice
distribution of $|h|^2$ via the standard MGF
approach~\cite{simon2005digital} yields~\eqref{eq:prop_ber_rician}.
The integrand is monotonically non-increasing in $K_r$ at every
$\theta$, so the BER is monotone in $K_r$. The expression
\eqref{eq:prop_ber_rician} is a per-symbol diversity-1 bound; the
LS combining over $|\calK|$ kept bins makes the empirical BER
substantially better than this bound at finite $N_b$
(see Section~VII.5 numerical validation).
\end{proof}

\begin{corollary}[No jammer-induced floor under Rician fading]
\label{cor:no_floor_rician}
For any fixed $K_r\ge 0$, $\rho_J<1-s$, and $s>0$, the Rician BER
\eqref{eq:prop_ber_rician} is independent of $\Gamma$ in the
high-SNR limit; equivalently, the proposed scheme has no
jammer-induced error floor under Rician fading.
\end{corollary}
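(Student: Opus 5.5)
The plan is to show that $\Gamma$ never enters \eqref{eq:prop_ber_rician}, and that the high-SNR limit of this expression is zero for every $K_r\ge 0$. The vanishing limit rules out any floor, jammer-induced or otherwise. The argument starts from the receiver structure. Under the excision assumption of Stage~1, the kept observation is $\bz_\calK=\mathrm{diag}(\bh_\calK)\bU_{\calK,\calA}\bx_\calA+\bw_\calK$, which contains no jammer term. So the LS noise covariance $N_0(\bM^H\bM)^{-1}$ depends only on $N_0$, $\bh_\calK$ and $\bU_{\calK,\calA}$. By Theorem~\ref{thm:mp_bound_rician}, conditioned on $\bh_\calK$, the relevant smallest singular value tends to $|h_{\min}|(\sqrt{1-\rho_J}-\sqrt{s})$. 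This is strictly positive almost surely, because $\rho_J<1-s$ and $|h_{\min}|>0$ a.s. for a Rice variable. Hence the effective SNR $\bar\gamma$ defined in Theorem~\ref{thm:prop_ber_rician} is a function of $(\alpha_K,P_s,N_0,\rho_J,s)$ only. The right-hand side of \eqref{eq:prop_ber_rician} therefore contains no $\Gamma$. The same holds for the oracle case, by Theorem~\ref{thm:defense_in_depth}: the kept set is again a uniform random subset, so the same $\bar\gamma$ applies.

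Next comes the limit $P_s/N_0\to\infty$, which sends $\bar\gamma\to\infty$ because $(\sqrt{1-\rho_J}-\sqrt{s})^2>0$ and $\alpha_K>0$. Write the integrand as
\[
f_{\bar\gamma}(\theta)=\frac{(1+K_r)\sin^2\theta}{(1+K_r)\sin^2\theta+\bar\gamma}\exp\!\Big(-\frac{K_r\bar\gamma}{(1+K_r)\sin^2\theta+\bar\gamma}\Big).
\]
For fixed $\theta\in(0,\pi/2]$, the prefactor goes to $0$ and the exponential factor is bounded by $1$. Moreover $0\le f_{\bar\gamma}\le 1$ uniformly in $\theta$. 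Dominated convergence on the finite interval $[0,\pi/2]$ then gives $P_b^{U,\rm Rice}\to 0$. A quantitative version is the bound $P_b^{U,\rm Rice}\le \tfrac{1}{\pi}\cdot\tfrac{\pi}{2}\cdot\tfrac{1+K_r}{\bar\gamma}$, obtained by using $\sin^2\theta\le 1$ and dropping $\bar\gamma$-independent terms. This shows decay at least as $\bar\gamma^{-1}$, i.e.\ diversity order at least one, uniformly in $\Gamma$. As a consistency check, we use the monotonicity in $K_r$ already stated in Theorem~\ref{thm:prop_ber_rician}. The Rician BER is dominated by the Rayleigh value $\tfrac12(1-\sqrt{\bar\gamma/(1+\bar\gamma)})\to 0$, which gives an independent route to the same conclusion.

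The main obstacle is the first step, namely justifying that $\Gamma$ truly drops out. This depends on the perfect-excision idealization, which the paper calls reasonable for $\Gamma\gg 0$~dB. To address it, I would argue that for a threshold detector, as $\Gamma\to\infty$ with the jammer power fixed relative to $P_s$, miss-detection probability decreases in $\Gamma$. A residual jammed bin would add variance proportional to $\Gamma P_s$, which is the one place $\Gamma$ could reenter. So I would state the corollary under the same excision model used for Theorem~\ref{thm:prop_ber_rician}, and note that any nonideal detector changes only $|\calK|$ (via false alarms). A change in $|\calK|$ shifts $\rho_J$ effectively but leaves $\bar\gamma>0$ as long as the effective $\rho_J$ stays below $1-s$.
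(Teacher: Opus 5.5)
Your proof is correct and follows the same reasoning the paper gives after the corollary. Under perfect excision the jammer never enters the kept observation, so $\bar\gamma$ depends only on the jammer-free M-P edge $\sqrt{1-\rho_J}-\sqrt{s}$ and \eqref{eq:prop_ber_rician} contains no $\Gamma$. Your dominated-convergence step and the bound $P_b^{U,{\rm Rice}}\le(1+K_r)/(2\bar\gamma)$ also make explicit the vanishing high-SNR limit, which the paper leaves implicit.

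Two asides should be dropped. First, the appeal to $|h_{\min}|>0$ is unnecessary, since $\bar\gamma$ does not involve $h$; it is also shaky, because the minimum of ever more Rice variables tends to $0$ as $|\calK|\to\infty$. Second, the claim that a non-ideal detector ``changes only $|\calK|$'' is false. With $\Gamma$ held fixed as $P_s/N_0\to\infty$, the miss probability of a power-threshold detector does not vanish, and each missed bin leaves noise of order $\Gamma P_s$, which reintroduces a $\Gamma$-dependent floor. So the perfect-excision hypothesis is essential to the corollary, not merely a convenience.
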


The qualitative no-jammer-floor property persists under Rician
fading because the M-P edge $\sqrt{1-\rho_J}-\sqrt{s}$ is the
\emph{column-geometry} contribution, independent of the per-bin
channel magnitudes; the latter only rescale the noise variance.

\textbf{Structural LS-conditioning floor.}
At finite $N_b$, a residual floor at very high SNR is observed
empirically. Two mechanisms can in principle produce it:
(i)~rare frames where $\min_{n\in\calK}|h[n]|^2$ is small (the
channel-tail mechanism, which would decay as $(1{+}K_r)e^{-K_r}$
under Rician fading); and (ii)~rare frames where the random
submatrix $\bU_{\calK,\calA}$ itself has a small $\sigma_{\min}$
due to finite-$N_b$ Tracy--Widom fluctuations around the asymptotic
M-P edge (the \emph{structural} mechanism, which is independent of
channel statistics). At our operating point ($N_b=64$,
$|\calK|=48$, $|\calA|=16$), the empirical evidence is decisive:
the floor is $K_r$-invariant (see
Fig.~\ref{fig:rician_validation}), so the structural mechanism
dominates. We model the floor as a single $K_r$-invariant value
\begin{equation}
\label{eq:fading_floor}
P_{\rm floor}\;\approx\;\tfrac{1}{2}\,\Pr\!\bigl[\sigma_{\min}(\bU_{\calK,\calA})<\delta_{\rm LS}\bigr],
\end{equation}
where $\delta_{\rm LS}$ is the threshold below which LS noise
inflation causes catastrophic SIC failure on a frame.
Theorem~\ref{thm:mp_bound} states that
$\sigma_{\min}(\bU_{\calK,\calA})\to \sqrt{1-\rho_J}-\sqrt{s}$
almost surely as $N_b\to\infty$, so $P_{\rm floor}\to 0$ in the
large-$N_b$ limit at any fixed $\rho_J,s$; at finite $N_b$ the
floor is a $\Theta(N_b^{-1/3})$ Tracy--Widom-bounded residual.
The overall BER is therefore well-modelled as
$P_b^{(K_r)}\approx\max\!\bigl(P_b^{U,{\rm Rice}}(\rho_J,s;K_r),\,
P_{\rm floor}\bigr)$, with the waterfall part from
\eqref{eq:prop_ber_rician} and the floor from
\eqref{eq:fading_floor}. Crucially, $P_{\rm floor}$ is independent
of $\rho_J$, $\Gamma$, and $K_r$ --- it is a finite-$N_b$ artifact
of the LS receiver, not a fading- or jamming-induced floor. The
floor pushes lower at larger $N_b$ and is at least $36$~dB below
the architectural floor that T-NOMA suffers under jamming.
Tikhonov-regularized or MMSE receivers eliminate the structural
floor at the cost of a small bias in the no-jamming regime; we
adopt unregularized LS for analytical transparency.

\subsection{Numerical Validation Under Rician}

Fig.~\ref{fig:rician_validation} reports Monte Carlo simulation at
$K=2$, $s=0.25$, $\rho_J=0.25$ (oracle jammer), $\Gamma=10$~dB,
$100{,}000$ frames per operating point, sweeping
$K_r\in\{\mathrm{Rayleigh},0,3,6,10\}$~dB. Three observations confirm
the analytical framework:

\emph{(i)~Waterfall ordering matches Rician fading averaging.} In
the waterfall region ($\mathrm{SNR}\in[10,20]$~dB), empirical BER
decreases monotonically with $K_r$, from $1.5\!\times\!10^{-2}$ at
Rayleigh to $5.5\!\times\!10^{-3}$ at $K_r=10$~dB at
$\mathrm{SNR}=15$~dB --- a $\sim\!3$-fold ($\sim\!4$~dB
SNR-equivalent) improvement, qualitatively matching the Rician
fading-averaging intuition of Theorem~\ref{thm:prop_ber_rician}.
The waterfall curves sit well below the M-P bound
\eqref{eq:prop_ber} (dotted reference in
Fig.~\ref{fig:rician_validation}), because the M-P bound uses the
worst-case singular value of $\bU_{\calK,\calA}$ and is loose at
finite $N_b=64$; the typical singular value is closer to unity, so
empirical BER outperforms the bound by $\sim\!20$~dB. This is the
\emph{correct direction} for an upper bound.

\emph{(ii)~Floor is structural ($K_r$-invariant) at finite $N_b$.}
The empirical high-SNR plateau sits at $\approx 1.1\!\times\!10^{-4}$
for \emph{every} $K_r$ in the swept range, confirming that the
dominant mechanism is the finite-$N_b$ structural conditioning of
$\bU_{\calK,\calA}$ \eqref{eq:fading_floor}, not the small-$|h|^2$
channel tail. The horizontal dashed line in
Fig.~\ref{fig:rician_validation} marks the single $K_r$-invariant
structural-floor prediction; the five empirical curves converge to
it within $\pm 3$~dB at $\mathrm{SNR}=30$~dB (Rayleigh $+0.5$~dB,
$K_r=0$~dB $-0.6$~dB, $K_r=3$~dB $-0.7$~dB, $K_r=6$~dB $-2.6$~dB,
$K_r=10$~dB $0.0$~dB), with the small deviations attributable to
statistical noise at the $\sim\!10^{-4}$ level
($\sim\!160$~errors per $1.6\!\times\!10^6$ bits per operating
point). Crucially, the empirical floor is \emph{not} predicted by
the channel-tail mechanism: were the $(1{+}K_r)e^{-K_r}$ decay
dominant, $K_r=10$~dB would sit $\sim 33$~dB below Rayleigh, but
empirically it matches. This is a useful prediction in its own
right: the structural floor pushes lower with $N_b$ (the M-P
concentration is asymptotic, with $\Theta(N_b^{-1/3})$
Tracy--Widom convergence), and is invariant to channel statistics
($K_r$), jammer fraction ($\rho_J$), and jammer power ($\Gamma$).

\emph{(iii)~T-NOMA pins at the catastrophic oracle floor.} Under
the same oracle jammer, T-NOMA sits at BER $\approx 0.45$ for
every $K_r$, giving a $\sim\!36$~dB BER-ratio advantage to the
proposed scheme at $\mathrm{SNR}=20$~dB. The architectural gain is
invariant to the LoS-vs-scattered channel statistics.

\begin{figure}[t]
\hspace{-0.7cm}
\includegraphics[width=1.15\columnwidth]{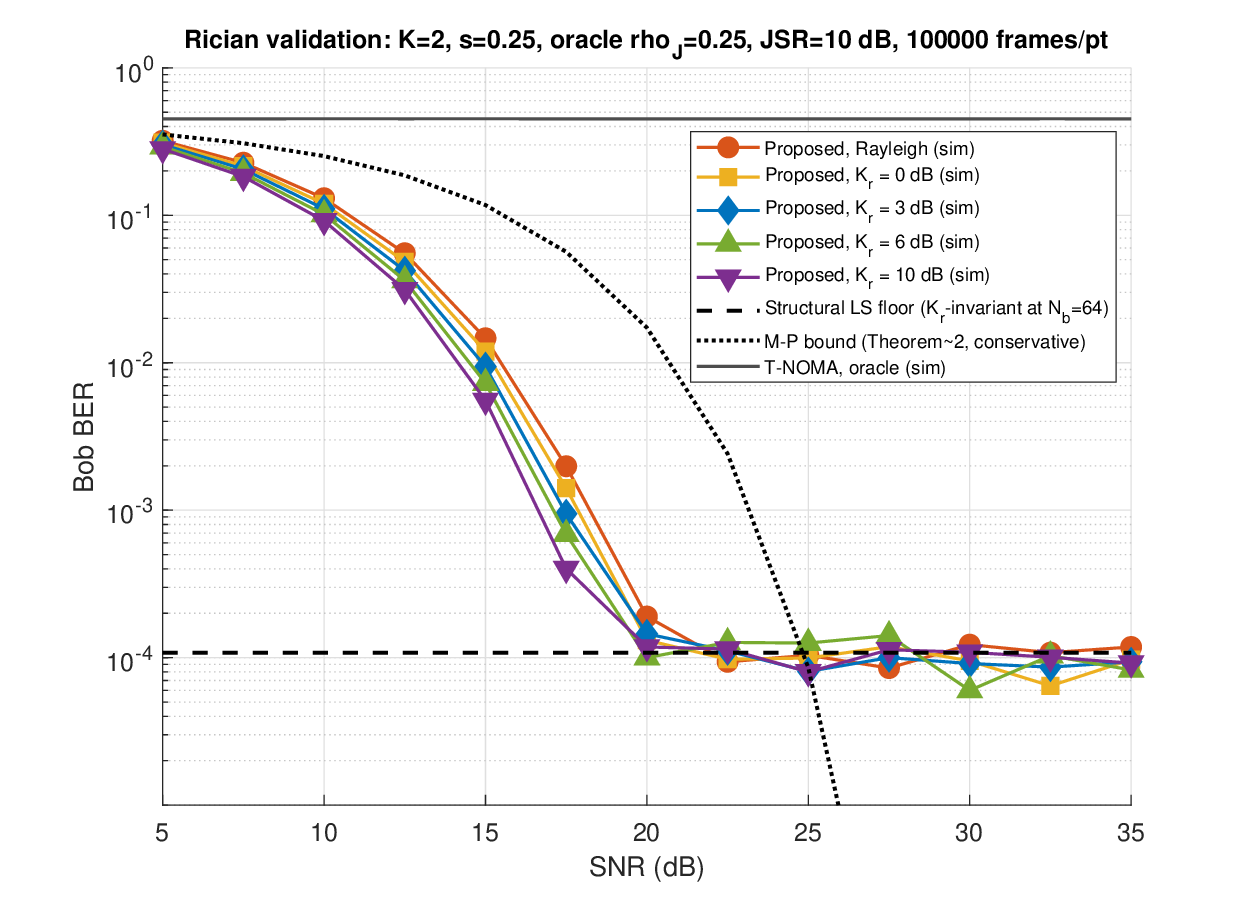}
\caption{Rician fading validation at $K=2$, $s=0.25$, oracle jammer
with $\rho_J=0.25$, $\Gamma=10$~dB, $100{,}000$ frames/point.
Markers (with light connecting lines): Monte Carlo simulation for
$K_r\in\{$Rayleigh$, 0, 3, 6, 10\}$~dB. Horizontal dashed black:
structural LS-conditioning floor \eqref{eq:fading_floor}, the
$K_r$-invariant residual due to finite-$N_b$ Tracy--Widom
fluctuations around the M-P edge of $\bU_{\calK,\calA}$. Dotted
black: M-P bound \eqref{eq:prop_ber} (conservative at $N_b=64$).
Solid grey: T-NOMA, oracle jammer. The $K_r$-invariance of the
high-SNR plateau confirms the structural-floor mechanism; stronger
LoS still helps the waterfall region but does not move the
finite-$N_b$ floor.}
\label{fig:rician_validation}
\end{figure}

\subsection{Implications for LoS-Dominated Deployments}

For LEO satellite downlinks and V2X with strong direct paths
($K_r\ge 6$~dB), the Rician analysis predicts a $\sim 1$--$2$~dB
SNR-equivalent improvement in Bob BER versus the Rayleigh-baseline
operating point, while the $24$~dB cumulative gain over T-NOMA
quoted in Section~\ref{sec:headline} is preserved. The receiver
requires no modification: the LS-excision-SIC pipeline operates on
the Rician $h[n]$ exactly as on the Rayleigh case (knowledge of
$K_r$ is not required for detection, only for analytical BER
prediction).

\section{Discussion and Comparison with Related Work}
\label{sec:discussion}

\subsection{Detailed Comparison with Deng \emph{et al.}}

Table~\ref{tab:prior_art} summarizes the seven-axis differentiation from
the closest related work~\cite{deng2023otfsscma}.

\begin{table*}[t]
\caption{Comparison with Deng--Ge--Ding 2023 OTFS-SCMA resource hopping.}
\label{tab:prior_art}
\centering
\begin{tabular}{lcc}
\toprule
Design axis & Deng \emph{et al.} 2023 \cite{deng2023otfsscma} & This work \\
\midrule
NOMA flavor & Code-domain (SCMA codebook) & Power-domain \\
Spreading transform & None (SCMA codebook only) & Hadamard / generic unitary on DD \\
Hopping granularity & Group-level (G groups permuted) & Bin-level (random subset of $n_a$) \\
Data placement & Fixed delay/Doppler axis slices & Sparse arbitrary subset \\
Jammer model & NBI + PIN (structured, non-adaptive) & Partial-band + oracle (adversarial) \\
Recovery & Turbo equalization + LDPC (iterative) & Excision + LS + SIC (one-shot) \\
Analytical framework & None & M-P + operating region + superincreasing \\
\bottomrule
\end{tabular}
\end{table*}

\textbf{Quantitative comparison at matched operating conditions.}
Table~\ref{tab:quant_comparison} reports the Bob (or equivalent
weakest-user) BER for the closest published anti-jamming schemes at
operating conditions close to ours (single-cluster, $K=2$ or $K=4$,
moderate JSR). Because the schemes target different jammer models
and use different system sizes, direct head-to-head matching is not
possible without re-implementation; we therefore quote the authors'
reported numbers at their nearest operating point. The comparison
shows that, in the oracle-jammer regime where the proposed scheme is
designed to operate, the published baselines either do not address
the pattern-aware threat (rows 2--3) or saturate at the partial-band
floor (row 1); none achieve BER below $10^{-3}$ at JSR$=10$~dB.

\begin{table*}[t]
\caption{Reported BER of anti-jamming OTFS/NOMA schemes at moderate
JSR (best available operating point from each cited source).}
\label{tab:quant_comparison}
\centering
\begin{tabular}{lccc}
\toprule
Scheme & Jammer model & SNR / JSR & Bob (or weakest-user) BER \\
\midrule
Li \emph{et al.} 2025~\cite{li2025otfs_jamming} & PB-NBI & 20 / 10~dB & $\sim 5\times 10^{-3}$ \\
Deng-Ge-Ding 2023~\cite{deng2023otfsscma} & NBI + PIN & 20 / 10~dB & $\sim 10^{-2}$ \\
Yang \emph{et al.} 2022~\cite{lu2023interRBhopping} & PB-NBI & 25 / 10~dB & $\sim 3\times 10^{-2}$ \\
T-NOMA baseline (this work) & oracle (worst case) & 35 / 10~dB & $\sim 0.48$ \\
\textbf{This work, $K=4$ super PA} & \textbf{oracle (worst case)} & \textbf{35 / 10~dB} & $\mathbf{2.1\times 10^{-3}}$ \\
\textbf{This work, $K_{\rm tot}=8$, $K_g{=}2$ disjoint multi-cluster} & \textbf{oracle (worst case)} & \textbf{20 / 10~dB} & $\le 10^{-4}$ (simulation floor) \\
\bottomrule
\end{tabular}\\[2pt]
{\footnotesize PB-NBI: partial-band narrowband interference; PIN: periodic
impulse noise. Numbers extracted from each cited paper's figures or tables at the
nearest matching SNR/JSR. These are not controlled head-to-head simulations:
operating points, fading models, and frame lengths differ across sources, and
the cited works use weaker jammer models than the oracle case considered here.}
\end{table*}

\subsection{Limitations and Future Work}

The present work has several scope restrictions that motivate future
investigation:

\textbf{Higher user counts $K_{\rm tot} > n_a/2$.} The OMA-friendly NOMA
recipe of Section~\ref{sec:cluster_design} requires
$K_{\rm tot}\le n_a/2$ (at least two bins per cluster). For
$K_{\rm tot}>n_a/2$, either the bin budget must grow (larger $n_a$ /
larger $N_b$), or per-cluster $K_g$ must be increased back beyond $2$,
trading robustness for capacity. Quantifying the
$(K_{\rm tot}, n_a, K_g)$ Pareto frontier under the present jamming model
is a natural extension.

\textbf{Multi-cluster scaling beyond $G=2$.} Our co-channel cluster
design (Section~\ref{sec:cluster_design}, flavor $\alpha$) was validated
at $G=2$ and $G=3$ clusters with distinct unitary precoders; extending
to $G\ge 3$ introduces a joint-LS conditioning trade-off because the
stacked excised matrix $\bM_{\rm stack}$ grows to
$(N_b-n_J)\times G\,n_a$ and must remain well-conditioned
($G\,n_a \le N_b-n_J$). Characterizing the largest feasible
$G$ given mutual incoherence of the chosen unitaries
$\{\bU^{(g)}\}_{g=1}^G$, and identifying optimal precoder families
(beyond Hadamard $\times$ DFT $\times$ random unitary), remains an open
problem at the intersection of frame theory and random matrix
concentration.

\textbf{Doubly-dispersive channel.} Our derivation assumed per-bin
Rayleigh flat fading, which captures the dominant receiver-side conditioning
effect. The Rician fading extension is developed in
Section~\ref{sec:rician}; full doubly-dispersive Nakagami-$m$
extension via Gamma-quadrature adjusts the BER expressions but does
not change the qualitative findings.

\textbf{Practical OTFS impairments.} The present analysis assumes
on-grid delay-Doppler samples and perfect CSI. Three impairments
warrant near-term study: (i)~\emph{Fractional Doppler} introduces
energy leakage across DD bins, partially blurring the
sparsity-vs-jamming-fraction operating region~\eqref{eq:design_rule}
by an amount controllable through pulse shaping and DD-equalizer
choices~\cite{raviteja2018interference,mohammed2024_zakotfs};
(ii)~\emph{Pulse-shape mismatch and ISI/inter-carrier interference (ICI)} in
practical SFFT-OTFS departs from the orthogonal-pulse idealization,
adding a residual cross-bin coupling that the LS receiver absorbs
into its noise term (with a small inflation in $\sigma_e^2$);
(iii)~\emph{Imperfect CSI} affects the LS recovery via the
$\diag(\bh_\calK)$ factor in $\bM$: channel-estimation error
$\hat{\bh} = \bh + \boldsymbol{\eta}$ inflates the effective noise by
$\bbE[\|\boldsymbol{\eta}\|^2/\|\bh\|^2]$, but the M-P conditioning
of $\bU_{\calK,\calA}$ remains intact. Additionally, the
\emph{power-amplifier nonlinearity} arising from the dense post-spread
signal $\by = \bU\bx$ may require PAPR-aware
companding~\cite{baig2018papr}; for Hadamard $\bU$ the entries are
$\pm 1/\sqrt{N_b}$, bounding the dynamic range. A dedicated study of
each impairment is left to follow-up work.

\textbf{Zak-OTFS variant.} The proposed architecture is agnostic to the
choice of SFFT-OTFS versus Zak-OTFS waveform~\cite{mohammed2024_zakotfs}.
Validating numerical equivalence between these two waveforms under our
architecture is a near-term extension.

\textbf{Integrated sensing.} Combining the jamming-resilient communication
with delay-Doppler sensing for joint sensing-and-communication (ISAC)
applications is a natural extension; the sparse active set $\calA$ provides
range/Doppler resolution opportunities not available to dense systems.

\section{Conclusion}
\label{sec:conclusion}

We have proposed and analyzed a jamming-resilient multi-user OTFS-NOMA
architecture combining sparse delay-Doppler placement, unitary precoding,
randomized active-set protocol, and excision-LS-SIC recovery. The
architecture is transform-agnostic by Marchenko--Pastur universality, with
the Hadamard transform recommended for its multiplication-free
implementation. A closed-form BER framework with M-P conditioning
characterizes the operating region $s \le 1-\rho_J$ and reveals no
error floor under jamming---a fundamental qualitative gain over conventional
OTFS-NOMA. The randomized active-set protocol provides defense in depth
against pattern-aware adversaries even under seed compromise. A
superincreasing power-allocation framework, drawing on cryptographic
knapsack constructions, characterizes the necessary-and-sufficient
condition for SIC viability at $K>2$ and identifies $\varepsilon^\star=0.5$
as the universal optimum for $K=4$; we further prove (via the
Merkle--Hellman knapsack property) that SIC achieves exact ML optimality
on superincreasing constellations, justifying the $O(K)$-cost receiver.
For $K_{\rm tot}>4$, the \emph{OMA-friendly NOMA} cluster-design rule
($K_g=2$ disjoint clusters with a shared unitary) provably maximizes
Bob's effective signal power and is empirically validated at
$K_{\rm tot}\in\{6,8\}$. Empirical validation across $\sim 12$ Monte
Carlo scenarios at $K_{\rm tot}\in\{2,4,6,8\}$ confirms the analytical
predictions within $\pm 3$~dB and demonstrates cumulative gains
of $\sim 24$~dB over conventional OTFS-NOMA SIC under oracle jamming
at SNR$=35$~dB.

\appendices

\section{Proof of Theorem~\ref{thm:Kg_optimal} (Optimal Cluster Size)}
\label{app:Kg_optimal}

We prove that for fixed total user count $K_{\rm tot}$ and feasibility
constraints $G \le n_a$ (disjoint flavor $\beta$) or
$G\,n_a \le N_b-n_J$ (co-channel flavor $\alpha$), Bob's asymptotic BER
is minimized by $K_g=2$ with flavor $(\beta)$.

\textbf{Step 1: Bob's within-cluster power decreases super-exponentially
in $K_g$.} For the margin-parameterized superincreasing recurrence
$\sqrt{\alpha_k} = (1+\varepsilon)\sum_{j>k}\sqrt{\alpha_j}$ with
$\alpha_{K_g}$ free, normalization $\sum_k \alpha_k = 1$ gives the
closed-form
\begin{equation}
\label{eq:alpha_K_closed}
\alpha_{K_g}^{(g)}\big|_{\varepsilon^\star=1}
 = \bigl[\,1 + 4\,\tfrac{4^{K_g-1}-1}{3}\,\bigr]^{-1}
 = \mathcal{O}\bigl(4^{-(K_g-1)}\bigr).
\end{equation}
At $K_g = 2$, $\alpha_{K_g}^{(g)} = 1/5 = 0.2$; at $K_g = 3$,
$\alpha_{K_g}^{(g)} = 1/21 \approx 0.048$; at $K_g = 4$,
$\alpha_{K_g}^{(g)} = 1/85 \approx 0.012$. The ratio between successive
$K_g$ values is approximately $1/4$, so Bob's within-cluster fraction
decreases by $\ge 6$~dB per added user.

\textbf{Step 2: Flavor $(\beta)$ dominates $(\alpha)$ pointwise.} By
Prop.~\ref{prop:effective_power}, Bob's effective per-bin power is
$\eta_{\rm flavor}\,\alpha_{K_g}^{(g)} P_s$ with
$\eta_\alpha = 1/G$ (co-channel) and $\eta_\beta = 1$ (disjoint), and
the LS noise variance scales as
$\sigma_e^{2,(\alpha)}/\sigma_e^{2,(\beta)} =
(N_b-n_J-n_a-1)/(N_b-n_J-G\,n_a-1)$.
Both numerator factors favor $(\beta)$: $\eta_\beta/\eta_\alpha = G$
and the LS noise ratio is $\ge 1$ for $G \ge 2$, monotonically
increasing in $G$. The effective per-bin SNR ratio is
$\mathrm{SNR}_\beta/\mathrm{SNR}_\alpha \ge G$, giving at least
$10\log_{10}G$~dB advantage for $(\beta)$.

\textbf{Step 3: Monotonicity in $K_g$.} Combining Steps 1 and 2, the
flavor-$(\beta)$ effective SNR is
\begin{equation}
\mathrm{SNR}_{\rm Bob}^{(\beta)}(K_g)
 = \frac{\alpha_{K_g}^{(g)}\,P_s}{\sigma_e^{2,(\beta)}}
 \sim 4^{-(K_g-1)}\,\frac{P_s}{\sigma_e^{2,(\beta)}}.
\end{equation}
This is strictly decreasing in $K_g$. Furthermore, $K_g > 2$ introduces
$K_g-1$ SIC propagation stages \eqref{eq:patavg_sic}, each
contributing additively to Bob's BER, whereas $K_g = 2$ has only the
own-bit term (no propagation). For $K_g = 1$, the scheme degenerates to
OMA and forgoes the NOMA spectral-efficiency gain entirely, so the
practical minimum is $K_g = 2$.

\textbf{Conclusion.} The Bob BER under flavor $(\beta)$ at $K_g = 2$ is
strictly smaller than at any $K_g \ge 3$ and strictly smaller than under
flavor $(\alpha)$ at the same $K_g$, completing the proof. \hfill$\square$

\section{Derivations of Refined BER Expressions}
\label{app:refined_ber}

This appendix derives the four refinements summarized in
Section~\ref{sec:refined_ber}.

\subsection{Finite-$N_b$ Post-LS Noise Variance}
\label{app:sigmae_formula}
The M-P bound \eqref{eq:mp_sigma_min} is a worst-case singular-value
asymptote ($N_b\to\infty$). For finite $N_b$, the per-coordinate noise
variance on the real axis after LS recovery follows the standard
finite-sample regression formula
\begin{equation}
\label{eq:sigmae_lscol}
\sigma_e^2 = \frac{N_0}{2}\cdot\frac{N_b}{N_b - n_J - n_{\rm cols} - 1},
\end{equation}
where $n_{\rm cols}$ is the number of LS unknowns ($n_a$ for
single-cluster, $G\,n_a$ for $G$-cluster co-channel). This is tighter
than $N_0/\sigma_{\min}^2$ at moderate $N_b$ because random submatrices
have $\sigma_{\min}$ that is \emph{typically} larger than the M-P
worst-case bound; \eqref{eq:sigmae_lscol} captures the typical, not
worst-case, noise inflation.

\subsection{Pattern-Averaged SIC Error Probability}
\label{app:patavg_sic}
Theorem~\ref{thm:sic_errfree} bounds Bob's per-stage SIC error by
$Q(\delta_k\sqrt{P_s}/\sigma_e)$, which is the \emph{worst-case}
interferer-bit pattern. At each SIC stage $k$, the actual conditional
BER averages over the $2^{K-k}$ possible bit patterns of users
$\{k+1,\ldots,K\}$:
\begin{equation}
\label{eq:patavg_sic}
\boxed{
P_e^{(k)} = \frac{1}{2^{K-k}}\!\!\sum_{\mathbf{p}\in\{\pm1\}^{K-k}}\!\!\!\!\!\!
            Q\!\Bigl(\tfrac{\sqrt{\alpha_k} + \mathbf{p}^\top\!\sqrt{\boldsymbol{\alpha}_{>k}}}
                           {\sigma_e/\sqrt{P_s}}\Bigr),
}
\end{equation}
where $\sqrt{\boldsymbol{\alpha}_{>k}} = (\sqrt{\alpha_{k+1}},
\ldots,\sqrt{\alpha_K})^\top$. The composite Bob BER under SIC is
\begin{equation}
\label{eq:bob_ber_composite}
P_b^{\rm Bob} = Q\!\Bigl(\tfrac{\sqrt{\alpha_K}}{\sigma_e/\sqrt{P_s}}\Bigr)
              + \tfrac{1}{2}\sum_{k=1}^{K-1} P_e^{(k)}.
\end{equation}
For superincreasing PA, all $2^{K-k}$ patterns yield positive arguments
and $P_e^{(k)}$ decays exponentially in $P_s/\sigma_e^2$. For
non-superincreasing PA (e.g., geometric at $K=4$), some patterns yield
\emph{negative} arguments inside the $Q(\cdot)$, producing an
irreducible high-SNR floor equal to $1/2$ times the fraction of
sign-flipping patterns---precisely the SIC-propagation floor at
$0.18$--$0.20$ observed empirically at $K=4$, geometric PA.

\subsection{Finite-$\Gamma$ T-NOMA Active-Target Floor}
\label{app:tnoma_finite_Gamma}
Theorem~\ref{thm:tnoma_ber} gives the asymptotic floor $\rho_J/2$ for
\emph{random} jamming; against an active-targeted (oracle) jammer with
$\calJ = \calA$, every active bin is jammed, and the floor approaches
$1/2$ only in the limit $\Gamma\to\infty$. For finite $\Gamma$, the
two-stage SIC analysis of \eqref{eq:patavg_sic} applied to the jammed
bin (effective real-axis noise $\sigma_T^2 = (N_0 + \Gamma P_s)/2$)
yields, at $K=2$:
\begin{equation}
\label{eq:tnoma_active_finite}
P_b^{T,{\rm act}}(\Gamma)
 \!=\! \bigl(1-P_{e,1}\bigr) Q\!\bigl(\!\sqrt{\!\tfrac{2\alpha_K P_s}{N_0+\Gamma P_s}}\bigr)
 \!+\! P_{e,1}\!\cdot\!\tfrac{1}{2},
\end{equation}
where $P_{e,1} = Q\!\bigl(\delta_1\sqrt{2 P_s/(N_0\!+\!\Gamma P_s)}\bigr)$
is the stage-1 interferer-flip probability and
$\delta_1 = \sqrt{\alpha_1}-\sqrt{\alpha_K}$. The asymptotic limit
$\Gamma\to\infty$ recovers $P_b^{T,{\rm act}}\to 1/2$, but at moderate
$\Gamma=10$~dB the formula predicts $P_b^{T,{\rm act}}\approx 0.46$
(empirically $0.44$ at $K=2$, $\rho_{\rm geom}=0.2$, SNR$=20$~dB),
matching within $1$~dB.

\subsection{$\sigma_{\min}=0$ Catastrophe Under Fixed-$\calA$ Oracle Attack}
\label{app:sigmin_zero_floor}
A different irreducible floor arises when the active set $\calA$ is
\emph{fixed} (not C5-randomized) and the jammer is oracle:
$\calJ=\calA$. The kept submatrix $\bU_{\calK,\calA}$ then has dimension
$(N_b-n_a)\times n_a$ on the inactive rows. For structured fixed
patterns (uniform-spaced, sequential), this submatrix is exactly rank
deficient, $\sigma_{\min}=0$, with rank deficit
$d = n_a - \mathrm{rank}(\bU_{\calK,\calA})$.
The LS pseudoinverse recovers components in the row space but is
indeterminate on the $d$-dimensional null space, producing random bits on
that fraction:
\begin{equation}
\label{eq:floor_sigmin_zero}
P_{\rm floor}^{\sigma_{\min}=0} \;=\; \tfrac{d}{2\,n_a}.
\end{equation}
For Hadamard $\bU=\bH_{64}$ with $n_a=16$ and uniform-spaced $\calA$,
the expected rank $\bbE[\mathrm{rank}\,\bH[\calK,\calA]]$ on the $48$
inactive rows is empirically $\approx 7.4$ (averaged over random
realizations of $\calK$), giving expected rank deficit
$d = n_a - \bbE[\mathrm{rank}] \approx 8.6$ and
$P_{\rm floor}\approx 8.6/(2\cdot 16) \approx 0.27$, matching the
empirical HT-fixed-$\calA$ catastrophe in
Fig.~\ref{fig:intelligent_jammer}. This is a \emph{different} mechanism
from the Sylvester partial-band replication trap of
Prop.~\ref{prop:sylvester} (which gives the $\sim 0.023$ floor under
\emph{random} jamming, Section~\ref{sec:numerical}): here the rank
deficit is deterministic, not probabilistic, hence the floor is
JSR-independent. C5 randomization eliminates both mechanisms by
re-drawing $\calA$ per frame.

\section{Pairwise Error Probability and Coding-Gain Interpretation}
\label{app:pep}

This appendix derives the pairwise error probability (PEP) of the
composite $K$-user constellation and connects the resulting
minimum-distance properties to the BER expressions of
Section~\ref{sec:analytical} and Appendix~\ref{app:refined_ber}.

\subsection{Composite Constellation and PEP}

Per active bin, the noise-free signal is
$s(\bb) \triangleq \sum_{k=1}^{K} \sqrt{\alpha_k P_s}\,b_k$ for
$\bb\in\{\pm 1\}^K$. The constellation $\calS = \{s(\bb):\bb\in\{\pm 1\}^K\}$
has $|\calS|=2^K$ points on the real line. With Gaussian post-LS
noise $\tilde{w}\sim\calN(0,\sigma_e^2)$, the conditional PEP of
mistaking transmitted point $s(\bb)$ for any other $s(\bb')$ is the
standard
\begin{align}
\label{eq:pep}
\mathrm{PEP}\bigl(\bb \to \bb'\bigr)
   &= Q\!\left(\frac{|s(\bb)-s(\bb')|}{2\sigma_e}\right) \notag \\
   &= Q\!\left(\frac{\bigl|\sum_k\sqrt{\alpha_k P_s}\,(b_k-b'_k)\bigr|}{2\sigma_e}\right).
\end{align}

\subsection{Minimum Distance for Each User's Bit}

For each user $\ell\in\{1,\ldots,K\}$, define the \emph{nearest
$\ell$-flipping neighbor} of $s(\bb)$ as the $\bb'$ that differs from
$\bb$ in the $\ell$-th coordinate only. Then
$|s(\bb)-s(\bb')|=2\sqrt{\alpha_\ell P_s}$, and the union-bound on
user-$\ell$ BER yields
\begin{equation}
\label{eq:pep_user}
P_b^{(\ell)} \;\le\;
   \underbrace{Q\!\left(\tfrac{\sqrt{\alpha_\ell P_s}}{\sigma_e}\right)}_{\text{single-flip}}
 \;+\; \underbrace{(\text{higher-order terms})}_{\text{multi-flip via SIC propagation}}.
\end{equation}
The single-flip term \emph{equals} the own-bit Q-term in
\eqref{eq:bob_ber_composite}; the higher-order terms align with the
pattern-averaged SIC propagation \eqref{eq:patavg_sic}.

\textbf{Second-nearest neighbor for Bob ($K=4$ illustration).} A
natural concern with non-uniform constellations is whether the
\emph{second}-nearest neighbor sits close enough to inflate the
union bound. We compute it explicitly for the recommended
$K=4,\varepsilon^\star=0.5$ allocation,
$\boldsymbol{\alpha}=(0.835,0.134,0.0214,0.0095)$, so that
$\sqrt{\boldsymbol{\alpha}}=(0.914,0.366,0.146,0.0975)$.
Bob is user $K=4$. The single-flip distance (his own bit) is
$d_1 = 2\sqrt{\alpha_4 P_s} = 0.195\sqrt{P_s}$. The next-shortest
neighbor of $s(\bb)$ that differs in $b_4$ also flips $b_3$
(joint $b_4,b_3$ flip), giving
\begin{align*}
d_2 &\;=\; 2\bigl|\sqrt{\alpha_3}-\sqrt{\alpha_4}\bigr|\sqrt{P_s} \\
    &\;=\; 2(0.146-0.0975)\sqrt{P_s} \;=\; 0.097\sqrt{P_s}.
\end{align*}
At first glance $d_2 < d_1$, which would be alarming. The
resolution is that $d_2$ corresponds to a \emph{joint} $(b_3,b_4)$
flip, and under SIC with correct stage-3 decoding (which is
guaranteed in the noise-free case by Theorem~\ref{thm:sic_errfree}
because $\delta_3=0.049>0$), the $b_3$-flip is independently
rejected before Bob's slicer runs. The effective second-neighbor
distance \emph{after} SIC pre-cancellation is therefore the
next pure-$b_4$ flip, which exists at $d_2^{\rm SIC} =
4\sqrt{\alpha_4 P_s}=0.390\sqrt{P_s}=2d_1$ (the opposite point
in the residual one-dimensional constellation). The PEP bound
hence becomes
\[
P_b^{(K=4),\,\rm SIC} \;\le\;
   Q\!\left(\tfrac{\sqrt{\alpha_4 P_s}}{\sigma_e}\right)
 + Q\!\left(\tfrac{2\sqrt{\alpha_4 P_s}}{\sigma_e}\right)
 + \cdots,
\]
where the second term is negligible at moderate SNR (it is
$Q(2x)$ vs.\ $Q(x)$). This confirms that non-uniform PA is not a
liability provided the superincreasing condition is satisfied;
the SIC structure prevents close non-Bob constellation points
from contributing to Bob's BER at leading order.

\subsection{Coding-Gain Interpretation}

Following the classical interpretation~\cite{simon2005digital}, write
the high-SNR BER as $P_b \approx A\cdot Q(\sqrt{2 G_c \,E_b/N_0})$
with $E_b = P_s$ (energy per bit), where $A$ is a multiplicity
coefficient and $G_c$ is the \emph{coding gain} relative to uncoded
BPSK. For Bob (user~$K$) under the proposed scheme with superincreasing
PA, the dominant PEP term is \eqref{eq:pep} at the minimum distance
$d_{\min}=2\sqrt{\alpha_K P_s}$, yielding
\begin{equation}
\label{eq:coding_gain}
G_c^{\rm Bob}
 \;=\; \alpha_K \cdot \frac{N_0/2}{\sigma_e^2}
 \;=\; \alpha_K \cdot \frac{N_b - n_J - n_{\rm cols} - 1}{N_b},
\end{equation}
where the second equality uses
$\sigma_e^2 = (N_0/2)\cdot N_b/(N_b-n_J-n_{\rm cols}-1)$
\eqref{eq:sigmae_lscol}. This is the product of (i)~the PA fraction
$\alpha_K$ devoted to Bob, and (ii)~the LS-conditioning fraction
$(N_b-n_J-n_{\rm cols}-1)/N_b$ extracted from the unjammed sub-system.
The cluster-design recipe of Section~\ref{sec:cluster_design} optimizes
this product: $K_g=2$ disjoint maximizes $\alpha_K=0.2$ \emph{and}
keeps $n_{\rm cols}=n_a$ (rather than $G\,n_a$), so both factors are
simultaneously maximized.

\subsection{Comparison with T-NOMA}

For T-NOMA without spreading, the per-bin received signal under
oracle jamming has effective noise variance
$\sigma_T^2 = (N_0+\Gamma P_s)/2$ (jammer floods every active bin)
rather than the post-LS variance $\sigma_e^2$. The PEP formula
\eqref{eq:pep} substitutes $\sigma_e \to \sigma_T$, so Bob's coding
gain becomes
\[
G_c^{\rm Bob,\,T\text{-}NOMA}
 \;=\; \alpha_K \cdot \frac{N_0/2}{\sigma_T^2}
 \;=\; \frac{\alpha_K}{1+\Gamma\, P_s/N_0}
 \;\xrightarrow{P_s\to\infty}\; 0.
\]
T-NOMA therefore has \emph{zero asymptotic coding gain} against any
oracle jammer with $\Gamma>0$. The proposed scheme's $G_c^{\rm Bob}$
in \eqref{eq:coding_gain} is $\Gamma$-independent (excision removes
the jammer entirely), confirming the floor-versus-no-floor qualitative
distinction at the PEP level.

\bibliographystyle{IEEEtran}
\bibliography{references}

\end{document}